\xdef\asciiart{

     88888888ba                                                                 
     88      "8b                                                                
     88      ,8P                                                                
     88aaaaaa8P'   ,adPPYba,    8b      db      d8    ,adPPYba,   8b,dPPYba,    
     88""""""'    a8"     "8a   `8b    d88b    d8'   a8P_____88   88P'   "Y8    
     88           8b       d8    `8b  d8'`8b  d8'    8PP"""""""   88            
     88           "8a,   ,a8"     `8bd8'  `8bd8'     "8b,   ,aa   88            
     88            `"YbbdP"'        YP      YP        `"Ybbd8"'   88

     ,ad8888ba,                                                            88   
    d8"'    `"8b                                        ,d                 88   
   d8'                                                  88                 88   
   88             8b,dPPYba,  8b       d8  ,adPPYba,  MM88MMM  ,adPPYYba,  88   
   88             88P'   "Y8  `8b     d8'  I8[    ""    88     ""     `Y8  88   
   Y8,            88           `8b   d8'    `"Y8ba,     88     ,adPPPPP88  88   
    Y8a.    .a8P  88            `8b,d8'    aa    ]8I    88,    88,    ,88  88   
     `"Y8888Y"'   88              Y88'     `"YbbdP"'    "Y888  `"8bbdP"Y8  88   
                                  d8'                                           
                                 d8'

}}\makeatletter\newif\iflabor%
	\let\@DUC\DeclareUnicodeCharacter\@DUC{4EE4}{\@DO\@LET}
	\def\@DO#1{\bgroup\def\UTFviii@defined##1{\expandafter#1\string##1+}}
	\def\@LET#1:#2+{\egroup\@DUC{\UTFviii@hexnumber{\decode@UTFviii#2\relax}}}
\def\@DEF#1:#2+{\@LET:#2+{\@nameuse{ME#2AN}}\@namedef{ME#2AN}}
	\DeclareMathAlphabet\mathsi{T1}\sfdefault\mddefault\sldefault
	\def\bigol#1{\bigl#1\iflabor\else\color{UCO}\fi}
	\def\bigor#1{\iflabor\else\color{UIB}\fi\bigr#1}
	\def\({\bigol(}	\def\){\bigor)}		令（{\Bigl(}		令）{\Bigr)}		
	\def\|{\mathrel\Vert}	令‖{\mathrel\Big\Vert}	令｜{\mid\nobreak}		
	\def\[{\@ifstar{\begin{equation*}}{\begin{equation}}}
	\def\]{\@ifstar{\end  {equation*}}{\end  {equation}}}
	\DeclarePairedDelimiter\abs\lvert\rvert
	\DeclareMathOperator\spa{span}	\DeclareMathOperator\cha{char}	
	\def\ce{_{\text{ce}}}			\def\co{_{\text{co}}}
	\def\bma#1{\begin{bmatrix}#1\end{bmatrix}}
	\def\biy#1#2{\Bigl(\hbox{\smaller\!$\genfrac..\z@0{#1}{#2}$\!}\Bigr)}
	\def\biz#1#2{\Bigl(\hbox{\smaller[2]\!$\genfrac..\z@0{#1}{#2}$\!}\Bigr)}
	\def\bi#1#2{\mathchoice
		{\biy{#1}{#2}}{\binom{#1}{#2}}{\binom{#1}{#2}}{\binom{#1}{#2}}}
	\def\SM/{sym\-met\-ric-mul\-ti\-pli\-ca\-tion}
	\def\WM/{wedge-mul\-ti\-pli\-ca\-tion}
	\def\regen/{re\-gen\-er\-at\-ing}
	\def\MBR/{M\kern-.2ex\lower.5ex\hbox{B}\kern-.2exR}
	\def\MSR/{M\kern-.2ex\lower.5ex\hbox{S}\kern-.2exR}
	\def\MDS/{M\kern-.2ex\lower.5ex\hbox{D}\kern-.2exS}
	\def\subpack/{sub-packetization}
			\let\PMS\pgfmathsetmacro
	\let\PMT\pgfmathtruncatemacro	
	\tikzset{every picture/.style={cap=round,join=round}}
	\pgfplotsset{compat/show suggested version=false,compat=1.13} 
	\def\PM#1$#2${\texorpdfstring{$#2$}{#1}}
	\def\PT#1†#2†{\texorpdfstring{#2}{#1}}
	\def\U#1+{\unichar{"#1}}
	\newtheorem{thm}{Theorem}名thm:Theorem??s
\theoremstyle
	\def\@ReadTypeLabel#1:#2?{\xdef\@TYPE{#1}\xdef\@LABEL{#2}}
	\def\eqlabel#1{\@ReadTypeLabel#1?\label[\@TYPE]{\@TYPE:\@LABEL}}
	\def\steplabel{\incr@eqnum\tag\theequation\eqlabel}
	\newcommand\taglabel[2][0]{\@ReadTypeLabel#2?\label[\@TYPE]{\@TYPE:\@LABEL}
		\addtocounter{equation}{#1}\tag{\theequation.\@LABEL}}
	\newcommand\mdslabel[2][]{\hypertarget{mds#1#2}{}\item[(\MDS/#2)]}
	\newcommand\mdsref[2][]{\textup{(\hyperlink{mds#1#2}{\MDS/#2})}}
\title[
		   Multilinear Algebra for Minimum Storage Regenerating Codes   		
]
\author
\subjclass
\thanks
\address{
				           Department of Mathematics,           				
				  University of Illinois at Urbana--Champaign,  				
				             Urbana, Illinois 61801             				
}
\email{%
				        duursma and hpwang2 @illinois.edu        				
}
\begin{document}\message{\asciiart}\makeatother

\begin{abstract}
	An $(n, k, d, \alpha)$-\MSR/ (minimum storage regeneration) code
	is a set of $n$ nodes used to store a file.
	For a file of total size $k\alpha$,
	each node stores $\alpha$ symbols, any $k$ nodes recover the file, and
	any $d$ nodes can repair any other node
	via each sending out $\alpha/(d-k+1)$ symbols.
	
	In this work, we explore various ways to re-express
	the infamous product-matrix construction using skew-symmetric matrices,
	polynomials, symmetric algebras, and exterior algebras.
	We then introduce a multilinear algebra foundation to produce
	$\bigl(n, k, \frac{(k-1)t}{t-1}, \binom{k-1}{t-1}\bigr)$-\MSR/ codes
	for general $t≥2$.
	At the $t=2$ end,
	they include the product-matrix construction as a special case.
	At the $t=k$ end, we recover determinant codes of mode $m=k$;
	further restriction to $n=k+1$ makes it identical to
	the layered code at the \MSR/ point.
	Our codes' \subpack/ level---$\alpha$---is independent of~$n$ and small.
	It is less than $L^{2.8(d-k+1)}$,
	where $L$ is Alrabiah--Guruswami's lower bound on~$\alpha$.
	Furthermore, it is less than other \MSR/ codes' $\alpha$
	for a subset of practical parameters.
	We offer hints on how our code repairs multiple failures at once.
\end{abstract}

彈\baselineskip/8 彈\lineskip/8 彈\parskip/4 彈\floatsep*2 彈\textfloatsep*2
\hbadness99\overfullrule1em\leftmargini4.5em

\maketitle

讀
\section
								  Introduction  								

	Distributed storage systems emerge as a nontraditional coding problem
	where the user gains and loses by multiples of
	a chunk of symbols called \emph{node}.
	The user wants to decode the original message
	by connecting to (only) a fraction of nodes.
	Moreover, nodes are actively checking for failures
	and are restored when a node failure is detected.
	This motivates the following definition:
	
	\begin{dfn}\label{dfn:regenerate}
		\cite{DGWWR10,WD09,RSKR09}
		An \emph{$(n,k,d,α,β,M)$-\regen/ code}
		is a collection of $n$ nodes used to store an $M$-symbol file.
		The storage is configured such that
		(a)	each node stores $α$ symbols;
		(b)	any $k$ nodes contain sufficient information
			to recover the file; and
		(c)	any $d$ nodes can repair any other failing node
			by each sending out $β$ symbols.
	\end{dfn}
	
	In terms of random variables and entropies
	\cite[(4)--(6)]{Duursma14} \cite[Definition~1]{Tian14},
	a file $Φ$ is a (random) vector in $𝔽^M$,
	where $𝔽$ is the working alphabet.
	Each node stores a vector $𝘞_h∈𝔽^α$ depending on $Φ$,
	where $h∈[n]≔\{1,2…n\}$ is the node index.
	That means $H(𝘞_h｜Φ)=0$ for all $h∈[n]$.
	Any $k$ vectors (any $k$ nodes) suffice to recover the file $Φ$, so
	\[*H(Φ｜𝘞_{h_1},𝘞_{h_2}…𝘞_{h_k})=0\]*
	for arbitrary distinct indices $h_1,h_2…h_k∈[n]$.
	The actual procedure that recovers $Φ$ from $𝘞_{h_1},𝘞_{h_2}…𝘞_{h_k}$ is
	called the \emph{downloading scheme} or the \emph{data recovery scenario}.
	
	When, say, the $f$th node fails for some $f∈[n]$,
	a subset $ℋ⊆[n]、\{f\}$ of $d$ nodes will be asked to help.
	A helper node with index $h∈ℋ$ sends a vector
	$𝘚^ℋ_{h→f}∈𝔽^β$ to repair the failing one.
	That means $H(𝘚^ℋ_{h→f}｜𝘞_h)=0$
	for all $f∈[n]$ and all $h∈ℋ⊆[n]、\{f\}$.
	The content of the failing node can be derived from the help messages.
	To rephrase it,
	\[*H(𝘞_f｜𝘚^ℋ_{h_1→f},𝘚^ℋ_{h_2→f}…𝘚^ℋ_{h_d→f})=0\]*
	for arbitrary distinct indices $f,h_1,h_2…h_d∈[n]$ and $ℋ≔\{h_1,h_2…h_d\}$.
	The actual procedure that recovers $𝘞_f$
	from $𝘚^ℋ_{h_1→f},𝘚^ℋ_{h_2→f}…𝘚^ℋ_{h_d→f}$ is called
	the \emph{repairing scheme} or the \emph{node repairing scenario}.
	
	This definition immediately poses a dilemma:
	In order to store files more efficiently,
	node contents should share very little mutual information.
	But then, repairing a node becomes more difficult as it is hard
	to find relations among vectors sharing little mutual information.
	The quantity $β$ is referred to as the \emph{repair bandwidth}
	as it represents the required bandwidth of the network
	(from a helper to the failure).
	Another interpretation is that, when the code is linear,
	$dβ/α$ is the average length of the parity check equations
	used to compute symbols in the $f$th node.
	
	From here researches split into two paths.
	The first path characterizes
	the homogeneous trade-off among $α$, $β$, and~$M$.
	Here, ratios $α/M$ and $β/M$ are used
	to measure the normalized node size and bandwidth, respectively.
	An illustrative trade-off between $α/M$ and $β/M$
	is plotted in \cref{fig:433}.
	It has $(k,d)=(3,3)$ and arbitrary $n≥4$.
	The inner bound and the outer bound meet in this case, i.e.,
	existing codes achieve the theoretically best trade-off.
	In general, however, the two bounds disagree;
	more works are needed to close the gap.
	For the latest results on the achievable side,
	see \cite{
		RSK11o, 
		SRKR12d,SRKR12i, 
		TSAVK15, 
		SSK15, 
		GEC14, 
		EM16d,EM16d,EM19c,
		DL19} 
	and references therein.
	See \cite{Duursma14,
		PK15, 
		Tian15,SPKVSK16, 
		EMT15, 
		MT15, 
		LL16, 
		Duursma19} 
	for the latest results on the unfeasible side.
	Together they summarize existing works on the first path.
	
	\begin{figure}
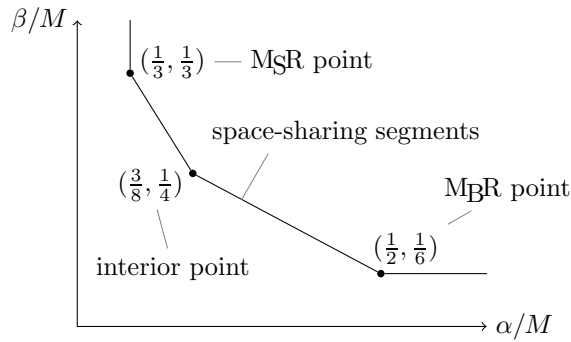

		$$\tikz[x=20cm,y=16cm]{
			\draw[<->]
				(1/3,1/3)	點(1)[195]{$(÷13,÷13)$}
				(3/8,1/4)	點(2)[15]{$(÷38,÷14)$}
				(1/2,1/6)	點(4)[225]{$(÷12,÷16)$}
				($(d1)+(-2em,2em)$)coordinate(Y)node[left]{$β/M$}|-
				($(d4)+(4em,-2em)$)coordinate(X)node[right]{$α/M$};
			\iflabor
			\draw(Y-|d1)--(d1)--(d2)--coordinate[pos=.25](e3)(d4)--(d4-|X);
			\draw[help lines,nodes=black]
				籤(D1)--+(0:1em){\MSR/ point}
				籤(e3)--+(60:2em)[anchor=180+15]{space-sharing segments}
				籤(D4)--+(30:1em){\MBR/ point}
				籤(D2)--+(-75:1.5em){interior point};
			\fi
		}$$
		\caption{
			The $α/M$-to-$β/M$ trade-off when
			$(k,d)=(3,3)$ and $n≥d+1=4$ is arbitrary.
			It is settled in the sense that:
			(a)	every solid point is achieved
				by some existing regenerating codes;
			(b)	every segment is achieved by the \emph{space-sharing} technique;
			and	(c)	every point outside (exclusively)
			the segments is provably not achievable
			owing to some carefully-crafted inequalities of Shannon type.
			(Neither axis starts from~$0$.)
		}\label{fig:433}
	\end{figure}
	
	In a trade-off plot such as \cref{fig:433},
	the lower right solid point is called the \emph{\MBR/}
	(minimum bandwidth regeneration) point since it minimizes $β/M$.
	The upper left solid point is called the \emph{\MSR/}
	(minimum storage regeneration) point because it minimizes $α/M$.
	Both \MBR/ and \MSR/ points are of particular interest
	for their extremity as well as the fact that
	existing codes achieve the cut-set bound for all parameters.
	Between the two, the \MSR/ point attracts notable attention
	as it strengthens the \MDS/ property through
	asking for the optimal repair bandwidth.
	Regenerating codes aiming for the \MSR/ point
	are what constitute the second path.
	On this path, \cref{dfn:regenerate} takes a simpler form.
	
	\begin{dfn}\label{dfn:msr}
		\cite{WD09,RSKR09}
		An $(n,k,d,α,α/(d-k+1),kα)$-regenerating code
		is called an \emph{$(n,k,d,α)$-\MSR/ code}.
		The parameter $α$ is called the \emph{\subpack/ level}.
	\end{dfn}
	
	Remark:
	Historically, an $(n,k,d,α)$-\MSR/ code is first an $[n,k]$-\MDS/ code
	over $𝔽^α$ and then equipped with the repairing property.
	Why $β=α/(d-k+1)$ is the least possible repair bandwidth when $kα=M$
	is not hard to see.
	Here we adapt the argument of information flow from \cite{DGWWR10}.
	
	Say we want the file and download the first $k-1$ nodes $𝘞_1,𝘞_2…𝘞_{k-1}$.
	Instead of downloading one more node, we pretend that
	the $n$th node fails and ask the first $d$ nodes to repair.
	We know the help messages from the first $k-1$ nodes
	$𝘚^{[d]}_{1→n},𝘚^{[d]}_{2→n}…𝘚^{[d]}_{k-1→n}$ because they can be derived
	from the node contents $𝘞_1,𝘞_2…𝘞_{k-1}$, respectively.
	What is new are the other help messages
	$𝘚^{[d]}_{k→n},𝘚^{[d]}_{k+1→n}…𝘚^{[d]}_{d→n}$.
	Together we have $(k-1)α+(d-k+1)β$ symbols.
	From here we can reconstruct the $n$th node.
	Since we now have the full contents of $k$ nodes, we comprehend the file~$Φ$.
	In virtue of the conservation law of information, $(k-1)α+(d-k+1)β≥M$.
	Since $M$ is fixed to be $kα$, we obtain $β≥α/(d-k+1)$.
	This type of argument is what \emph{cut-set bounds} refer to.
	
	Having the cut-set bounds in mind, works aiming at the \MSR/ point either
	stick to $(β,M)≔(α/(d-k+1),kα)$ or, less frequently, require proximity.
	It is then reasonable to ask,
	What is the minimal \subpack/ level $α$ a code can achieve?
	A series of works \cite{GTC14,BK18t,AG19} pursue the answer from below;
	the best known lower bound on $α$ is the following.
	
	\begin{thm}\label{thm:lower}
		\cite[Theorem~1]{AG19}
		For any $(n,k,d,α)$-\MSR/ code,
		\[*α≥\exp（÷{k-1}{4(d-k+1)}）.\]*
		(Remark:
			It is later discovered that their bound is not valid when $k=d$.
			See \cref{app:benchmarks} for more details.)
	\end{thm}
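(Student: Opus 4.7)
My plan is to adapt the linear-algebraic lower-bound framework developed across \cite{GTC14,BK18t,AG19}. The first move is to \emph{linearize}: by a standard entropic reduction, at the cost of only a constant factor in the final exponent, one may assume every $𝘞_h$ and every repair message $𝘚^ℋ_{h→f}$ is an $𝔽$-linear function of the file $Φ∈𝔽^{kα}$. Each node is then determined by an encoding map $C_h:𝔽^{kα}→𝔽^α$, and each helper-to-failure relation by a \emph{repair subspace} $V^ℋ_{h→f}⊆𝔽^α$ of dimension $β=α/(d-k+1)$. In this linearization, the \MDS/ property reads $\bigcap_{h∈T}\ker C_h=0$ for every $k$-subset $T⊆[n]$, and the repair property reads that the $d$ subspaces $\{V^ℋ_{h→f}\}_{h∈ℋ}$ jointly determine $𝘞_f$ for every failed $f$ and every $d$-subset $ℋ$ avoiding $f$.

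The core of the argument is a chain realizing $k-1$ nested failure scenarios simultaneously. Fix a target node $f_0$ together with auxiliary failed nodes $f_1,f_2,…,f_{k-1}$ and a carefully chosen family of helper sets $ℋ_1,…,ℋ_{k-1}$. Build an increasing chain $U_0⊆U_1⊆\cdots⊆U_{k-1}⊆𝔽^α$ of subspaces recording the constraints imposed on $𝘞_{f_0}$ by the repair events up to stage~$i$. Using the \MDS/ property to force linear independence and the rigid dimension $β$ of each $V^{ℋ_i}_{h→f_i}$, one shows that every $O(d-k+1)$ consecutive stages multiply $\dim U_i$ by a definite constant strictly greater than one. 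Since $\dim U_{k-1}≤α$ while $\dim U_0≥1$, iterating yields $α≥\exp\bigl(\tfrac{k-1}{4(d-k+1)}\bigr)$, with the specific constant $\tfrac14$ emerging from the case analysis inside the inductive step.

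The main obstacle is proving this per-group multiplicative growth. Unlike in a single-helper-set repair, the subspaces $V^ℋ_{h→f}$ depend on \emph{both} the failed node and the helper set, so they form a two-parameter family rather than a fixed structure; the \MDS/ and repair axioms constrain them only loosely. Extracting genuine dimension growth along the chain requires juggling several repair scenarios at once and playing the freedom in $ℋ$ off against the freedom in $f$. This delicate bookkeeping is what shrinks a naive $\exp(Θ(k))$-style argument down to the $\exp\bigl(\tfrac{k-1}{4(d-k+1)}\bigr)$ form stated in the theorem. The caveat flagged in the remark---that the bound breaks at $k=d$---is consistent with this plan: when $n=d+1$ there is no slack left in choosing helper sets, the inductive step has no room to maneuver, and the argument collapses from the very first stage.
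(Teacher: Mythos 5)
This statement is a \emph{citation}, not a result proved in this paper: \cref{thm:lower} is attributed to \cite[Theorem~1]{AG19}, and no proof appears here (the remark about the $k=d$ failure is a later observation, also not argued in the text). So there is no ``paper's own proof'' against which to compare your attempt.

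Taken on its own, your proposal is a plausible high-level sketch of the Alrabiah--Guruswami strategy---linearize, form repair subspaces $V^{ℋ}_{h→f}⊆𝔽^{α}$ of dimension $β$, and derive a growth inequality along a chain of nested repair scenarios---but it is not a proof. You flag the crucial step yourself in the third paragraph: ``every $O(d-k+1)$ consecutive stages multiply $\dim U_i$ by a definite constant strictly greater than one.'' That sentence is the entire theorem. Without a concrete argument---how the MDS condition forces sums of repair subspaces across different helper sets to grow, how the two-parameter dependence of $V^{ℋ}_{h→f}$ on both $f$ and $ℋ$ is tamed, and why the constant comes out to $4$ rather than something larger or smaller---you have written a description of what a proof would have to accomplish, not a proof. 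The first step (``by a standard entropic reduction, at the cost of only a constant factor\ldots one may assume linearity'') also needs care: it is not true in general that one can reduce regenerating-code bounds to the linear case losing only a constant in the exponent; Alrabiah--Guruswami prove their bound specifically for linear \MSR/ codes, and the linear restriction is part of the hypothesis, not something derived. Finally, your closing observation about $k=d$ is reasonable intuition, but note the phenomenon is sharper than ``no room to maneuver in choosing helper sets'': the bound is actually \emph{false} at $k=d$ (an $(n,k,k,1)$-\MSR/ code exists, with $α=1<e^{(k-1)/4}$), so any correct proof must use the hypothesis $d>k$ explicitly rather than merely becoming weak there.
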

	
	Other works pursued the minimal $α$ through inventing new codes.
	During this period, some appealing properties are defined and fulfilled.
	One instance is to fix $n=d+1$ and ask for the so-called
	\emph{optimal-access} property, where every helper disk reads and
	transfers $β=α/(d-k+1)$ symbols without forming linear combinations.
	The best result in this paradigm is clay code \cite{VRPKLSKBYNHN18}.
	Another instance is to relax the restriction $n=d+1$ and show that $n-d$
	can be arbitrarily large at the expense of increasing $α$ \cite{RKV16}.
	Yet another branch is to refine Reed--Solomon codes over large fields;
	refinement here means that a help message is not
	a whole symbol in the big field, but a fraction of it.
	See \cite{CYB20} for the latest update
	on which of the aforementioned nice properties about
	the repairment of Reed--Solomon codes are enabled.
	There are also works that focus on
	repairing multiple failing nodes at once.
	This is further branched into two models---one model allows failing nodes
	to help each other while the other prohibits \cite{CJMRS13,SH13,YB19}.
	Lastly, we remark that some works proposed that since the exactness
	in $α=β(d-k+1)$ creates too much burden ($α$ being exponential in $n$ etc.),
	one considers relaxing it ``by $ϵ$'' \cite{GLJ18}.
	In doing so, the \subpack/ level $α$ grows logarithmically in~$n$.
	This save is, colloquially, doubly-exponential in $n$.
	
	In this paper, we fall back to the classical \cref{dfn:msr},
	where nodes fail one at a time, no access property is considered,
	and the overall code is not Reed--Solomon in itself.
	We first review the well-known product-matrix construction.
	The product-matrix code, originated from \cite{RSK11o}, paved the path of
	\MSR/ codes and accumulates a decent amount of interests,
	for both simplicity and a \subpack/ level as low as $α=k-1$.
	Despite of the popularity, we have not encountered any code
	that specializes to product-matrix code.
	
	Later when we were working on \cite{DLW20},
	we found that multilinear algebra is
	the right language to describe certain regenerating codes.
	We attempt and succeed in describing product-matrix
	compactly in terms of multilinear algebra.
	We present this description after a brief algebra review.
	The description further leads to a natural extension of
	product-matrix codes, which is the main contribution of this paper.
	
	\begin{thm}[main theorem]\label{thm:main}
		Let $n$, $k$, $d$, and $t$ be integers
		such that $n-1≥d≥k≥2$ and $d≤t(d-k+1)$.
		Let $α≔\bi{(t-1)(d-k+1)}{t-1}$.
		If $α≤3003$, then there exists an $(n,k,d,α)$-\MSR/ code
		over some sufficiently large field.
	\end{thm}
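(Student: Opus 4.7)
The plan is to construct the code explicitly in the language of multilinear algebra, following the path promised in the abstract: realize each node content $𝘞_h$ as the evaluation of a single ``master tensor'' that encodes the file $Φ$, inside an ambient space whose symmetric-exterior structure is tuned to yield $α=\bi{(t-1)(d-k+1)}{t-1}$. First I would fix a vector space $V$ of dimension $d-k+1$ together with a configuration of vectors $v_1…v_n∈V$ chosen generically over a sufficiently large field. The file $Φ$ would live inside a $k$-fold stack of $\mathrm{Sym}^{t-1}V$ or $Λ^{t-1}V$ components of total dimension $kα$, and the content of node $h$ would be the contraction of the master tensor against the pure tensor $v_h^{t-1}$, occupying an $α$-dimensional slice. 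The sanity checks are the extremes: $t=2$ should recover product-matrix codes and $t=k$ should recover mode-$k$ determinant codes, pinning down the correct symmetric-versus-exterior placement.

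Second, the MDS property asks that the linear map $Φ↦(𝘞_{h_1}…𝘞_{h_k})$ be injective for any $k$ distinct indices. This should reduce to a non-vanishing-minor statement for a structured matrix built from symmetric powers of the $v_{h_i}$; genericity of the configuration over a large enough field then makes it hold. Third---and this is where I expect the main obstacle---is the exact-repair property. When node $f$ fails, each of the $d$ helpers contributes $β=α/(d-k+1)$ symbols obtained by pairing $𝘞_h$ with tensors drawn from a subspace attached to $v_f$; the combined linear map from the $dβ$ help symbols to the $α$-dimensional $𝘞_f$ must have rank exactly $α$. The hypothesis $d≤t(d-k+1)$, equivalently $(t-1)(d-k+1)≥k-1$, is precisely the combinatorial room needed: it forces the interference contributions to cancel in the correct dimension, so that after joint decoding the map is full rank. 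Verifying this should reduce to a structured rank computation on a symmetric-exterior pairing matrix, more intricate than but analogous to the product-matrix case.

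Finally, I expect the restriction $α≤3003$ to be not an algebraic obstruction but an artifact of certifying finitely many non-degeneracy conditions on that pairing matrix. The construction and its MDS/repair arguments should be uniform in $α$, with the cutoff isolating the range in which either a computer-assisted check or an explicit random-configuration argument has verified that a generic $\{v_h\}$ over a large enough field simultaneously satisfies all the constraints. The proof would then culminate in invoking this finite verification, leaving removal of the bound as an open question.
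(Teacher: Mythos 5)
Your high-level strategy --- a structured multilinear ansatz, reduction of the repair property to a non-vanishing determinant, generic vectors plus Alon's combinatorial Nullstellensatz for existence over large fields, and a computer check to discharge the finitely many cases with $\alpha\le 3003$ --- is the same as the paper's, and your reading of $d\le t(d-k+1)\Leftrightarrow(t-1)(d-k+1)\ge k-1$ as the combinatorial room is correct. But the concrete tensor ansatz you commit to is dimensionally wrong, and a reduction step is missing.

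\emph{Dimension mismatch.} You take $V$ of dimension $d-k+1$, put the file in a ``$k$-fold stack'' of $\mathrm{Sym}^{t-1}V$ (or $\Lambda^{t-1}V$), and let node $h$ contract the master tensor against the pure power $v_h^{\,t-1}$. But $\dim\mathrm{Sym}^{t-1}\bigl(\mathbb{F}^{d-k+1}\bigr)=\binom{d-k+t-1}{t-1}$, which does not equal $\binom{(t-1)(d-k+1)}{t-1}$ once $t\ge 3$ and $d>k$ (already at $t=3$, $d-k+1=2$ one gets $3$ versus $6$). The paper distributes the degree the opposite way. After first shortening so that $t=d/(d-k+1)$ is an integer, it sets $X\cong\mathbb{F}^t$, $Y\cong\mathbb{F}^{k-t+1}$, encodes the file as a functional $\phi\colon X\otimes S^tY\to\mathbb{F}$ (a $t$-layer stack of $S^tY$, not a $k$-layer stack), and has node $h$ store the restriction of $\phi$ to the subspace $x_h^\star\otimes y_h^\star\odot S^{t-1}Y$: one fixed copy of $y_h^\star$ and $t-1$ free symmetric factors, the reverse of your $v_h^{\,t-1}$-against-one-free-factor picture. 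This gives $\alpha=\dim S^{t-1}Y=\binom{k-1}{t-1}$, which in the shortened regime equals $\binom{(t-1)(d-k+1)}{t-1}$, and the help message restricts further to $x_h^\star\otimes y_h^\star\odot S^{t-2}Y\odot y_f^\star$ of dimension $\binom{k-2}{t-2}=\alpha/(d-k+1)$.

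\emph{Missing reduction.} The theorem allows any $d\le t(d-k+1)$, but the primitive construction works only when $t(d-k+1)=d$ exactly. The paper bridges the gap by a shortening lemma (pin one node to the zero vector, retire it, drop $n,k,d$ each by one; $d-k+1$ is invariant), iterated until $d/(d-k+1)$ becomes an integer. Without this step your construction would have to handle non-integral $d/(d-k+1)$ directly, and the ansatz does not obviously extend.

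\emph{Minor point.} The MDS conditions on $\{x_h^\star\}$ alone and on $\{y_h^\star\}$ alone are trivially met by Reed--Solomon columns; only the mixed ``repair'' axiom involving $x_h^\star\otimes y_h^\star\odot S^{t-2}Y$ needs the Nullstellensatz together with the $\alpha\le3003$ computation. You correctly flag repair as the hard part, but you also route the downloading (MDS) check through a generic-minor argument, which is unnecessary: in the paper's ansatz downloading is proven unconditionally from the two easy axioms. Your sanity checks ($t=2$ recovers product-matrix, $t=k$ recovers mode-$k$ determinant codes) are the right ones and do hold for the paper's ansatz.
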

	
	We name it \emph{Atrahasis code} after the fictional character
	who survived a seven-day flood in an Akkadian epic recorded on clay tablets.
	
	Two proofs of the main theorem are found in \cref{sec:atrahasis}.
	As will be clarified later, both proofs depend on
	whether a certain determinant is non-vanishing.
	We precomputed all cases under $α≤3003$, and found no counterexample.
	We believe that this determinant is nonzero for all $α$.
	
	\begin{con}\label{con:all}
		\Cref{thm:main} holds for all $α$.
	\end{con}

\subsection{Paradigms comparison}\label{sec:paradigm}

	A comparison is made in \cref{tab:paradigm}.
	From top to bottom:
	product-matrix at the \MBR/ point \cite[section~IV]{RSK11o};
	and then at the \MSR/ point [ibid, section~V];
	clay code family \cite{SAK15,YB17o,VRPKLSKBYNHN18};
	attempts of \cite{GFV17,RKV16} to separate $n$ from $d+1$;
	this work extending the product-matrix approach;
	refinement of Reed--Solomon codes \cite{GW17,TYB19,CYB20};
	$ϵ$-\MSR/ code relaxing the cut-set bound \cite{RTGE17e,GLJ18};
	layered code \cite{TSAVK15};
	determinant code \cite{EM19d};
	and cascade code \cite{EM19c}.
	From left to right, whether the code:
	achieves the \MSR/ point;
	aims for points between \MSR/ and \MBR/;
	achieves the \MBR/ point;
	achieves the cut-set bound;
	allows $d>k$ besides the $d=k$ case;
	allows $n>d+1$ besides the $n=d+1$ case;
	and has the optimal-access property.
	The last two columns list
	the expected \subpack/ level $α$ and the working field size $\abs{𝔽}$.
	``Alon'' means that the only general bound on field size
	comes from the combinatorial Nullstellensatz \cite{Alon99}.
	See \cref{sec:field} for detailed bounds on and instances of $\abs{𝔽}$.
	
	\pgfplotstableread[header=false]{
		code       \MSR/ --- \MBR/ cut $d>k$ $n>d+1$ I/O α≈         $\abs{𝔽}≈$
		prod-mat@B   X    X    O    O    O      O     X  d          $n$       
		prod-mat@S   O    X    X    O    O      O     X  k-1        $n$       
		clay         O    X    X    O    O      X     O  r^{n/r}    $n$       
		GFV17        O    X    X    O    O      O     X  \rknkr/    Alon      
		RKV16        O    X    X    O    O      O     X  r^{n/r}    Alon      
		Atrahasis    O    X    X    O    O      O     X  \bikt/     Alon      
		refined-RS   O    X    X    O    O      O     O  n^n        $n-1$     
		{$ϵ$-\MSR/}  O    X    X   $ϵ$   O      O     X  \log n     $O(n)$    
		layered      O    O    O    O    X      X     O  \tb k{k/2} $2$       
		determinant  O    O    O    O    X      O     X  \tb k{k/2} $n$       
		cascade      O    O    O    X    O      O     X  r^k        $n$       
	}\pgfParadigm
	\begin{table}
		\caption{
			A comparison about what parameters each paradigm is interested in;
			$r=d-k+1$ and $t=⌈d/(d-k+1)⌉$.
			See \cref{sec:paradigm} for details.
			See \crefrange{fig:lowerbound}{fig:EM19grid} starting from
			\cpageref{fig:lowerbound} for details about $α$.
		}\label{tab:paradigm}
		\newbox\BIKT\setbox\BIKT=\hbox{$\biy{k-1}{t-1}$\vphantom{\bigg|}}
		\def\bikt/{\box\BIKT}	\def\rknkr/{r^{k\binom{n-k}r}}	\def\tb{\tbinom}
		\def\arraystretch{1.44}
		\centering\pgfplotstabletypeset[
			every head row/.style=output empty row,
			every row no 0/.style={before row=\toprule,after row=\midrule},
			every odd row/.style={before row=\rowcolor{427}},
			every last row/.style={after row=\bottomrule},
			string replace=O{$\bigcirc$},string replace=X{$×$},
			string type,
			columns/8/.style={assign cell content/.style={
				@cell content/.initial=$\displaystyle##1$}},
		]\pgfParadigm
	\end{table}

\subsection{Shortening fills gaps}

	Throughout existing works, it is common to see that
	the code construction is given for a sparse family of parameters,
	but that does not mean the code only applies to a small range of situations.
	This is because there is a way to tune the parameters of an \MSR/ code.
	More precisely, we have a lemma.
	
	\begin{lem}[shortening an \MSR/-code]\label{lem:shorten}
		Given an $(n,k,d,α)$-\MSR/ code,
		that is, an $(n,k,d,α,α/(d-k+1),kα)$-\regen/ code,
		there exists an $(n-1,k-1,d-1,α)$-\MSR/ code,
		that is, an $(n-1,k-1,d-1,α,α/(d-k+1),(k-1)α)$-\regen/ code,
		over the same alphabet.
	\end{lem}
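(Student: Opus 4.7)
The plan is to adapt the classical shortening operation from \MDS/ code theory to the \MSR/ setting. Let $\mathcal C$ denote the given $(n,k,d,α)$-\MSR/ code, with file $Φ$, node contents $𝘞_1…𝘞_n$ taking values in $𝔽^α$, and help messages $𝘚^ℋ_{h→f}∈𝔽^β$ of length $β=α/(d-k+1)$. I build the shortened code by conditioning on $𝘞_n=0$ and then forgetting the $n$-th coordinate. Concretely, set $\mathcal C'≔\{Φ：𝘞_n(Φ)=0\}$ and let $\mathcal C''$ be the image of $\mathcal C'$ under the projection $(𝘞_1…𝘞_n)↦(𝘞_1…𝘞_{n-1})$. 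The claim is that $\mathcal C''$ is the desired $(n-1,k-1,d-1,α)$-\MSR/ code on nodes $[n-1]$. Since $(d-1)-(k-1)+1=d-k+1$, the old $β$ is still the right repair bandwidth. The file size drops to $(k-1)α$: the \MSR/ hypothesis forces $H(𝘞_n)=α\log\abs{𝔽}$, so $Φ↦𝘞_n(Φ)$ is a balanced surjection onto $𝔽^α$, and its fiber $\mathcal C'$ over $0$ has cardinality $\abs{𝔽}^{(k-1)α}$.

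For data recovery, pick any $k-1$ indices $h_1…h_{k-1}∈[n-1]$. In the original code, the $k$ nodes $h_1…h_{k-1},n$ jointly determine $Φ$. Within $\mathcal C''$ we know a priori that $𝘞_n=0$, so supplying this known zero alongside the $k-1$ downloaded node contents to $\mathcal C$'s downloading scheme recovers $Φ$ from only $k-1$ real nodes.

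For repair, suppose node $f∈[n-1]$ fails and pick any $d-1$ helpers $h_1…h_{d-1}∈[n-1]、\{f\}$. Invoke $\mathcal C$'s repair scheme with helper set $ℋ≔\{h_1…h_{d-1},n\}⊆[n]$. The $d-1$ surviving helpers transmit their prescribed $𝘚^ℋ_{h_i→f}∈𝔽^β$ honestly, while the absent $n$-th helper's message $𝘚^ℋ_{n→f}$ is a fixed function of $𝘞_n=0$ and is synthesized locally at the repair site at no transmission cost. Concatenating all $d$ messages and running $\mathcal C$'s scheme reconstructs $𝘞_f$. Nothing here is deep; the only obstacle is to convince oneself that the two simulated ingredients---a zero node passed into the downloading scheme and a locally generated help message injected into the repair scheme---genuinely fit into the original code's algorithms, after which the parameter bookkeeping is mechanical.
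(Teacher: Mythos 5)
Your proposal is correct and takes essentially the same approach as the paper: constrain $𝘞_n=0$, delete the $n$-th node, and in both the downloading and repairing schemes feed the known zero content (or the help message derived from it) into the original code's algorithms to make up for the missing node. The only cosmetic difference is in justifying the new file size $(k-1)\alpha$ — you invoke $H(𝘞_n)=\alpha\log\lvert𝔽\rvert$ to argue the projection is balanced, while the paper observes that $𝘞_{n-k+1}^n\colon𝔽^M\to(𝔽^\alpha)^k$ is a bijection and counts the slice with fixed last $\alpha$ coordinates; both are valid and amount to the same thing.
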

	
	\begin{proof}
		The key idea is to constrain that
		the $n$th node stores constant contents.
		For instance, let $0∈𝔽$ be a symbol in the working alphabet.
		Then we set $𝘞_n=\bma{0&0&⋯&0}∈𝔽^α$.
		
		For number of nodes ($n$):
		Since we don't need any storage to keep an all-zero vector $𝘞_n$,
		we retire the $n$th node.
		Now there are $n-1$ nodes left.
		
		For the node size ($α$):
		Since the first $n-1$ nodes stores what they used to,
		the node size remains the same;
		the old $α$ is the new $α$.
		
		For the file size ($M$):
		Consider the encoding functions of the last $k$ nodes
		(including the $n$th) as a whole $𝘞_{n-k+1}^n：𝔽^M→(𝔽^α)^k$.
		Since $M=kα$ in the old \MSR/ code, $𝘞_{n-k+1}^n$ is a bijection.
		Since we then fix $𝘞_n$, the file can only take values in the preimage
		\[*\{Φ∈𝔽^M:†last $α$ components of †𝘞_{n-k+1}^n(Φ)=𝘞_n\}⊆𝔽^M.\]*
		So the preimage is of cardinality $\abs{𝔽}^{(k-1)α}$.
		This leads to the new file size $(k-1)α$,
		which is the ``new~$k$'' multiplied by the ``new $α$''.
		
		For downloading scheme ($k$):
		We want that any $k-1$ from the first $n-1$ nodes recover the file.
		This is possible because whenever we download $k-1$ nodes,
		we remember setting the $n$th node all-zero.
		This means that
		we know the content of $k$ nodes in the old \MSR/ code.
		By definition, any $k$ nodes recover the file in the old \MSR/ code.
		So any $k-1$ nodes recover the file in the new \MSR/ code.
		
		For repair bandwidth ($β$):
		The remaining nodes execute the repairing scheme as usual,
		so $β=α/(d-k+1)$ remains the same,
		which is also the ``new $α$'' divided by the ``new $(d-k+1)$''.
		
		For number of helpers ($d$): 
		Since all nodes know $𝘞_n$, any failing node will
		ask for $d-1$ helpers and simulate how the $n$th node could have helped.
		Since this means that the failing node has $d$ help messages
		(one derived from $𝘞_n$), it can repair itself.
		So $d-1$ is the new $d$.
	\end{proof}
	
	This technique is called \emph{shortening}
	as it mimics the shortening of linear block codes.
	It bears the same meaning as in the title of \cite{Duursma19}.
	The technique can be applied iteratively.
	
	\begin{lem}[shortening satuaration]\label{lem:shortclose}
		Let $n,k,d,α,δ$ be positive integers.
		Given an $(n+δ,k+δ,d+δ,α)$-\MSR/ code.
		There exists an $(n,k,d,α)$-\MSR/ code over the same alphabet.
	\end{lem}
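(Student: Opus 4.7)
The plan is a straightforward induction on $δ$, using \cref{lem:shorten} as the inductive step. The base case $δ=0$ is trivial, since the hypothesis itself already supplies an $(n,k,d,α)$-\MSR/ code. For the inductive step, assume the result for $δ-1$; given an $(n+δ, k+δ, d+δ, α)$-\MSR/ code, apply \cref{lem:shorten} once with the parameter triple $(n+δ, k+δ, d+δ)$ to obtain an $(n+δ-1, k+δ-1, d+δ-1, α)$-\MSR/ code over the same alphabet, and then invoke the inductive hypothesis.

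Alternatively, and perhaps more transparently, I would simply iterate \cref{lem:shorten} a total of $δ$ times: at each step the triple $(n,k,d)$ all decrement by~$1$ in unison, while $α$ (and hence the repair bandwidth $β=α/(d-k+1)$, since $d-k$ is invariant under the simultaneous decrement) and the working field remain untouched. After $δ$ applications we land at $(n, k, d, α)$, as required.

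The one thing I would double-check is that the parameter constraints built into \cref{dfn:msr}/\cref{dfn:regenerate} remain satisfied at every intermediate stage, i.e., that each intermediate triple still represents a legitimate \MSR/ parameter set. Concretely, $d-k+1$ is preserved throughout, so $β=α/(d-k+1)$ stays a well-defined integer; and the chain $n+δ \geq n+δ-1 \geq \cdots \geq n$ together with the analogous chains for $k$ and $d$ keeps the inequalities $n\geq d+1\geq k+1\geq 3$ intact (the hypothesis that $n,k,d$ are positive integers is given, and \cref{lem:shorten} implicitly needs $k\geq 2$ so that ``$k-1$'' is a meaningful downloading threshold, which holds here because we only iterate down to~$k$).

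The honest assessment is that there is no real obstacle: the lemma is a formal consequence of \cref{lem:shorten} by iteration, and the proof is essentially one sentence once the inductive skeleton is in place. The only subtlety — and it is a minor bookkeeping one — is verifying that each successive application of \cref{lem:shorten} produces parameters still lying in the valid regime, which reduces to the observation that all three of $n,k,d$ are decremented together.
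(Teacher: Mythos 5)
Your proposal matches the paper exactly: the paper introduces \cref{lem:shortclose} with the single remark that ``the technique can be applied iteratively,'' i.e., apply \cref{lem:shorten} a total of $\delta$ times, which is precisely your argument. The bookkeeping points you raise (invariance of $d-k+1$, all three of $n,k,d$ decrementing together) are the right ones and are implicit in the paper's one-line justification.
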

	
	Note that $d-k+1$ is invariant under successive shortening.
	The main functionality of shortening is to reduce our main theorem
	to a task of composing a sparse family of \MSR/ codes.
	More precisely,
	the following theorem and \cref{lem:shortclose} imply \cref{thm:main}.
	
	\begin{thm}[primitive step]\label{thm:primitive}
		Fix integers $n$, $k$, and $d$ such that $n-1≥d≥k≥2$.
		Assume $t≔d/(d-k+1)$ is an integer and $α=\bi{t(d-k+1)}{t-1}≤3003$.
		Then there exists an $(n,k,d,α)$-\MSR/ code
		over some sufficiently large field.
	\end{thm}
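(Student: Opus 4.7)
The plan is to realise the code concretely inside the multilinear-algebra framework the paper has been developing, and then reduce the existence of a valid $(n,k,d,α)$-\MSR/ code to the non-vanishing of a single explicit determinant over a sufficiently large field~$𝔽$. First I would associate with every node index $h∈[n]$ an evaluation vector $ψ_h$ placed in general position in an ambient $d$-dimensional space (think Vandermonde columns), and package the $kα$-symbol file as one element $F$ of a multilinear space of dimension~$kα$---a natural generalisation of the symmetric matrix used by product-matrix at $t=2$, built from the mix of symmetric and exterior powers dictated by~$t$ that the paper has already been setting up.  The content of the $h$th node is then the image of $F$ under a canonical contraction against $ψ_h$ in $t-1$ slots, producing exactly $α=\bi{t(d-k+1)}{t-1}$ symbols linearly in~$F$.

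With the construction in place, I would check the two \MSR/ axioms.  Data recovery asks that for every choice of $k$ indices $h_1…h_k$ the map $F↦(F(ψ_{h_1})…F(ψ_{h_k}))$ be invertible; repair asks that when node $f$ fails, the $d$ helper messages---each a prescribed linear functional of $F$ involving $ψ_f$ and $ψ_h$ through the canonical ``directional'' contraction the framework provides---together recover the $α$ symbols of node~$f$.  Both conditions amount to invertibility of a multilinear operator built from the $ψ_h$'s, and the symmetry of the construction should let one reduce both tests to the non-vanishing of essentially the same polynomial in the $ψ_h$'s, whose total degree is controlled by~$α$.

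The real obstacle, and the reason the ceiling $α≤3003$ appears in the statement, is showing that this polynomial is not identically zero.  I do not expect to produce a closed-form factorisation or combinatorial pairing that settles non-vanishing in full generality---such an argument would upgrade the primitive step all the way to \cref{con:all}.  Within the scope of this theorem I would settle for a direct computer-aided verification: for every admissible pair $(t,r)=(t,d-k+1)$ with $α=\bi{tr}{t-1}≤3003$ one exhibits a leading monomial in the $ψ_h$'s whose coefficient is visibly nonzero, after which the combinatorial Nullstellensatz of~\cite{Alon99} supplies a choice of the $ψ_h$'s over a sufficiently large~$𝔽$ that makes the determinant nonzero, delivering the \MSR/ property.
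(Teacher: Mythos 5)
Your proposal follows essentially the same strategy as the paper's proof: build a multilinear-algebra code, reduce correctness to the non-vanishing of a polynomial, appeal to Alon's combinatorial Nullstellensatz for field size, and certify the hard non-vanishing computationally for $\alpha\le3003$. Three details, however, deserve correction. First, the paper does not blend symmetric and exterior powers into a single object; it gives two parallel, self-contained proofs, one with the file living in $(X\otimes S^tY)^\vee$ for $X=\mathbb{F}^t$, $Y=\mathbb{F}^{k-t+1}$, and one with $(X\otimes\Lambda^tW)^\vee$ for $W=\mathbb{F}^k$. In either version each node $h$ carries \emph{two} star vectors, $x^\star_h\in X$ and $y^\star_h\in Y$ (respectively $w^\star_h\in W$), and the rank-one tensor $x^\star_h\otimes y^\star_h$ sits in $X\otimes Y$, of dimension $t(k-t+1)$; your ``ambient $d$-dimensional space'' picture is accurate only at $t=2$. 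Second, downloading and repair do \emph{not} reduce to essentially the same polynomial: the paper isolates three MDS-type conditions, of which the first two (independence of any $t$ of the $x^\star$'s and any $k-t+1$ of the $y^\star$'s) are trivially met with Reed--Solomon columns, while only the third --- a $d\beta\times d\beta$ determinant asserting that $d$ subspaces $x^\star_h\otimes y^\star_h\odot S^{t-2}Y$ span $X\otimes S^{t-1}Y$ --- is the genuinely hard polynomial, and it alone forces the ceiling $\alpha\le3003$. Lumping the two correctness tasks together misplaces where the difficulty lies. Third, the computational step in \cref{alg:rand} does not hunt for a leading monomial with visibly nonzero coefficient; it evaluates the determinant at random points over a small prime field and records a single nonzero value, which already certifies non-vanishing over $\mathbb{Z}$, after which Alon's lemma is invoked without tracking explicit degrees. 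These imprecisions would not derail the overall argument, but as written they overstate the symmetry between the two verification tasks and underspecify the actual tensor construction.
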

	
	\Cref{sec:atrahasis} in its entirety serves as
	the proof of \cref{thm:primitive} modulo the field size part.
	\Cref{sec:field} completes the field size part.

\subsection{Organization}

	\Cref{sec:prodmat} reviews the product-matrix code at the \MSR/ point.
	\Cref{sec:algebra} prepares some algebra definitions
		for our paraphrase and generalization of product-matrix.
	\Cref{sec:pmalgebra} paraphrases the product-matrix framework
		in terms of multilinear algebra.
	\Cref{sec:highrate} states an explicit $(9,5,6,6)$-\MSR/ code
		and then moves on to $(n,k,3(k-1)/2,α)$-\MSR/ codes
		as a nontrivial example and a bridge to the general result.
	\Cref{sec:atrahasis} proves \cref{thm:primitive} modulo the field size part.
	\Cref{sec:field} handles the field size part.
	\Cref{app:multiple} analyzes the performance of Atrahasis code
		when two nodes fail at once.
	\Cref{app:benchmarks} compares existing codes numerically.

讀
\section
					    Product-Matrix at \PT MSR†\MSR/†    					

\label{sec:prodmat}

	In this section, we review the classical idea of product-matrix
	at the \MSR/ point \cite[section~V]{RSK11o}.
	The construction consists of two parts:
	a $d=2(k-1)$ \MSR/ code and a stretching to $d>2(k-1)$.
	The precise statement of the former is below.
	
	\begin{thm}[primitive product-matrix]\label{thm:pmorig}
		\cite[section~V]{RSK11o}
		Let $n-1≥d=2(k-1)≥2$.
		There exists an $(n,k,d,k-1)$-\MSR/ code
		over any field $𝔽$ such that $\abs{\{a^{k-1}:a∈𝔽\}}≥n$.
	\end{thm}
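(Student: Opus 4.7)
The plan is to realize the classical product-matrix construction at the \MSR/ point. First, encode the file as a pair of symmetric $α×α$ matrices $S_1,S_2∈𝔽^{α×α}$ with $α≔k-1$; each contains $k(k-1)/2$ independent entries, so together they carry $M=k(k-1)=kα$ symbols. Stack them into the $d×α$ message matrix $\mathbf{M}≔\bma{S_1\\S_2}$, with $d=2α$.

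Next, invoke the field hypothesis to choose $n$ elements $a_1,…,a_n∈𝔽$ whose $(k-1)$st powers $λ_h≔a_h^{k-1}$ are pairwise distinct; distinctness of the $λ_h$ automatically forces distinctness of the $a_h$. Set $φ_h≔(1,a_h,…,a_h^{k-2})$ and $ψ_h≔［φ_h｜λ_hφ_h］=(1,a_h,…,a_h^{d-1})$, a length-$d$ Vandermonde row, and store $W_h≔ψ_h\mathbf{M}=φ_hS_1+λ_hφ_hS_2∈𝔽^α$ at node~$h$, achieving $α=k-1$.

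For repair of a failed node $f$, I would have each helper $h∈ℋ$ transmit the scalar $W_hφ_f^T=φ_hS_1φ_f^T+λ_hφ_hS_2φ_f^T$, matching $β=1=α/(d-k+1)$. Symmetry of the $S_i$ rewrites this scalar as $ψ_h\bma{S_1φ_f^T\\S_2φ_f^T}$, so the $d$ helpers' scalars form the linear system $Ψ_ℋ\bma{S_1φ_f^T\\S_2φ_f^T}$, whose $d×d$ coefficient matrix is a full-length Vandermonde in the distinct $\{a_h\}_{h∈ℋ}$ and hence invertible. Solving yields $S_iφ_f^T$ for $i=1,2$, and then $W_f=(S_1φ_f^T)^T+λ_f(S_2φ_f^T)^T$ reconstructs the failed node.

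For data recovery from $k$ nodes indexed by $𝒦$, let $Φ_𝒦$ and $Λ_𝒦$ denote the $k×α$ matrix of rows $φ_h$ and the $k×k$ diagonal of $λ_h$. Right-multiplying the download $U≔Ψ_𝒦\mathbf{M}$ by $Φ_𝒦^T$ produces $UΦ_𝒦^T=P+Λ_𝒦Q$ with $P≔Φ_𝒦S_1Φ_𝒦^T$ and $Q≔Φ_𝒦S_2Φ_𝒦^T$ symmetric. Subtracting the transpose isolates $(λ_{h_i}-λ_{h_j})Q_{ij}$ off-diagonal, so distinct $λ_h$ yield every $Q_{ij}$ and then every $P_{ij}$ with $i≠j$. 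Since $Φ_𝒦$ is $k×(k-1)$ of full column rank, its left null space is spanned by a single vector $v$, computable from $\{a_h\}_{h∈𝒦}$ by Lagrange interpolation and having all coordinates nonzero; the relations $v^TP=v^TQ=0$ then recover the diagonals $P_{ii},Q_{ii}$ from the known off-diagonals. Deleting any one row of $Φ_𝒦$ leaves a $(k-1)×(k-1)$ Vandermonde, whose inverse extracts $S_1,S_2$ from $P,Q$.

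The clean half is repair, which is essentially one Vandermonde inversion. The hard half will be data recovery, and the subtlest point is not the off-diagonal extraction but the diagonals $P_{ii},Q_{ii}$: these depend on the rank deficiency $\mathrm{rk}\,Φ_𝒦=α<k$ peculiar to the regime $d=2(k-1)$, together with the explicit identification of a Lagrange kernel vector $v$ with nowhere-vanishing entries.
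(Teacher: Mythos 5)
Your proposal is correct, and the repair argument is essentially identical to the paper's (a Vandermonde inversion of $\Psi_{\mathcal{H}}$). The data-recovery half, however, takes the original RSK11o matrix route rather than the one used here, and that is a genuine divergence worth noting. Both proofs start from the same decoupling: comparing the $(i,j)$ and $(j,i)$ cross-evaluations yields $y^\star_{h_i} S_1 (y^\star_{h_j})^\top$ and $y^\star_{h_i} S_2 (y^\star_{h_j})^\top$ for $i\neq j$, i.e.\ the off-diagonal entries of your $P$ and $Q$. After that the paper proceeds row by row: fixing $i$, the $k-1$ evaluation directions $\{y^\star_{h_j}\colon j\neq i\}$ already form a basis of $\mathbb{F}^{k-1}$ by \mdsref{y}, so one reads off the full vector $y^\star_{h_i} S_1$ without ever touching $P_{ii}$; collecting all $i$ gives $\Phi_{\mathcal{K}} S_1$ and an invertible $(k-1)\times(k-1)$ submatrix of $\Phi_{\mathcal{K}}$ extracts $S_1$. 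Your route instead fixes a single $(k-1)\times(k-1)$ window of $P,Q$, which forces you to reconstruct the diagonals, and this is where you import the extra lemma that the one-dimensional left kernel of $\Phi_{\mathcal{K}}$ is spanned by a nowhere-vanishing (Lagrange-weight) vector $v$---a fact that is true (delete the row of any putative zero entry and you would get a nontrivial left null vector of a $(k-1)\times(k-1)$ Vandermonde) but constitutes an additional step the paper does not need. A secondary stylistic difference: the paper separates the combinatorial requirements into the \MDS/ axioms of \cref{axi:pmsym} and only afterwards shows the Reed--Solomon choice $\xi_h=a_h^{k-1}$, $y^\star_h=(1,a_h,\dots,a_h^{k-2})$ satisfies them, whereas you bake the Reed--Solomon instantiation in from the start; the abstract version is what allows the multilinear-algebra reformulation and generalization later in \cref{sec:pmalgebra,sec:atrahasis}, so it is worth internalizing even though your concrete version suffices to prove \cref{thm:pmorig}.
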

	
	This, with \cref{lem:shortclose}, immediately implies the following.
	
	\begin{pro}[stretched product-matrix]
		\cite[section~V.C]{RSK11o}
		Let $n-1≥d≥2(k-1)≥2$.
		There exists an $(n,k,d,d-k+1)$-\MSR/ code
		over any field $𝔽$ such that $\abs{\{a^{k-1}:a∈𝔽\}}≥n+d-2(k-1)$.
	\end{pro}
	
	We brief the proof of \cref{thm:pmorig} in the rest of this section.
	How to generalize product-matrix to $d<2(k-1)$ cases remains open
	since \cite{RSK11o} was published.
	This region is usually referred to as
	the \emph{high-rate region} in literature.
	Our main contribution in \cref{sec:atrahasis}
	answers the question positively.

\subsection{The primitive construction}\label{sec:pmsym}

	Assume $n-1≥d=2(k-1)≥2$.
	Hereby we recite the $(n,k,d,k-1)$-\MSR/ code construction
	of product-matrix.
	To specify this and every other code construction, we go over four steps:
	file format and $M$ (closely related to $n$), node configuration and $α$,
	downloading scheme (closely related to $k$),
	and repairing scheme and $β$ (closely related to $d$).
	Follow the subsubsection titles.

\subsubsection{File format and \PM M$M$}

	Let $𝔽$ be a field of order $n$ or greater.
	Over $𝔽$, let
	\[*𝘚_1,𝘚_2∈𝔽^{(k-1)×(k-1)}\tag{file format}\]*
	be two $(k-1)$-by-$(k-1)$ symmetric matrices.
	We use $(𝘚_1,𝘚_2)$ to pre-encode the file.
	That is to say, since each symmetric matrix has $(k-1)k/2$ free entries,
	they jointly represent a file of size $M≔(k-1)k$ symbols.

\subsubsection{Node configuration and $α$}

	For each $h∈[n]$, the $h$th node selects a scalar $ξ_h∈𝔽$
	and a (row) vector $y🔑h∈𝔽^{k-1}$.
	The node then stores the vector
	\[*y🔑h𝘚_1+ξ_hy🔑h𝘚_2∈𝔽^{k-1}.\tag{node content}\]*
	That means, each node stores $α≔k-1$ symbols.
	For downloading and repairing,
	we put some requirements on the selection of $ξ_h$s and $y🔑h$s.
	
	\begin{axi}\label{axi:pmsym}
		The selection of $ξ_h$s and $y🔑h$s shall meet
		the following three \MDS/ requirements.
		\begin{itemize}
			\mdslabel{x}	All $ξ_h$ are distinct.
			\mdslabel{y}	Any $k-1$ many $y🔑h$s span $𝔽^{k-1}$.
				That is, $\spa⟨y🔑{h_1},y🔑{h_2}…y🔑{h_{k-1}}⟩=𝔽^{k-1}$
				for all distinct indices  $h_1,h_2…h_{k-1}∈[n]$.
			\mdslabel{d}	Any $d$ concatenated vectors
				$\bma{y🔑h&ξ_hy🔑h}$ span $𝔽^d$.
				That is to say,	\\
				$\spa〈\bma{y🔑1&ξ_1y🔑1},
					\bma{y🔑2&ξ_2y🔑2}…\bma{y🔑d&ξ_dy🔑d}〉=𝔽^d$
				for all distinct indices $h_1,h_2…h_d∈[n]$.
		\end{itemize}
	\end{axi}
	
	\Cref{sec:pmselect} breaks down how to find $ξ_h$s and $y🔑h$s
	based on Reed--Solomon codes.

\subsubsection{Downloading scheme}

	We now explain why any $k$ nodes recover
	the file in the format of $(𝘚_1,𝘚_2)$.
	In doing so, observe that $y🔑1𝘚_1(y🔑2)^⊤$ behaves like
	a bi-linear form in $y🔑1$ and $y🔑2$.
	Furthermore, it is symmetric because $𝘚_1$ is%
	---$y🔑1𝘚_1(y🔑2)^⊤=y🔑2𝘚_1(y🔑1)^⊤$.
	
	\begin{pro}
		Let $𝘚_1$, $𝘚_2$ be symmetric but unknown.
		Let $ξ_h$s and $y🔑h$s satisfy \mdsref{x} and \mdsref{y}.
		Then k many $y🔑h𝘚_1+ξ_hy🔑h𝘚_2$ uniquely determine $𝘚_1,𝘚_2$.
	\end{pro}
	
	\begin{proof}
		Due to the symmetry possessed by the node configuration, it suffices
		to check if the first $k$ nodes recover the file $(𝘚_1,𝘚_2)$.
		Fix any distinct indices $i,j∈[k]$.
		We download the vector $y🔑i𝘚_1+ξ_iy🔑i𝘚_2$,
		so we can deduce the scalar $(y🔑i𝘚_1+ξ_iy🔑i𝘚_2)(y🔑j)^⊤$,
		which happens to be $y🔑i𝘚_1(y🔑j)^⊤+ξ_iy🔑i𝘚_2(y🔑j)^⊤$.
		Similarly, we download the vector $y🔑j𝘚_1+ξ_jy🔑j𝘚_2$,
		so we can deduce the scalar $(y🔑j𝘚_1+ξ_jy🔑j𝘚_2)(y🔑i)^⊤$,
		which happens to be $y🔑i𝘚_1(y🔑j)^⊤+ξ_jy🔑i𝘚_2(y🔑j)^⊤$ by symmetry.
		Hence we can now \emph{decouple} the values
		\[*
			\bma{
				y🔑i𝘚_1(y🔑j)^⊤+ξ_iy🔑i𝘚_2(y🔑j)^⊤\\
				y🔑i𝘚_1(y🔑j)^⊤+ξ_jy🔑i𝘚_2(y🔑j)^⊤
			}=\bma{
				1&ξ_i \\ 1&ξ_j
			}\bma{
				y🔑i𝘚_1(y🔑j)^⊤ \\ y🔑i𝘚_2(y🔑j)^⊤
			}.
		\]*
		The square matrix above is invertible
		because \mdsref{x} reads $ξ_i≠ξ_j$ .
		So we can deduce (separate/isolate) the value of $y🔑i𝘚_1(y🔑j)^⊤$.
		This leads to an oracle that outputs the value of $y🔑i𝘚_1(y🔑j)^⊤$
		for any distinct $i,j∈[k]$.
		We now call the oracle for a fixed $i$ and arbitrary $j∈[k]、\{i\}$.
		Owing to \mdsref{y}, $y🔑j$ for $j∈[k]、\{i\}$ span $𝔽^{k-1}$,
		so we can recover $y🔑i𝘚_1$ as a vector.
		Now we vary $i$, and conclude that we can recover $𝘚_1$ as a matrix.
		For $𝘚_2$, repeat the same procedure after
		getting the decoupled value $y🔑i𝘚_2(y🔑j)^⊤$.
		This procedure that recovers both $𝘚_1$ and $𝘚_2$ witnesses
		the claim that a file can be recovered from any $k$ nodes.
	\end{proof}

\subsubsection{Repairing scheme and $β$}

	Let $f∈[n]$ be the index of a failing node.
	Let $ℋ⊆[n]、\{f\}$ be the $d$ helper nodes
	that are going to transmit help messages.
	For every $h∈ℋ$, the $h$th node will transmit
	\[*(y🔑h𝘚_1+ξ_hy🔑h𝘚_2)(y🔑f)^⊤∈𝔽\tag{help message}\]*
	to the $f$th node.
	The left parentheses enclose the content of the $h$th node.
	This message is a $1$-by-$1$ scalar so $β≔1$.
	Now we verify that the failing node can repair its content
	after receiving $d$ many help messages.
	
	\begin{pro}
		Let $𝘚_1$, $𝘚_2$ be symmetric but unknown.
		Let $ξ_h$s and $y🔑h$s satisfy \mdsref{d}.
		Then $d$ many $(y🔑h𝘚_1+ξ_hy🔑h𝘚_2)(y🔑f)^⊤$
		uniquely determine $y🔑f𝘚_1+ξ_fy🔑f𝘚_2$.
	\end{pro}
	
	\begin{proof}
		Without loss of generality,
		assume that the first $d$ nodes are helping and that $f>d$.
		Then what the failing node receives can be rewritten as
		\[*(y🔑h𝘚_1+ξ_hy🔑h𝘚_2)(y🔑f)^⊤=
			\bma{y🔑h&ξ_hy🔑h}\bma{𝘚_1(y🔑f)^⊤ \\ 𝘚_2(y🔑f)^⊤}\]*
		for $h∈[d]$.
		The right-hand side is the product of
		a $1$-by-$d$ vector with a $d$-by-$1$ vector (recall $d=2(k-1)$).
		\mdsref{d} reads that $\bma{y🔑1&ξ_hy🔑1}…\bma{y🔑d&ξ_hy🔑d}$
		span $𝔽^d$--- i.e., they form an invertible matrix.
		Hence the failing node can reproduce $𝘚_1(y🔑f)^⊤$ and $𝘚_2(y🔑f)^⊤$.
		Now it remains to compute the linear combination
		$𝘚_1(y🔑f)^⊤+𝘚_2(y🔑f)^⊤ξ_f=(y🔑f𝘚_1+ξ_fy🔑f𝘚_2)^⊤$
		in order to restore the content $y🔑f𝘚_1+ξ_fy🔑f𝘚_2$.
	\end{proof}
	
	This concludes the $(n,k,2(k-1),k-1)$-\MSR/ code specification
	needed to prove \cref{thm:pmorig}, modulo field size.
	Before addressing field size in \cref{sec:pmselect},
	we offer an alternative construction for the same region of parameters.

\subsection{The skew construction}\label{sec:pmext}

	One straightforward variant of the previous subsection is that
	the symmetric matrices $𝘚_1,𝘚_2$ of dimensions $(k-1)×(k-1)$
	can be replaced by skew-symmetric matrices $𝘈_1,𝘈_2$ of dimensions $k×k$.
	Firstly we observe that this does not change the file size;
	it is still $M=k(k-1)$.
	Next we lengthen ``$y🔑h$'' such that
	``$y🔑h𝘈_1$'' and other products make sense.
	Although it now seems like the node size should be $k$, we claim that
	we can still form an \MSR/ code with the exact same parameters as before.
	That is, an $(n,k,d,k-1)$-\MSR/ code for all $n-1≥d=2(k-1)≥2$.
	We elaborate the specification in the rest of this subsection.

\subsubsection{File format and \PM M$M$}

	Let $𝔽$ be a field of order $n$ or greater.
	Over $𝔽$, let
	\[*𝘈_1,𝘈_2∈𝔽^{k×k}\tag{file format}\]*
	be two $k$-by-$k$ skew-symmetric matrices when $\cha𝔽≠2$.
	When $\cha𝔽=2$, let they have zeros on the diagonal.
	They are matrices such that $w𝘈_1w^⊤=w𝘈_2w^⊤=0$ for all $w∈𝔽^k$.
	We use $(𝘈_1,𝘈_2)$ to pre-encode the file.
	Since each matrix has $k(k-1)/2$ free entries,
	they jointly represent a file of size $M=k(k-1)$ symbols.

\subsubsection{Node configuration and $α$}

	For each $h∈[n]$, let the $h$th node select
	a scalar $ξ_h∈𝔽$ and a nonzero (row) vector $w🔑h∈𝔽^k$.
	Then it stores
	\[*w🔑h𝘈_1+ξ_hw🔑h𝘈_2∈𝔽^k.\tag{node content}\]*
	It looks like the node needs to store $k$ symbols
	but $k-1$ symbols suffice.
	This is because $(w🔑h𝘈_1+ξ_hw🔑h𝘈_2)(w🔑h)^⊤=0$---%
	the vector to be stored lies in a codimension-$1$ subspace.
	We now have $α≔k-1$.
	For downloading and repairing to work, we assign some requirements
	on the selection of $ξ_h$s and $w🔑h$s (cf. \cref{axi:pmsym}).
	
	\begin{axi}\label{axi:pmext}
		The selection of $ξ_h$s and $w🔑h$s shall comply with
		the following three \MDS/ requirements:
		\begin{itemize}
			\mdslabel[']{x}	All $ξ_h$ are distinct.
				(Same as in \cref{axi:pmsym}.)
			\mdslabel{w}	Any $k$ many $w🔑h$s span $𝔽^k$.
				That is, $\spa⟨w🔑{h_1}…w🔑{h_k}⟩=𝔽^k$
				for all distinct $h_1…h_k∈[n]$.
				(Dimension changed accordingly.)
			\mdslabel{q}	Any $d$ concatenated vectors $\bma{w🔑h&ξ_hw🔑h}$
				span $𝔽^{2k}/〈\bma{w🔑f&0},\bma{0&w🔑f}〉$.
				That is, $\spa〈\bma{w🔑1&ξ_1w🔑1}…\bma{w🔑d&ξ_dw🔑d},
				\bma{w🔑f&0},\bma{0&w🔑f}〉=𝔽^{2k}$
				for all distinct $f,h_1…h_d∈[n]$.
				(Dimension changed accordingly.)
		\end{itemize}
	\end{axi}
	
	\Cref{sec:pmselect} deals with how to find $ξ_h$s and $w🔑h$s.

\subsubsection{Downloading scheme}

	Notice that $w🔑1𝘈_1(w🔑2)^⊤=-w🔑2𝘈_1(w🔑1)^⊤$
	and, in particular, $w🔑1𝘈_1(w🔑1)^⊤=0$.
	We are to verify that any $k$ node contents recover the file $(𝘈_1,𝘈_2)$.
	
	\begin{pro}
		Let $𝘈_1$, $𝘈_2$ be skew-symmetric matrices
		with zero diagonal and with unknown elements off the diagonal.
		Let $ξ_h$s and $w🔑h$s satisfy \mdsref[']{x} and \mdsref{y}.
		Then k many $w🔑h𝘈_1+ξ_hw🔑h𝘈_2$ uniquely determine $𝘈_1,𝘈_2$.
	\end{pro}
	
	\begin{proof}
		On account of the symmetry, it suffices to demonstrate
		how to recover the file from the first $k$ nodes.
		Fix any distinct indices $i,j∈[n]$.
		We deduce the scalar
		$(w🔑i𝘈_1+ξ_iw🔑i𝘈_2)(w🔑j)^⊤=w🔑i𝘈_1(w🔑j)^⊤+ξ_iw🔑i𝘈_2(w🔑j)^⊤$
		from what we download from the $i$th node.
		We deduce the scalar
		$(w🔑j𝘈_1+ξ_jw🔑j𝘈_2)(w🔑i)^⊤=-w🔑i𝘈_1(w🔑j)^⊤-ξ_jw🔑i𝘈_2(w🔑j)^⊤$
		from what is downloaded form the $j$th node.
		Now decouple.
		\[*
			\bma{
				w🔑i𝘈_1(w🔑j)^⊤+ξ_iw🔑i𝘈_2(w🔑j)^⊤\\
				-w🔑i𝘈_1(w🔑j)^⊤-ξ_jw🔑i𝘈_2(w🔑j)^⊤
			}=\bma{
				1&ξ_i \\ -1&-ξ_j
			}\bma{
				w🔑i𝘈_1(w🔑j)^⊤ \\ w🔑i𝘈_2(w🔑j)^⊤
			}
		\]*
		By \mdsref[']{x}, the square matrix is invertible.
		Hence we can isolate the value of $w🔑i𝘈_1(w🔑j)^⊤$.
		We now have $w🔑i𝘈_1(w🔑j)^⊤$ for
		a fixed $i$ and various $j∈[k]、\{i\}$.
		Besides, we know $w🔑i𝘈_1(w🔑i)^⊤$ (which is $0$).
		On grounds of \mdsref{w},
		we have collected the products of $w🔑i𝘈_1$ with a basis of $𝔽^k$,
		which leads to the recovery of $w🔑i𝘈_1$ as a vector.
		Then we vary $i$ to rebuild $𝘈_1$ as a matrix.
		For $𝘈_2$, repeat the same procedure with $w🔑i𝘈_2(w🔑j)^⊤$.
	\end{proof}

\subsubsection{Repairing scheme and $β$}

	Let $f$ be the index of a failing node.
	Let $ℋ⊆[n]、\{f\}$ be the $d$ helper nodes that will transmit help messages.
	For every $h∈ℋ$, the $h$th node transmits
	\[*(w🔑h𝘚_1+ξ_hw🔑h𝘚_2)(w🔑f)^⊤∈𝔽\tag{help message}\]*
	to the $f$th node.
	This is a $1$-by-$1$ scalar so $β≔1$.
	Next, we justify that the failing node can repair
	after gathering $d$ help messages.
	
	\begin{pro}
		Let $𝘈_1$, $𝘈_2$ be skew-symmetric matrices
		with zero diagonal and with unknown entries off the diagonal.
		Let $ξ_h$s and $w🔑h$s satisfy \mdsref{q}.
		Then $d$ many $(w🔑h𝘈_1+ξ_hw🔑h𝘈_2)\*(w🔑f)^⊤$ (granted that $h≠f$)
		uniquely determine $w🔑f𝘈_1+ξ_fw🔑f𝘈_2$.
	\end{pro}
	
	\begin{proof}
		By virtue of the symmetry, we assume $ℋ=[d]$ and $f>d$.
		Then the failing node rewrites what it receives:
		\[*(w🔑h𝘈_1+ξ_hw🔑h𝘈_2)(w🔑f)^⊤=
			\bma{w🔑h&ξ_hw🔑h}\bma{𝘈_1(w🔑f)^⊤ \\ 𝘈_2(w🔑f)^⊤}\]*
		for all $h∈[d]$.
		Other than that, the $f$th node knows
		\[*0=\bma{0&w🔑f}\bma{𝘈_1(w🔑f)^⊤ \\ 𝘈_2(w🔑f)^⊤}
			=\bma{w🔑f&0}\bma{𝘈_1(w🔑f)^⊤ \\ 𝘈_2(w🔑f)^⊤}\]*
		as part of the code construction.
		So it knows the product of
		\[*\bma{𝘈_1(w🔑f)^⊤ \\ 𝘈_2(w🔑f)^⊤}∈𝔽^{2k×1}\]*
		with vectors $\bma{w🔑1&ξ_hw🔑1}…\bma{w🔑d&ξ_hw🔑d}$,
		$\bma{w🔑f&0}$, and $\bma{w🔑f&0}$.
		Those vectors span $𝔽^{2k}$ by \mdsref{q}, so the failing node
		can infer $𝘈_1(w🔑f)^⊤$ (the transpose of $w🔑f𝘈_1$)
		and $𝘈_2(w🔑f)^⊤$ (the transpose of $w🔑f𝘈_2$).
		Thus it infers the original node content $w🔑f𝘈_1+ξ_fw🔑f𝘈_2$.
	\end{proof}
	
	This concludes the alternative $(n,k,2(k-1),k-1)$-\MSR/ code construction.
	Next we address how to select $ξ_h$s, $y🔑h$s, and $w🔑h$s.

\subsection{Selecting \PM\U03BE+$ξ$, \PM y$y$, and \PM w$w$}
	\label{sec:pmselect}

	\cite{RSK11o} suggested using Reed--Solomon codes.
	Here are the details.
	
	\begin{lem}
		Let $a_1,a_2…a_n∈𝔽$ be such that
		$a_1^{k-1},a_2^{k-1}…a_n^{k-1}$ are all distinct.
		For each $h∈[n]$,
		let $ξ_h≔a_h^{k-1}∈𝔽$ and $y🔑h≔\bma{1&a_h&⋯&a_h^{k-2}}∈𝔽^{k-1}$.
		Then \cref{axi:pmsym} is satisfied.
	\end{lem}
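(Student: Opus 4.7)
The plan is to reduce each of the three requirements in \cref{axi:pmsym} to a Vandermonde determinant. A preliminary observation I would record up front is that the hypothesis ``$a_1^{k-1},\ldots,a_n^{k-1}$ are distinct'' forces the base scalars $a_1,\ldots,a_n$ themselves to be distinct, since $a_h=a_{h'}$ would immediately yield equal $(k-1)$-th powers. With this in hand, \mdsref{x} is immediate, because the $\xi_h$'s are defined as $a_h^{k-1}$ and their distinctness is exactly the hypothesis.

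For \mdsref{y}, I would observe that $y^\star_h=(1,a_h,a_h^2,\ldots,a_h^{k-2})$ is a Vandermonde row of length $k-1$. Stacking any $k-1$ such rows yields a square Vandermonde matrix whose determinant is the product of differences $a_{h_j}-a_{h_i}$ over indices $i<j$, which is nonzero by the distinctness recorded above. Hence any $k-1$ of the $y^\star_h$'s span $\mathbb{F}^{k-1}$.

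For \mdsref{d}, the decisive calculation is that the concatenation $[\,y^\star_h\ \xi_h y^\star_h\,]$ equals $(1,a_h,a_h^2,\ldots,a_h^{k-2},a_h^{k-1},a_h^{k},\ldots,a_h^{2k-3})$, i.e.\ a Vandermonde row of length exactly $2(k-1)=d$. Stacking any $d$ such concatenated rows therefore produces a $d\times d$ Vandermonde matrix, again invertible by the distinctness of the $a_h$'s, so these vectors span $\mathbb{F}^d$.

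There is no real obstacle; the entire lemma reduces to recognizing the two Vandermonde patterns hidden in $y^\star_h$ and in $[\,y^\star_h\ \xi_h y^\star_h\,]$, and to the fact that both require only the distinctness of the base scalars $a_h$. The one prose-level subtlety I would spell out explicitly is the preliminary bootstrap from ``$(k-1)$-th powers of the $a_h$'s are distinct'' to ``the $a_h$'s themselves are distinct,'' which is nothing more than the contrapositive of a trivial implication and holds in any characteristic.
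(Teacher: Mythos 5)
Your proof is correct and follows essentially the same route as the paper's: the paper simply packages your Vandermonde-matrix arguments for \mdsref{y} and \mdsref{d} as the statement that the $y^\star_h$'s and the concatenations $[y^\star_h\ \xi_h y^\star_h]=[1\ a_h\ \cdots\ a_h^{d-1}]$ are columns of a Reed--Solomon generator matrix (hence \MDS/), which is the same Vandermonde-determinant fact stated one level up in abstraction. Your explicit preliminary remark that distinctness of the $(k-1)$-th powers forces distinctness of the $a_h$'s is a worthwhile hygiene step that the paper leaves implicit.
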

	
	\begin{proof}
		(Remark:
			Since $\abs{\{a^{k-1}:a∈𝔽\}}≥n$,
			the existence of $a_1,a_2…a_n$ is a non-problem.)
		First, \mdsref{x} is satisfied because
		the $(k-1)$th powers of the points are all distinct.
		Next, \mdsref{y} is satisfied because $y🔑h$s
		are (transposes of) distinct column vectors of a Reed--Solomon code;
		and Reed--Solomon codes are \MDS/ codes.
		Lastly, \mdsref{d} is satisfied because
		$\bma{y🔑h&ξ_hy🔑h}=\bma{1&a_h&⋯&a_h^{d-1}}$
		is again a column of a Reed--Solomon code.
	\end{proof}
	
	This concludes the field size part of \cref{thm:pmorig}.
	A similar idea is used to fulfill \cref{axi:pmext}.
	
	\begin{lem}
		Let $a_1,a_2…a_n∈𝔽$ be such that
		$a_1^{k-1},a_2^{k-1}…a_n^{k-1}$ are all distinct.
		For each $h∈[n]$,
		let $ξ_h≔a_h^{k-1}∈𝔽$ and $w🔑h≔\bma{1&a_h&⋯&a_h^{k-1}}∈𝔽^k$.
		Then \cref{axi:pmext} holds.
	\end{lem}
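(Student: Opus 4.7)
The plan is to verify the three conditions \mdsref[']{x}, \mdsref{w}, \mdsref{q} of \cref{axi:pmext} in turn. Condition \mdsref[']{x} is free: $\xi_h = a_h^{k-1}$ are distinct by hypothesis. For \mdsref{w}, I would first note that distinct $(k-1)$th powers force the $a_h$ themselves to be distinct, so any $k$ vectors $w^\star_{h_i} = (1, a_{h_i}, \ldots, a_{h_i}^{k-1})$ form the rows of a $k \times k$ Vandermonde matrix, which is automatically invertible.

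The real work is \mdsref{q}. Fixing distinct $f, h_1, \ldots, h_d \in [n]$ and recalling $d + 2 = 2k$, I would reduce the spanning condition to showing linear independence of the $2k$ row vectors
\[ v_j := \bma{w^\star_{h_j} & \xi_{h_j} w^\star_{h_j}}, \quad u := \bma{w^\star_f & 0}, \quad u' := \bma{0 & w^\star_f} \]
in $\mathbb{F}^{2k}$. The structural observation I would exploit is that, indexing coordinates by $0, 1, \ldots, 2k-1$, the vector $v_j$ reads
\[ (1, a_{h_j}, \ldots, a_{h_j}^{k-1},\; a_{h_j}^{k-1}, a_{h_j}^k, \ldots, a_{h_j}^{2k-2}); \]
that is, the power $a_{h_j}^{k-1}$ appears twice, in adjacent slots $k-1$ and $k$.

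Assume a dependence $\sum_j c_j v_j + c_f u + c'_f u' = 0$. Comparing coordinates $k-1$ and $k$---where $u$ contributes $a_f^{k-1}$ then $0$, and $u'$ contributes $0$ then $1$---forces $c'_f = c_f a_f^{k-1}$. Substituting this relation back, all $2k$ coordinate equations collapse into a single Vandermonde system
\[ \sum_{j=1}^{d} c_j a_{h_j}^p + c_f a_f^p = 0, \qquad p = 0, 1, \ldots, 2k-2, \]
in the $d+1 = 2k-1$ evaluation points $a_{h_1}, \ldots, a_{h_d}, a_f$. These points are distinct because their $(k-1)$th powers $\xi_{h_1}, \ldots, \xi_{h_d}, \xi_f$ are distinct; hence the $(2k-1) \times (2k-1)$ Vandermonde matrix is invertible, every $c_j$ and $c_f$ must vanish, and then $c'_f = c_f \xi_f = 0$ as well.

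The main obstacle is really just the careful bookkeeping at the two duplicated coordinates $k-1$ and $k$, which eliminates $c'_f$ and exposes the collapse to a single Vandermonde system. Beyond that, no new algebraic machinery or field-size condition is required past the hypothesis on $(k-1)$th powers already used in the previous lemma.
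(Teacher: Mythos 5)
Your proof is correct, and it reaches the goal by a somewhat different route than the paper. The paper verifies \mdsref{q} by writing the $2k\times2k$ matrix whose columns are the $v_j$, $u$, $u'$ and performing explicit row operations: subtracting $a_n$ times one row from the next clears the two columns coming from $u$ and $u'$, after which cofactor expansion leaves a $d\times d$ Vandermonde (with a harmless column rescaling by $a_j-a_n$); the paper then remarks this is just shortening a Reed--Solomon code. You instead argue by assuming a linear dependence and exploiting the structural coincidence that $a_{h_j}^{k-1}$ occupies two adjacent slots of $v_j$: subtracting those two coordinate equations kills the $v_j$ terms and pins $c'_f=c_f\xi_f$, after which the remaining $2k-1$ distinct coordinate equations form a single $(d+1)\times(d+1)$ Vandermonde system in $a_{h_1},\dotsc,a_{h_d},a_f$. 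Both arguments ultimately hinge on the same Vandermonde nonvanishing (the student's $(d+1)\times(d+1)$ determinant equals the paper's $d\times d$ one times $\prod_j(a_n-a_j)$, up to sign), and both quietly use that the $a_h$ themselves are distinct because their $(k-1)$th powers are. Your version avoids the bookkeeping of explicit row reduction and makes the duplicated coordinate do the work; it is a clean, equally valid presentation of the same Reed--Solomon idea.
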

	
	\begin{proof}
		\mdsref[']{x} and \mdsref{w} hold for the same reason
		\mdsref{x} and \mdsref{y} in the previous lemma do.
		For \mdsref{q}, it suffices to check that this $2k$-by-$2k$ matrix
		\[*\bma{
			1			&1			&⋯	&1			&1			&			\\
			a_1			&a_2		&⋯	&a_d		&©a_n		&			\\
			⋮			&⋮			&	&⋮			&©⋮			&			\\
			a_1^{k-1}	&a_2^{k-1}	&⋯	&a_d^{k-1}	&©a_n^{k-1}	&			\\
			\vphantom{\vrule height1.44em}
			a_1^{k-1}	&a_2^{k-1}	&⋯	&a_d^{k-1}	&			&1			\\
			a_1^k		&a_2^k		&⋯	&a_d^k		&			&©a_n		\\
			⋮			&⋮			&	&⋮			&			&©⋮			\\
			a_1^d		&a_2^d		&⋯	&a_d^d		&			&©a_n^{k-1}	\\
		}\]*
		is invertible.
		To do so, we attempt to eliminate shaded entries using row operations.
		For each $i=k,k-1…2$ (notice the order),
		subtract $a_n$ times the $(i-1)$th row from the $i$th row.
		We arrive at:
		\[*\bma{
			1			&1			&⋯	&1			&1			&			\\
			®a_1®		&®a_2®		&⋯	&®a_d®		&			&			\\
			⋮			&⋮			&	&⋮			&			&			\\
			®a_1®^{k-1}	&®a_2®^{k-1}&⋯	&®a_d®^{k-1}&			&			\\
			\vphantom{\vrule height1.44em}
			a_1^{k-1}	&a_2^{k-1}	&⋯	&a_d^{k-1}	&			&1			\\
			a_1^k		&a_2^k		&⋯	&a_d^k		&			&©a_n		\\
			⋮			&⋮			&	&⋮			&			&©⋮			\\
			a_1^d		&a_2^d		&⋯	&a_d^d		&			&©a_n^{k-1}	\\
		}\]*
		For each $i=k,k-1…2$,
		subtract $a_n$ times the $(k+i-1)$th row from the $(k+i)$th row.
		We reach:
		\[*\bma{
			1			&1			&⋯	&1			&1			&			\\
			®a_1®		&®a_2®		&⋯	&®a_d®		&			&			\\
			⋮			&⋮			&	&⋮			&			&			\\
			®a_1®^{k-1}	&®a_2®^{k-1}&⋯	&®a_d®^{k-1}&			&			\\
			\vphantom{\vrule height1.44em}
			a_1^{k-1}	&a_2^{k-1}	&⋯	&a_d^{k-1}	&			&1			\\
			®a_1®^k		&®a_2®^k	&⋯	&®a_d®^k	&			&			\\
			⋮			&⋮			&	&⋮			&			&			\\
			®a_1®^d		&®a_2®^d	&⋯	&®a_d®^d	&			&			\\
		}\]*
		Eliminate the first and the $(k+1)$th rows using the last two columns.
		Rescale all but the last two columns.
		Then we are left with a Vandermonde minor.
	\end{proof}
	
	What we were doing here looks like---and in fact is---%
	shortening a Reed--Solomon code to from a generalized Reed--Solomon code.
	We knew the matrix is invertible because the latter code is \MDS/.

\subsection{A polynomial shorthand}\label{sec:pmpoly}

	As Reed--Solomon codes admit polynomial descriptions,
	so do codes built upon Reed--Solomon codes.
	Here is a concise paraphrase of the primitive construction
	paired with Reed--Solomon vectors in terms of polynomials.

\subsubsection{File format and \PM M$M$}

	Let $𝔽[y,y']_{k-2}$ be the set of
	symmetric polynomials of bi-degree $(k-2,k-2)$ or less.
	To put it another way, $𝔽[y,y']_{k-2}$ is a vector space over $𝔽$ spanned by
	$1$, $y+y'$, $yy'$, $y^2+y'^2$, $y^2y'+yy'^2$, $y^3+y'^3…y^{k-2}y'^{k-2}$.
	One can identify the coefficient of $y^{i-1}y'^{j-1}$ with
	the $(i,j)$th entry of a $(k-1)$-by-$(k-1)$ symmetric matrix.
	Let $𝘴_1(y,y'),𝘴_2(y,y')∈𝔽[y,y']_{k-2}$.
	Then the coefficients of $𝘴_1(y,y'),𝘴_2(y,y')$
	carry a file of size $M=k(k-1).$

\subsubsection{Node configuration and $α$}

	For each $h∈[n]$, the $h$th node stores
	\[*𝘴_1(a_h,y')+a_h^{k-1}𝘴_2(a_h,y')∈𝔽[y']\tag{node content}\]*
	as a polynomial in $y'$.
	This univariate polynomial has degree $k-2$ or less, so $α=k-1$.

\subsubsection{Downloading scheme}

	Say we download the first $k$ nodes.
	Fix distinct $i,j∈[k]$.
	We can specialize $𝘴_1(a_i,y')+a_i^{k-1}𝘴_2(a_i,y')$
	to $𝘴_1(a_i,a_j)+a_i^{k-1}𝘴_2(a_i,a_j)∈𝔽$.
	So can we specialize $𝘴_1(a_j,y')+a_j^{k-1}𝘴_2(a_j,y')$
	to $𝘴_1(a_j,a_i)+a_j^{k-1}𝘴_2(a_j,a_i)=𝘴_1(a_i,a_j)+a_j^{k-1}𝘴_2(a_i,a_j)∈𝔽$.
	Now we possess two evaluations of the polynomial
	$𝘴_1(a_i,a_j)+x𝘴_2(a_i,a_j)∈𝔽[x]$, at $x=a_i^{k-1}$ and at $x=a_j^{k-1}$.
	Therefore, we can recover the constant term $𝘴_1(a_i,a_j)$
	and the linear term $𝘴_2(a_i,a_j)$.
	Repeat this for all $i≠j$, then we can recover $𝘴_1$ and $𝘴_2$
	as we have sufficiently many evaluations.

\subsubsection{Repairing scheme and $β$}

	When the $f$th node fails, the $h$th node sends
	\[*𝘴_1(a_h,a_f)+a_h^{k-1}𝘴_2(a_h,a_f)∈𝔽\tag{help message}\]*
	to the $f$th node for every $h∈ℋ$.
	This is a field element so $β=1$.
	
	Now consider $𝘴_1(y,a_f)+y^{k-1}𝘴_2(y,a_f)∈𝔽[y]$
	as a polynomial in $y$ of degree $d$ or less.
	Then the help messages are
	evaluations of this polynomials at $d$ distinct points.
	Therefore, the failing node can learn $𝘴_1(y,a_f)$ (the lower degree part)
	and $𝘴_2(y,a_f)$ (the higher degree part).
	And it determines $𝘴_1(y,a_f)+a_f^{k-1}𝘴_2(y,a_f)$.
	
	We end this section with a remark that
	a similar description can be carried out with anti-symmetric polynomials.

讀
\section
							   Algebra Background   							

\label{sec:algebra}

	This section gives self-contained definitions of
	tensor, symmetric, and exterior algebras
	that will be used in our construction.
	Contents of this section can be found in standard textbooks.
	To skip, proceed to \cref{sec:pmalgebra} on \cpageref{sec:pmalgebra}.
	
	Let $𝔽$ be a field.
	Our framework measures information in $𝔽$-symbols
	so the finiteness of $𝔽$ is not mandatory.
	However, finite fields---especially those with characteristic~$2$---%
	are usually assumed for applications (distributed storage).
	On the other hand, a crucial part of the construction
	implies that the field must have sufficiently many elements;
	we elaborate the implication later in \cref{sec:field}.
	
	Let $U,V,W$ be finite dimensional vector spaces over $𝔽$.
	Elements of $U$ are denoted by $u$ with or without proper subscripts,
	elements of $V$ by $v$, and element of $W$ by $w$.
	For brevity, we call vector spaces \emph{spaces}.
	
	The \emph{dual space} of $U$, denoted by $Uˇ$,
	is the space consisting of all linear transformations from $U$ to $𝔽$.
	We call elements of $Uˇ$ \emph{functionals} to distinguish them
	from elements of $U$, which we call \emph{vectors}.
	Since $U$ is of finite dimension, $U$ and $Uˇ$ share the same dimension.
	Furthermore, $(Uˇ)ˇ$ is isomorphic to $U$ canonically---a vector $u∈U$ gives
	rise to a map from $Uˇ$ to $𝔽$ by mapping a functional $ϕ∈Uˇ$ to $ϕ(u)∈𝔽$.
	It turns out that linear transformations defined in this way
	exhaust all possible linear transformations from $Uˇ$ to $𝔽$.
	The field element $ϕ(u)∈𝔽$ is called the \emph{evaluation of $ϕ$ at $u$}.
	The action that takes a functional $ϕ∈Uˇ$ as the input and returns $ϕ(u)∈𝔽$
	is called \emph{evaluating $ϕ$ at $u$} or simply \emph{evaluating at $u$}.
	For any subspace $V⊆U$, the \emph{restriction of $ϕ$ to $V$}
	is a functional from $V$ to $𝔽$ that evaluates $v∈V⊆U$ to $ϕ(v)$.
	This restriction is denoted by $ϕ↾V$.
	The corresponding action is called \emph{restricting $ϕ$ to $V$},
	or simply \emph{restricting to $V$}.
	
	A crucial part of our construction involves evaluations of
	a functional $ϕ∈Uˇ$ at a list of vectors $u_1,u_2,u_3,\dotsc∈U$.
	Interesting things happens when these vectors share some linear relations.
	For instance, if we want to evaluate $ϕ∈Uˇ$ at $u_1$, $u_2$, and $u_1-3u_2$,
	then we can also evaluate at only the first two vectors $u_1,u_2$ and
	compute the third evaluation by linearity $ϕ(u_1-3u_2)=ϕ(u_1)-3ϕ(u_2)$.
	From an information theoretic perspective,
	the information content of $ϕ(u_1)$, $ϕ(u_2)$, and $ϕ(u_1-3u_2)$
	is no more than that of $ϕ(u_1)$ and $ϕ(u_2)$.
	More generally, if $V$ is a subspace of $U$ and
	we want to know the restriction $ϕ↾V$, it suffices to choose
	a basis of $V$ (any basis) and evaluate at each vector in the basis.
	For all intents and purposes,
	which basis is used does not affect the properties of the codes;
	only the size of the basis $\dim V$ matters.

\subsection{Tensors and tensor products}

	Let $¯u_1,¯u_2…¯u_d∈U$ form a basis of $U$ of dimension $d$.
	Let $¯v_1,¯v_2…¯v_l∈V$ form a basis of $V$ of dimension $l$.
	Denoted by $U⊗V$, the \emph{tensor product} of $U$ and $V$
	is the space that consists of formal sums of the form
	\[∑_{ij}a_{ij}¯u_i⊗¯v_j.\eqlabel{for:tensorbasis}\]
	Here $a_{ij}∈𝔽$, and each $¯u_i⊗¯v_j$ is an unbreakable, free variable
	whose sole purpose is to carry its coefficient.
	The addition is term-wise:
	\[*∑_{ij}a_{ij}¯u_i⊗¯v_j+∑_{ij}b_{ij}¯u_i⊗¯v_j
		≔∑_{ij}(a_{ij}+b_{ij})¯u_i⊗¯v_j.\]*
	The scalar multiplication is distributive:
	\[*c·∑_{ij}a_{ij}¯u_i⊗¯v_j≔∑_{ij}(ca_{ij})¯u_i⊗¯v_j\]*
	for any $c∈𝔽$.
	The dimension is $\dim(U⊗V)=\dim(U)·\dim(V)=dl$.
	
	It is quite obvious that we could have put $a_{ij}$ into a $d$-by-$l$ array
	and define $U⊗V$ to be the space of arrays (matrices).
	However, doing so prevents us from seeing the greater picture:
	we may pretend that the character ``$⊗$'' is
	an infixed binary operator from $U⊕V$ to $U⊗V$ that sends
	\[*(u,v)=（∑_ia_i¯u_i,∑_jb_j¯v_j）∈U⊕V,\]*
	where $a_i,b_j∈𝔽$, to
	\[u⊗v≔∑_{ij}(a_ib_j)¯u_i⊗¯v_j∈U⊗V.\eqlabel{for:tensorany}\]
	This map is \emph{bi-linear} in the sense that it is linear in $u$, meaning
	\[*(u+cu')⊗v=∑_{ij}(a_ib_j+ca_i'b_j)¯u_i⊗¯v_j=u⊗v+cu'⊗v,\]*
	and linear in $v$, meaning
	\[*u⊗(v+cv')=∑_{ij}(a_ib_j+ca_ib_j')¯u_i⊗¯v_j=u⊗v+cu⊗v'.\]*
	But it is not linear in both,
	meaning $(u+cu')⊗(v+cv')≠u⊗v+cu'⊗v'$ in general.
	Once we give $u⊗v$---the juxtaposition of ``$⊗$'' with arbitrary vectors---%
	an interpretation, describing an element of $U⊗V$ can be done by
	summing a finite list of $u_i⊗v_i$ where these $u_i$ and $v_i$ are
	not necessarily the same vectors as $¯u_i$ and $¯v_i$.
	We then treat $U⊗V$ as a collection of formal sums of the form
	$∑_ia_iu_i⊗v_i$ subject to the bi-linearity relation,
	where $u_i∈U$ and $v_i∈V$ are arbitrary vectors.
	The addition of formal sums is done by adding the coefficients
	of the matched $(u_i⊗v_i)$-terms and leaving unmatched terms intact.
	For example $(2u_1⊗v_1+u_2⊗7v_2)$ plus $(-u_2⊗v_2+u_3⊗8v_3)$
	is equal to $(2u_1⊗v_1+6u_2⊗v_2+8u_3⊗v_3)$.
	This is the basis-free definition of $U⊗V$.
	A corollary is that no matter which basis we choose
	in \cref{for:tensorbasis} we will end up defining
	\emph{the} vector space structure on $U⊗V$, up to isomorphism.
	
	We call an element of $U⊗V$ a \emph{tensor} to distinguish it
	from vectors, elements of plainer spaces like $U,V,W$.
	The fact that $-u_1⊗v_1-u_2⊗v_2+u_1⊗v_2+u_2⊗v_1$ and
	$u_2⊗(-v_2+v_1)-u_1⊗(v_1-v_2)$ along with $(-u_1+u_2)⊗v_1+(u_1-u_2)⊗v_2$ as
	well as $(u_2-u_1)⊗(v_1-v_2)$ describe the same tensor inspires a question,
	What is the least amount of ``$⊗$'' required to describe a tensor?
	In the tensor product of only two spaces,
	this question boils down to decomposing a matrix $[a_{ij}]_{ij}$ into
	a product $𝘊𝘙$ of a $d$-by-$r$ matrix $𝘊$ and an $r$-by-$l$ matrix $𝘙$
	with the least possible $r$.
	(Remark:
		When $r$ reaches the minimum,
		columns of $𝘊$ are a basis of the column space of $[a_{ij}]_{ij}$;
		rows of $𝘙$ are a basis of the row space.)
	The number $r$ is called the \emph{rank of a tensor},
	which resembles the rank of a matrix.
	When $r=1$, the tensor is of the form $au⊗v$ for $a∈𝔽$ and $(u,v)∈U⊕V$.
	This is called a \emph{rank-$1$ tensor} or a \emph{simple tensor}.
	
	The new tensor notation defined in \cref{for:tensorany}
	possesses more convenience than \cref{for:tensorbasis}.
	Consider again the tensor product $U⊗V$.
	We interpret $u⊗V$ as the collection of tensors of the form $∑a_iu⊗v_i$,
	that is, the formal sums where the ``$U$-component'' is always $u$.
	We interpret $U⊗v$ as the collection of tensors of the form $∑a_iu_i⊗v$.
	If $W$ is a subspace of $U$, then we interpret $W⊗V$ as the collection
	of tensors where the ``$U$-component'' is always from $W$.
	It is easy to check that $u⊗V$, $U⊗v$, and $W⊗V$ are all subspaces of $U⊗V$.
	
	The tensor notation generalizes to combinations of three or more spaces.
	Let $U$ and $V$ have bases $¯u_1,¯u_2…¯u_d$
	and $¯v_1,¯v_2…¯v_l$, respectively.
	Let $W$ be a $k$-dimensional space with basis $¯w_1,¯w_2…¯w_k$.
	It is not hard to imagine that $U⊗(V⊗W)$, $(U⊗V)⊗W$, and
	any other similar combination all give the same vector space structure.
	It is common to unify them as $U⊗V⊗W$,
	a space consisting of formal sums of the form
	\[*∑_{hij}a_{hij}¯u_h⊗¯v_i⊗¯w_j.\]*
	The addition is term-wise.
	The scalar multiplication is distributive.
	The dimension is $\dim(U⊗V⊗W)=\dim(U)·\dim(V)·\dim(W)=dlk$.
	Similar to \cref{for:tensorany}, we interpret
	\[*u⊗v⊗w=（∑_ha_h¯u_h）⊗（∑_ib_i¯v_i）⊗（∑_jc_j¯w_j）,\]*
	where $(u,v,w)∈U⊕V⊕W$ and $a_h,b_i,c_j∈𝔽$, as
	\[*∑_{hij}(a_hb_ic_j)¯u_h⊗¯v_i⊗¯w_j∈U⊗V⊗W.\]*
	It is \emph{tri}-linear in the sense that
	$(u+cu')⊗v⊗w=u⊗v⊗w+cu'⊗v⊗w$ and $u⊗(v+cv')⊗w=u⊗v⊗w+cu⊗v'⊗w$
	along with $u⊗v⊗(w+cw')=u⊗v⊗w+cu⊗v⊗w'$.
	This again gives us a versatile way to describe tensors in $U⊗V⊗W$,
	namely by formal sums of the form
	\[*∑_ia_iu_i⊗v_i⊗w_i.\]*
	We can ask again what is the least possible length of formal sums
	that describe a certain tensor, and call this number its \emph{rank}.
	And then we can talk about whether a tensor is of rank one;
	a rank-$1$ tensor is of the form $au⊗v⊗w$.
	In a general tensor product of three or more spaces, computing the rank
	or determining whether a tensor is of rank one is difficult.
	But all we need is that every tensor is the sum of several
	rank-$1$ tensors, i.e., rank-$1$ tensors span the whole space.
	As a consequence, we can describe a linear transformation from
	a tensor space by describing the image of every rank-$1$ tensor.
	
	The dual of a tensor product is the tensor product of duals,
	i.e., $(U⊗V⊗W)ˇ$ is isomorphic to $Uˇ⊗Vˇ⊗Wˇ$.
	Let $ϕ∈(U⊗V⊗W)ˇ$ be a functional.
	(We do not have a word to distinguish ``plain'' functionals in
		$Uˇ$, $Vˇ$, or $Wˇ$ and tensor-flavored functionals in $(U⊗V⊗W)ˇ$.)
	Every tensor is a sum of rank-$1$ tensors and $ϕ$ is linear,
	so describing $ϕ$ is equivalent to
	describing $ϕ$'s evaluations at rank-$1$ tensors.

\subsection{Tensor power, symmetric power, and exterior power}

	Let $T^0V$ be $𝔽$;
	let $T^1V$ be $V$;
	and let $T^pV$ be a product $V⊗V⊗\dotsb⊗V$ of $p$ many $V$.
	This is called \emph{the $p$th tensor power of $V$}.
	Some authors write $V^{⊗p}$.
	Let $¯v_1,¯v_2…¯v_l$ form a basis of $V$.
	Tensors in $T^pV$ are of the form
	\[∑a_{i_1i_2\dotsm i_p}¯v_{i_1}⊗¯v_{i_2}⊗\dotsb⊗¯v_{i_p}
		\eqlabel{for:powerbasis}\]
	where $a_{i_1i_2\dotsm i_p}∈𝔽$ and
	the summation is over $i_1,i_2…i_p∈[l]$. Here $[l]≔\{1,2…l\}$.
	Same as before, we allow arbitrary vectors to build-up
	rank-$1$ tensors before summing them.
	Thus a tensor in $T^pV$ can be described by a sum of the form
	$∑_ia_iv_{i1}⊗v_{i2}⊗\dotsb⊗v_{ip}$
	where $a_i∈𝔽$ and $v_{ij}∈V$ are arbitrary vectors.
	The addition is done via matching rank-$1$ tensors.
	The scalar multiplication is distributive.
	The dimension is $\dim(T^pV)=\dim(V)^p=l^p$.
	To avoid confusion, it is worth noting that
	$v_1⊗v_2$ is in general \emph{not} equal to $v_2⊗v_1$
	unless $v_1$ is a multiple of $v_2$ or $v_2=0$.
	
	Let $Y$ be an $m$-dimensional space over $𝔽$.
	Let $S^0Y$ be $𝔽$;
	Let $S^1Y$ be $Y$.
	Let $¯y_1,¯y_2…¯y_m$ form a basis of $Y$.
	Let $S^qY$ be the space consisting of formal sums of the form
	\[*∑a_{i_1i_2\dotsm i_q}¯y_{i_1}⊙¯y_{i_2}⊙\dotsb⊙¯y_{i_q}\]*
	where the summation is over all $1≤i_1≤i_2≤\dotsb≤i_q≤m$ and
	each $¯y_{i_1}⊙¯y_{i_2}⊙\dotsb⊙¯y_{i_q}$ is an unbreakable, free variable.
	This is called \emph{the $q$th symmetric power of $Y$}.
	The addition is term-wise.
	The scalar multiplication is distributive.
	The dimension is $\dim(S^qY)=\bi{\dim(Y)+q-1}q=\bi{m+q-1}q$.
	When $m≤0$ or $q<0$, the summation is empty,
	so the space is a singleton $𝔽^0=\{0\}$.
	The space becomes interesting after we define the \emph{\SM/}
	\[*Θ：T^qY⟶S^qY\]*
	that sends $¯y_{j_1}⊗¯y_{j_2}⊗\dotsb⊗¯y_{j_q}$ to
	\[*¯y_{i_1}⊙¯y_{i_2}⊙\dotsb⊙¯y_{i_q}\]*
	where $i_1≤i_2≤\dotsb≤i_q$ is the sorted copy of indices $j_1,j_2…j_q$.
	For a sum of many $¯y_{j_1}⊗¯y_{j_2}⊗\dotsb⊗¯y_{j_q}$
	like \cref{for:powerbasis},
	$Θ$ applies to each summand and the images are added together.
	This makes $Θ$ a linear transformation.
	
	We call elements of $S^qY$ \emph{tensors}.
	The \SM/ $Θ$ allows us to describe tensors in $S^qY$ more concisely:
	We interpret
	\[*y_1⊙y_2⊙\dotsb⊙y_q\]*
	as
	\[*Θ(y_1⊗y_2⊗\dotsb⊗y_q)∈S^qY\]*
	where $y_1,y_2…y_q∈Y$.
	Then we can use arbitrary vectors in $Y$ to describe tensors in $S^qY$---%
	what make up $S^qY$ are formal sums of rank-$1$ tensors of the form
	$∑_ia_iy_{i1}⊙y_{i2}⊙\dotsb⊙y_{iq}$
	where $a_i∈𝔽$ and $y_{ij}∈Y$ are arbitrary vectors.
	The addition is done via matching rank-$1$ tensors.
	The scalar multiplication is distributive.
	This syntax has the following two infamous characterizations.
	\begin{description}
		\item[\kern-1emMultilinearity]
			it is linear in every ``$y$'', meaning that \\
			$y_1⊙\dotsb⊙(y_i+cy_i')⊙\dotsb⊙y_q$ is equal to \\
			$(y_1⊙\dotsb⊙y_i⊙\dotsb⊙y_q)+c(y_1⊙\dotsb⊙y_i'⊙\dotsb⊙y_q)$.
		\item[\kern-1emCommutativity]
			swapping two $y$'s does nothing, \\
			$y_1⊙\dotsb⊙y_i⊙\dotsb⊙y_j⊙\dotsb⊙y_q
				=y_1⊙\dotsb⊙y_j⊙\dotsb⊙y_i⊙\dotsb⊙y_q$.
	\end{description}
	The proof is routine and omitted.
	Note that tensors in $T^pV$ are also multilinear in the same sense---%
	$v_1⊗\dotsb⊗(v_i+cv_i')⊗\dotsb⊗v_q$ is equal to
	$(v_1⊗\dotsb⊗v_i⊗\dotsb⊗v_q)+c(v_1⊗\dotsb⊗v_i'⊗\dotsb⊗v_q)$.
	
	Consider the symmetric square $S^2Y$.
	We interpret $y⊙Y$ as the collection of tensors of the form $∑a_iy⊙y_i$,
	that is, the formal sums where the first component is always $y$.
	We interpret $Y⊙y$ as the collection of tensors of the form $∑a_iy_i⊙y$,
	which is the same subset as $y⊙Y$.
	For higher symmetric powers,
	one can interpret $y_1⊙y_2⊙Y$, $y_1⊙Y⊙y_2$, $Y⊙y_1⊙Y⊙y_2$, etc.\ similarly.
	It is easy to verify that
	they are all subspaces of $S^qY$ for the obvious choices of $q$.
	In particular, $Y⊙Y⊙\dotsb⊙Y=S^qY$.
	
	Let $Λ^0W$ be $𝔽$;
	let $Λ^1W$ be $W$.
	Let $¯w_1,¯w_2…¯w_k$ form a basis of $W$.
	Let $Λ^qW$ be the space consisting of formal sums of the form
	\[*∑a_{i_1i_2\dotsm i_q}¯w_{i_1}∧¯w_{i_2}∧\dotsb∧¯w_{i_q}\]*
	where the summation is over all $1≤i_1<i_2<\dotsb<i_q≤k$ and
	each $¯w_{i_1}∧¯w_{i_2}∧\dotsb∧¯w_{i_q}$ is an unbreakable, free variable.
	This is called \emph{the $q$th exterior power of $W$}.
	The addition is term-wise.
	The scalar multiplication is distributive.
	The dimension is $\bi kq$.
	When $q<0$ or $q>k$, the summation is empty,
	so the space is a singleton $𝔽^0=\{0\}$.
	The space becomes interesting after we define the \emph{\WM/}
	\[*Δ：T^qW⟶Λ^qW\]*
	that sends $¯w_{j_1}⊗¯w_{j_2}⊗\dotsb⊗¯w_{j_q}$ to
	\[*\begin{cases*}
		0∈Λ^qW & if some indices coincide, \\
		(-1)^σ¯w_{i_1}∧¯w_{i_2}∧\dotsb∧¯w_{i_q} & otherwise,
	\end{cases*}\]*
	where $i_1<i_2<\dotsb<i_q$ is the sorted copy of indices $j_1,j_2…j_q$,
	and $σ$ is the number of swaps used to sort.
	The parity of $σ$ is commonly called the \emph{parity} of the permutation
	that sends $i_1$ to $j_1$, sends $i_2$ to $j_2$, et seq.
	For a sum of many $¯w_{j_1}⊗¯w_{j_2}⊗\dotsb⊗¯w_{j_q}$
	like \cref{for:powerbasis},
	$Δ$ applies to each summand and the images are added together.
	This makes $Δ$ a linear transformation.
	
	Elements of $Λ^qW$ are sometimes called \emph{multi-vectors}.
	We still call them \emph{tensors}.
	The \WM/ $Δ$ allows us to describe tensors in $Λ^qW$ more concisely:
	We interpret
	\[*w_1∧w_2∧\dotsb∧w_q\]*
	as
	\[*Δ(w_1⊗w_2⊗\dotsb⊗w_q)∈Λ^qW\]*
	where $w_1,w_2…w_q∈W$.
	Then we can use arbitrary vectors in $W$ to describe tensors in $Λ^qW$---%
	what make up $Λ^qW$ are formal sums of rank-$1$ tensors of the form
	$∑_ia_iw_{i1}∧w_{i2}∧\dotsb∧w_{iq}$
	where $a_i∈𝔽$ and $w_{ij}∈W$ are arbitrary vectors.
	The addition is done via matching rank-$1$ tensors.
	The scalar multiplication is distributive.
	The dimension is $\dim(Λ^qW)=\bi{\dim(W)}q=\bi kq$.
	This syntax has the following two infamous characterizations.
	\begin{description}
		\item[\kern-1emMultilinearity]
			It is linear in every ``$w$'', meaning that \\
			$w_1∧\dotsb∧(w_i+cw_i')∧\dotsb∧w_q$ is equal to \\
			$(w_1∧\dotsb∧w_i∧\dotsb∧w_q)+c(w_1∧\dotsb∧w_i'∧\dotsb∧w_q)$.
		\item[\kern-1emAnti-commutativity]
			$w_1∧\dotsb∧w_i∧\dotsb∧w_j∧\dotsb∧w_q=0$ if $w_i=w_j$. \\
			This implies that swapping two $w$'s causes a sign change, \\
			$w_1∧\dotsb∧w_i∧\dotsb∧w_j∧\dotsb∧w_q
				=-w_1∧\dotsb∧w_j∧\dotsb∧w_i∧\dotsb∧w_q$.
	\end{description}
	The proof is routine and omitted.
	
	Consider the wedge square $Λ^2W$.
	We interpret $w∧W$ as the collection of tensors of the form $∑a_iw∧w_i$,
	that is, the formal sums where the first component is always $w$.
	We interpret $W∧w$ as the collection of tensors of the form $∑a_iw_i∧w$,
	which is the same subset as $w∧W$.
	For higher exterior powers,
	one can interpret $w_1∧w_2∧W$, $w_1∧W∧w_2$, $w_1∧W∧W∧w_2$, etc.\ similarly.
	It is easy to verify that
	they are all subspaces of $Λ^qW$ for the obvious choices of $q$.
	In particular, $W∧W∧\dotsb∧W=Λ^qW$.

讀
\section
					\PT MSR†\MSR/† Product-Matrix in Algebra					

\label{sec:pmalgebra}

	The purpose of this section is to introduce
	the multilinear algebra foundation to the classical constructions
	such that it leads to natural generalizations.
	Usage of multilinear algebra in the context of distributed storage
	dates back to a conference presentation \cite{DL17}.

\subsection{The symmetric translation}\label{sec:t2sym}

	This subsection translates the primitive construction
	in \cref{sec:pmsym} into
	the multilinear algebra language reviewed in the last section.
	Recall $n-1≥d=2(k-1)≥2$.

\subsubsection{File format and \PM M$M$}

	Let $X$ be $𝔽^2$.
	Let $Y$ be $𝔽^{k-1}$.
	Let the file be represented by a linear transformation
	\[*ϕ：X⊗S^2Y⟶𝔽.\tag{file format}\]*
	The file size $M$ is the dimension of $X⊗S^2Y$,
	which is $2·(k-1)k/2$.

\subsubsection{Node configuration and $α$}

	For each $h∈[n]$,
	the $h$th node selects star vectors $x🔑h∈X$ and $y🔑h∈Y$.
	And then the node stores the restriction
	\[*ϕ↾x🔑h⊗y🔑h⊙Y∈(x🔑h⊗y🔑h⊙Y)ˇ.\tag{node content}\]*
	The node size $α$ is the dimension of $x🔑h⊗y🔑h⊙Y$, which is $k-1$.
	The axioms are translated as well.
	
	\begin{axi}\label{axit2sym}
		The selection of the star vectors should conform to
		the following three \MDS/ properties.
		\begin{itemize}
			\mdslabel{x2}	Any two $x🔑h$s span $X$.
				That is to say, $\spa⟨x🔑{h_1},x🔑{h_2}⟩=X$
				for all distinct $h_1,h_2∈[n]$.
			\mdslabel{y2}	Any $k-1$ many $y🔑h$s span $Y$.
				That is, $\spa⟨y🔑{h_1},y🔑{h_2}…y🔑{h_{k-1}}⟩=Y$
				for all distinct $h_1,h_2…h_{k-1}∈[n]$.
			\mdslabel{d2}	Any $d$ many $x🔑h⊗y🔑h$ span $X⊗Y$.
				That is, $\spa⟨x🔑{h_1}⊗y🔑{h_1}…x🔑{h_d}⊗y🔑{h_d}⟩=X⊗Y$
				for all distinct $h_1…h_d∈[n]$.
		\end{itemize}
	\end{axi}
	
	Note that \mdsref{y2} coincides with \mdsref{y}.
	Notice a potential identification $x🔑h≔\bma{1&ξ_h}$.
	Then $x🔑h⊗y🔑h$ corresponds to $\bma{y🔑h&ξ_hy🔑h}$.

\subsubsection{Downloading scheme}

	The downloading scheme boils down to whether
	$k$ restrictions $ϕ↾x🔑h⊗y🔑h⊙Y$ recover the file $ϕ$.
	It is equivalent to this.
	
	\begin{pro}
		Assume \mdsref{x2} and \mdsref{y2},
		then a total of $k$ $x🔑h⊗y🔑h⊙Y$ span $X⊗S^2Y$.
	\end{pro}
	
	\begin{proof}[Sketch]
		Consider the first $k$ many $x🔑h⊗y🔑h⊙Y$s.
		Fix distinct $i,j∈[k]$.
		Then $x🔑i⊗y🔑i⊙Y$ contains $x🔑i⊗y🔑i⊙y🔑j$
		and $x🔑j⊗y🔑j⊙Y$ contains  $x🔑j⊗y🔑i⊙y🔑j$.
		So in the span of them is $\spa⟨x🔑i,x🔑j⟩⊗y🔑i⊙y🔑j$.
		By \mdsref{x2}, the latter is $X⊗y🔑i⊙y🔑j$.
		Vary $j$ over $[k]、\{i\}$,
		then they span $X⊗y🔑i⊙Y$ by \mdsref{y2}.
		Vary $i$ over $[k]$, they span $X⊗Y⊙Y$.
	\end{proof}

\subsubsection{Repairing scheme and $β$}

	When the $f$th node fails, the $h$th node, for each $h∈ℋ$, sends
	\[*ϕ(x🔑h⊗y🔑h⊙y🔑f)∈𝔽\tag{help message}\]*
	to the failing node.
	It is an evaluation so $β=1$.
	Whether or not the failing recovers from the help messages
	reduces to whether $ϕ(x🔑h⊗y🔑h⊙y🔑f)$, a total of $d$ of them,
	determine $ϕ↾x🔑f⊗y🔑f⊙Y$.
	An equivalent statement is here.
	
	\begin{pro}
		Assume \mdsref{d2},
		then a total of $d$ $x🔑h⊗y🔑h⊙y🔑f$ span
		$x🔑f⊗y🔑f⊙Y$ for any $f$.
	\end{pro}
	
	\begin{proof}[Sketch]
		\mdsref{d2} reads $d$ $x🔑h⊗y🔑h$ span $X⊗Y$.
		So $d$ $x🔑h⊗y🔑h⊙y🔑f$ span $X⊗Y⊙y🔑f$.
		The latter contains $x🔑f⊗y🔑f⊙Y$.
	\end{proof}

\subsection{The skew translation}\label{sec:t2ext}

	This subsection translates the skew construction
	in \cref{sec:pmext} to the multilinear algebra language.

\subsubsection{File format and \PM M$M$}

	Let $X$ be $𝔽^2$.
	Let $W$ be $𝔽^k$.
	Let the file $ϕ$ be a linear transformation
	\[*ϕ：X⊗Λ^2W⟶𝔽.\tag{file format}\]*
	The file size $M$ is $\dim(X⊗Λ^2W)=k(k-1)$.

\subsubsection{Node configuration and $α$}

	For each $h∈[n]$,
	the $h$th node selects star vectors $x🔑h∈X$ and $w🔑h∈W$.
	And then the node stores
	\[*ϕ↾x🔑h⊗w🔑h∧W∈(x🔑h⊗w🔑h∧W)ˇ.\tag{node content}\]*
	The node size $α$ is thus $k-1$, because $w🔑h∧w🔑h$ vanishes.
	We do not forget translating the axiom.
	
	\begin{axi}
		The selection of the star vectors should fulfill
		the following three \MDS/ properties.
		\begin{itemize}
			\mdslabel[']{x2}	Any two $x🔑h$s span $X$.
			\mdslabel{w2}	Any $k$ many $w🔑h$s span $W$.
				That is, $\spa⟨w🔑{h_1},w🔑{h_2}…w🔑{h_k}⟩=W$
				for all distinct $h_1,h_2…h_k∈[n]$.
			\mdslabel{q2}	Any $d$ many $x🔑h⊗w🔑h$ span $X⊗(W/⟨w🔑f⟩)$
				for every other index $f$.
				That is,
				$\spa⟨x🔑{h_1}⊗w🔑{h_1}…x🔑{h_d}⊗w🔑{h_d}⟩+X⊗w🔑f=X⊗W$
				for all distinct $f,h_1…h_d∈[n]$.
		\end{itemize}
	\end{axi}
	
	Note that \mdsref[']{x2} coincides with the one in \cref{axit2sym},
	and \mdsref{w2} coincides with \mdsref{w}.
	Notice the potential identification $x🔑h≔\bma{1&ξ_h}$.
	Then $x🔑h⊗w🔑h$ corresponds to $\bma{w🔑h&ξ_hw🔑h}$,
	and $X⊗w🔑f$ to $\spa〈\bma{w🔑f&0},\bma{0&w🔑f}〉$.

\subsubsection{Downloading scheme}

	The downloading scheme can be summarized by the following proposition,
	proof of which is omitted for now.
	But it is a special case of the general theorem.
	
	\begin{pro}
		With \mdsref[']{x2} and \mdsref{w2} assumed,
		$k$ $x🔑h⊗w🔑h∧W$ span $X⊗Λ^2W$.
	\end{pro}

\subsubsection{Repairing scheme and $β$}

	The repairing scheme can be summarized by the following proposition,
	proof of which is omitted for now.
	It is a special case of the general theorem.
	
	\begin{pro}
		With \mdsref{q2} assumed, $d$ $x🔑h⊗w🔑h∧w🔑f$
		span $X⊗W∧w🔑f$ for any other $f$.
	\end{pro}

\subsection{Relation to the polynomial construction}

	The key is to identify $Y≔𝔽^{k-1}$ with $𝔽[y]_{k-2}$.
	Then replace $S^2Y$ with $S^2(𝔽[y]_{k-2})$.
	One also identifies $X≔𝔽^2$ with $𝔽⊕𝔽y^{k-1}$.
	Then $X⊗Y≅𝔽[y]_{2k-3}$ as vector spaces.
	It is possible as well to translate
	the skew construction into polynomials.
	Identify $W≔𝔽^k$ with $𝔽[w]_{k-1}$.

\section{Bridge to High-Rate Codes}\label{sec:highrate}

	Recall the product-matrix mechanism provides \MSR/ codes
	with $d=2(k-1)$ and \cref{lem:shortclose} enables $d>2(k-1)$.
	It remains open whether there are \emph{high-rate} codes
	(meaning $d<2(k-1)$ in this context) that share
	a similar, if not the same, design.
	
	The subsequent two subsections mean to motivate a universal construction
	by giving an explicit $(9,5,6,6)$-\MSR/ code
	and then its moderate generalization to all $d=(3/2)(k-1)$ cases.

\subsection{Explicit \PM(9, 5, 6, 6)$(9,5,6,6)$-\PT MSR†\MSR/† code}
	\label{sec:explicit}

	Here is an explicit, ready-to-use $(9,5,6,6)$-\MSR/ code.

\subsubsection{File format and \PM M$M$}

	Let $𝔽$ be of order $16$;
	it could be realized by the quotient ring $𝔽_2[z]/(z^4+z+1)$.
	Let $X$ be $𝔽^3$.
	Let $Y$ be $𝔽^3$.
	(They play distinct roles and should not be identified.)
	Let $S^3Y$ be the symmetric cube.
	Let the file $ϕ$ be any linear transformation
	\[*ϕ：X⊗S^3Y⟶𝔽.\tag{file format}\]*
	The file size $M$ is the dimension of $X⊗S^3Y$.
	Here $\dim(X)=3$ and $\dim(S^3Y)=\bi 53=10$, so $M=30$.

\subsubsection{Node configuration and $α$}

	Let $a_1,a_2…a_9$ be $0$, $z^3$, $z^6$, $z^{-3}$, $z^{-6}$,
	$z^{-1}$, $z^{-2}$, $z^{-4}$, $z^{-8}$, respectively.
	For each $h∈[9]$,
	the $h$th node selects star vectors
	$x🔑h≔\bma{1&a_h^2&a_h^6}∈X$ and $y🔑h≔\bma{1&a_h&a_h^3}∈Y$.
	And then the node stores the restriction
	\[*ϕ↾x🔑h⊗y🔑h⊙S^2Y∈(x🔑h⊗y🔑h⊙S^2Y)ˇ.\tag{node content}\]*
	The node size $α$ is the dimension of $S^2Y$, so $α=6$.

\subsubsection{Downloading scheme}

	\emph{It happens} that any five $x🔑h⊗y🔑h⊙S^2Y$ span $X⊗S^3Y$,
	hence any five node contents recover the file $ϕ$.

\subsubsection{Repairing scheme and $β$}

	Say the $f$th node fails and the first six nodes are commanded to fix it.
	For each $h∈[6]$, the $h$th node sends 
	\[*ϕ↾x🔑h⊗y🔑h⊙Y⊙y🔑f∈(x🔑h⊗y🔑h⊙Y⊙y🔑f)ˇ.\tag{help message}\]*
	The repair bandwidth $β$ is thus $\dim(Y)=3$.
	\emph{It happens} that any six $x🔑h⊗y🔑h⊙Y⊙y🔑f$ span $x🔑f⊗y🔑f⊙S^2Y$,
	hence the repairing scheme works.
	The specification of the $(9,5,6,6)$-\MSR/ code ends here.
	(This is indeed \MSR/ because $β=3=α/(d-k+1)$.)

\subsection{Warm-up \PM(n, k, (3/2)(k-1))$(n,k,(3/2)(k-1))$%
	-\PT MSR†\MSR/† code}\label{sec:warmup}

	In this subsection, we portray a $d=(3/2)(k-1)$ construction
	as a bridge to the general construction in \cref{sec:atrahasis}.
	The first nontrivial $(k,d)$ pair in this vein is $(5,6)$,
	the parameters used in the last subsection.
	The upcomers are $(7,9)$ followed by $(9,12)$ as well as $(11,15)$.
	Claims in this subsection will not be proven
	as their general counterparts in \cref{sec:atrahasis} come with proofs.

\subsubsection{File format and \PM M$M$}

	Let $𝔽$ be a field. 
	Let $X$ still be $𝔽^3$.
	Let $Y$ be $𝔽^{k-2}$.
	(In general, contrary to the previous subsection,
		$\dim(Y)$ need not be equal to $\dim(X)$.)
	Let $S^3Y$ be the symmetric cube.
	Let the file $ϕ$ be any linear transformation
	\[*ϕ：X⊗S^3Y⟶𝔽.\tag{file format}\]*
	The file size $M$ is the dimension of $X⊗S^3Y$.
	Here $\dim(X)=3$ and $\dim(S^3Y)=\bi k3$, so $M=(k-2)(k-1)k/2$.

\subsubsection{Node configuration and $α$}

	For each $h∈[n]$,
	the $h$th node selects star vectors $x🔑h∈X$ and $y🔑h∈Y$.
	And then the node stores the restriction
	\[*ϕ↾x🔑h⊗y🔑h⊙S^2Y∈(x🔑h⊗y🔑h⊙S^2Y)ˇ.\tag{node content}\]*
	The node size $α$ is the dimension of $S^2Y$, so $α=(k-1)(k-2)/2$.
	The selection of the star vectors shall meet three \MDS/ requirements.
	
	\begin{axi}\label{axi:t3sym}
		The selection of the star vectors are such that:
		\begin{itemize}
			\mdslabel{x3}	Any three $x🔑h$s span $X$.
				That is, $\spa⟨x🔑{h_1},x🔑{h_2},x🔑{h_3}⟩=X$
				for all distinct $h_1,h_2,h_3∈[n]$.
			\mdslabel{y3}	Any $k-2$ many $y🔑h$s span $Y$.
				That is, $\spa⟨y🔑{h_1}…y🔑{h_{k-2}}⟩=Y$
				for all distinct $h_1…h_{k-2}∈[n]$.
			\mdslabel{d3}	Any $d$ many $x🔑h⊗y🔑h⊙Y$ span $X⊗S^2Y$.
				That is,
				$x🔑{h_1}⊗y🔑{h_1}⊙Y+\dotsb+x🔑{h_d}⊗y🔑{h_d}⊙Y=X⊗S^2Y$
				for all distinct $h_1…h_d∈[n]$.
		\end{itemize}
	\end{axi}
	
	It is unclear whether there are easy ways to generate star vectors.
	There are some heuristics that suggest hopeful patterns;
	accordingly we found some working instances by brute force.
	See \cref{sec:force}.

\subsubsection{Downloading scheme}

	Say the first $k$ nodes are retrieved.
	For any distinct indices $h,i,j∈[k]$, we extract $ϕ(x🔑h⊗y🔑h⊙y🔑i⊙y🔑j)$,
	$ϕ(x🔑i⊗y🔑h⊙y🔑i⊙y🔑j)$, and $ϕ(x🔑j⊗y🔑h⊙y🔑i⊙y🔑j)$
	from the $h$th, the $i$th, and the $j$th nodes, respectively.
	From there we learn $ϕ↾X⊗y🔑h⊙y🔑i⊙y🔑j$ by \mdsref{x3}.
	Next, we learn $ϕ↾X⊗y🔑h⊙y🔑i⊙Y$ by
	applying \mdsref{y3} to various $j∈[k]、\{h,i\}$.
	Once done, we vary $i$ to study $ϕ↾X⊗y🔑h⊙Y⊙Y$ by \mdsref{y3}.
	The latter then helps us reestablish $ϕ↾X⊗Y⊙Y⊙Y$, which is the file per se.

\subsubsection{Repairing scheme and $β$}

	Say the $f$th node fails and the first $d$ nodes are commanded to fix it.
	For each $h∈[d]$, the $h$th node sends 
	\[*ϕ↾x🔑h⊗y🔑h⊙Y⊙y🔑f∈(x🔑h⊗y🔑h⊙Y⊙y🔑f)ˇ.\tag{help message}\]*
	The repair bandwidth $β$ is thus $\dim(Y)=k-2$.
	By \mdsref{d3}, the failing node learns $ϕ↾X⊗Y⊙Y⊙y🔑f$
	from the help messages.
	And then the node specializes it to $ϕ↾x🔑f⊗Y⊙Y⊙y🔑f$,
	which is $ϕ↾x🔑f⊗y🔑f⊙Y⊙Y$.
	One can see here that the validity of the repairing scheme
	depends entirely on whether \mdsref{d3}, the third \MDS/ axiom, holds.
	The subtlety is how to design star vectors.
	
	We close this section with a remark that codes defined in this subsection
	are special cases of the general code in \cref{sec:tallsym}.
	\Cref{axi:t3sym}, for instance, is a special case of \cref{axi:tallsym}.
	See \cref{tab:accents} for complete relations among all constructions.
	
	\pgfplotstableread[header=false]{
		~				$t$		symmetric		exterior		polynomial		
		product-matrix	$2$		\ref{pmsym}		\ref{pmext}		\ref{pmpoly}	
		\cmidrule(lr){1-5}%
		~				$2$		\ref{t2sym}		\ref{t2ext}		--				
		\MLA/			$3$		\ref{warmup}	--				\ref{t3poly}	
		~				$≥2$	\ref{tallsym}	\ref{tallext}	--				
	}\pgfAccents
	\begin{table}
		\caption{
			Product-matrix constructions (top left cell),
			its accents (first row), and their generalizations (other rows).
			Omitted entries are omitted to reduce repetition,
			not because they are not possible.
			Note that polynomials come in two flavors
			(symmetric and anti-symmetric/skew-symmetric/alternating/exterior).
			But we only talk about the symmetric ones.
		}\label{tab:accents}
		\def\arraystretch{1.44}
		\def\MLA/{\tikz[overlay]
			\node[align=center]{multilinear\strut\\algebra\strut};}
		\def\ref#1{\cref{sec:#1}}
		\centering\pgfplotstabletypeset[
			every head row/.style=output empty row,
			every row no 0/.style={before row=\toprule,after row=\midrule},
			every last row/.style={after row=\bottomrule},
			string type,
		]\pgfAccents
	\end{table}

讀
\section
								 Atrahasis Code 								

\label{sec:atrahasis}

	We specify and verify the general Atrahasis code in this section.
	That will prove \cref{thm:primitive}.
	Recall that the parameters we are interested in are
	integers $n,k,d,t$ such that $n-1≥d≥k≥2$ and $t=d/(d-k+1)$.
	We invite readers to organize parameters in this form.
	
	\begin{pro}
		Asterisks are unimportant place holders.
		The matrix
		\[*\bma{
			d-k+1	&	k-1	&	d	&	α	&	α\log_2\abs{𝔽}	\\
			1		&	t-1	&	t	&	β	&	β\log_2\abs{𝔽}	\\
			*		&	*	&	kd	&	M	&	M\log_2\abs{𝔽}	
		}\]*
		is of rank one.
	\end{pro}
	
	\begin{proof}
		Trivial.
	\end{proof}
	
	For \cref{thm:primitive}, we provide two proofs.
	One utilizes symmetric power and the other leans on exterior power.

\subsection{Symmetric power proof of Theorem~\ref{thm:primitive}}
	\label{sec:tallsym}

\subsubsection{File format and \PM M$M$}

	Let $X$ be $𝔽^t$.
	Let $Y$ be $𝔽^{k-t+1}$.
	Let $S^tY$ be the $t$th symmetric power of $Y$.
	Let the file $ϕ$ be encoded as a linear transformation
	\[*ϕ：X⊗S^tY⟶𝔽.\tag{file format}\]*
	This arbitrary map in $(X⊗S^tY)ˇ$ is able to carry $\dim(X⊗S^tY)$ symbols.
	Here $\dim(X)=t$ and $\dim(S^tY)=\bi kt$.
	Therefore, $M=t\bi kt=k\bi{k-1}{t-1}$.

\subsubsection{Node configuration and $α$}

	Let $[n]≔\{1,2…n\}$ represent the set of nodes.
	For each $h∈[n]$, the $h$th node selects two star vectors:
	$x🔑h∈X$ and $y🔑h∈Y$.
	Next, the $h$th node stores a restriction of the file
	\[*ϕ↾x🔑h⊗y🔑h⊙S^{t-1}Y∈(x🔑h⊗y🔑h⊙S^{t-1}Y)ˇ.\tag{node content}\]*
	This restriction is a linear transformation
	from the domain $x🔑h⊗y🔑h⊙S^{t-1}Y$.
	As a consequence,
	it can be fully recorded by $\dim(x🔑h⊗y🔑h⊙S^{t-1}Y)$ symbols.
	This quantity coincides with $\dim(S^{t-1}Y)$, which is $\bi{k-1}{t-1}$.
	In summary, the \subpack/ level is $α=\bi{k-1}{t-1}$.
	
	For the downloading scheme and repairing scheme later in the proof,
	the selection of star vectors $x🔑h$s and $y🔑h$s are not arbitrary.
	There are several conditions they need to fulfill.
	
	\begin{axi}\label{axi:tallsym}
		Assume that the selection of the star vectors
		fulfills the following three \MDS/ conditions.
		\begin{itemize}
			\mdslabel{xt}	Any $t$ many $x🔑h$s span $X$.
				Namely, $\spa⟨x🔑{h_1}…x🔑{h_t}⟩=X$
				for all distinct indices $h_1…h_t∈[n]$.
			\mdslabel{yt}	Any $k-t+1$ many $y🔑h$s span $Y$.
				To wit, $\spa⟨y🔑{h_1}…y🔑{h_{k-t+1}}⟩=Y$
				for all distinct indices $h_1…h_{k-t+1}∈[n]$.
			\mdslabel{dt}	Any $d$ many $x🔑h⊗y🔑h⊙S^{t-2}Y$
				span $X⊗S^{t-1}Y$.
				More specifically,
				$x🔑{h_1}⊗y🔑{h_1}⊙S^{t-2}Y+\dotsb+x🔑{h_d}⊗y🔑{h_d}⊙S^{t-2}Y$
				is $X⊗S^{t-1}Y$ for all distinct indices $h_1…h_d∈[n]$.
		\end{itemize}
	\end{axi}
	
	This axiom generalizes \cref{axi:t3sym}.
	As commented there, it is unclear how such star vectors can be found easily.
	The existence of star vectors, on the other hand,
	is guaranteed by Alon's combinatorial Nullstellensatz.
	That being said, we have no control over
	the upper bound on field size other than Alon's.
	(Bounds from DeMillo--Lipton--Schwartz--Zippel,
		if not coincident, are looser.)
	See \cref{sec:alon} for more details.

\subsubsection{Downloading scheme}

	To verify that downloading any $k$ nodes
	suffices to recover the whole file $ϕ$,
	let $𝒦⊆[n]$ be the indices of downloaded nodes, $\abs{𝒦}=k$.
	We now possess complete knowledge
	of $ϕ↾x🔑h⊗y🔑h⊙S^{t-1}Y$ for all $h∈𝒦$.
	Provided that $ϕ$ is linear, we recover the restriction to the span
	\[*ϕ
		\mathbin{
			\rlap{\raisebox{-3pt}{$↾$}}
			\rlap{\tikz\fill[white,opacity=1]rectangle(.2,.1)(,-.05);}
			\raisebox{3pt}{$↾$}
		}
		∑_{h∈𝒦}x🔑h⊗y🔑h⊙S^{t-1}Y.\]*
	Whether or not this is $ϕ$ per se depends on
	whether or not the span is the original domain, $X⊗S^tY$.
	
	\begin{pro}\label{pro:tallsymdown}
		Take \mdsref{xt} and \mdsref{yt} as granted.
		For any $k$-element subset $𝒦⊆[n]$,
		\[*∑_{h∈𝒦}x🔑h⊗y🔑h⊙S^{t-1}Y=X⊗S^tY.\]*
	\end{pro}
	
	\begin{proof}
		Let $i_1,i_2…i_t∈𝒦$ be $t$ distinct indices.
		Then $∑_{h∈𝒦}x🔑h⊗y🔑h⊙S^{t-1}Y$ contains the following $t$ tensors
		\begin{align*}
				x🔑{i_1}⊗y🔑{i_1}⊙y🔑{i_2}⊙y🔑{i_3}⊙\dotsb⊙y🔑{i_t},\\
			x🔑{i_2}⊗y🔑{i_2}⊙y🔑{i_1}⊙y🔑{i_3}⊙\dotsb⊙y🔑{i_t}
				=x🔑{i_2}⊗y🔑{i_1}⊙y🔑{i_2}⊙y🔑{i_3}⊙\dotsb⊙y🔑{i_t},\\
			x🔑{i_3}⊗y🔑{i_3}⊙y🔑{i_1}⊙y🔑{i_2}⊙\dotsb⊙y🔑{i_t}
				=x🔑{i_3}⊗y🔑{i_1}⊙y🔑{i_2}⊙y🔑{i_3}⊙\dotsb⊙y🔑{i_t},\\
					⋮\kern3em\\
			x🔑{i_j}⊗y🔑{i_j}⊙y🔑{i_1}⊙y🔑{i_2}
			⊙\dotsb\widehat{y🔑{i_j}}\dotsb⊙y🔑{i_t}
				=x🔑{i_j}⊗y🔑{i_1}⊙y🔑{i_2}⊙y🔑{i_3}⊙\dotsb⊙y🔑{i_t},\\
					⋮\kern3 em\\
			x🔑{i_t}⊗y🔑{i_t}⊙y🔑{i_1}⊙y🔑{i_3}⊙\dotsb⊙y🔑{i_{t-1}}
				=x🔑{i_t}⊗y🔑{i_1}⊙y🔑{i_2}⊙y🔑{i_3}⊙\dotsb⊙y🔑{i_t}.
		\end{align*}
		A vector under a wide hat is missing from the product.
		Note that the right column consists of ${}⊗y🔑{i_1}⊙\dotsb⊙y🔑{i_t}$
		led by $x🔑{i_1}$, $x🔑{i_2}$, and all the way up to $x🔑{i_t}$.
		By the distributive law, tensors in the right column span
		\[*\spa⟨x🔑{i_1},x🔑{i_2}…x🔑{i_t}⟩
			⊗y🔑{i_1}⊙y🔑{i_2}⊙y🔑{i_3}⊙\dotsb⊙y🔑{i_t}.\]*
		Invoking \mdsref{xt}, this subspace is
		\[*X⊗y🔑{i_1}⊙y🔑{i_2}⊙y🔑{i_3}⊙\dotsb⊙y🔑{i_t}.\]*
		Let $i_t$ vary over $𝒦、\{i_1…i_{t-1}\}$
		(all possible indices such that all subscripts are distinct).
		Then these subspaces sum to
		\[*X⊗y🔑{i_1}⊙\dotsb⊙y🔑{i_{t-1}}
			⊙\spa⟨y🔑{i_t}:i_t∈𝒦、\{i_1…i_{t-1}\}⟩.\]*
		According to \mdsref{yt}, the span can be replaced by $Y$.
		Thus $∑_{h∈𝒦}x🔑h⊗y🔑h⊙S^{t-1}Y$ contains
		\[*X⊗y🔑{i_1}⊙\dotsb⊙y🔑{i_{t-1}}⊙Y\]*
		for any $i_1…i_{t-1}∈𝒦$.
		Now we replicate the same procedure to replace $y_{i_{t-1}}$ by $Y$,
		and then replace $y_{i_{t-2}}$ by $Y$.
		In the end, we show that $∑_{h∈𝒦}x🔑h⊗y🔑h⊙S^{t-1}Y$ contains
		\[*X⊗Y⊙\dotsb⊙Y,\]*
		which is the domain of $ϕ$.
	\end{proof}

\subsubsection{Repairing scheme and $β$}

	Let $f∈[n]$ be the index that points to the failing node.
	Let $ℋ⊆[n]、\{f\}$ be the $d$ indices,
	$\abs{ℋ}=d$, that point to the helper nodes.
	When the $f$th node fails, each helper node $h∈ℋ$ sends the restriction
	\[*ϕ↾x🔑h⊗y🔑h⊙S^{t-2}Y⊙y🔑f∈(x🔑h⊗y🔑h⊙S^{t-2}Y⊙y🔑f)ˇ
		\tag{help message}\]*
	to the former.
	The $h$th node knows what to send because the help message is
	a further restriction (to a smaller subspace) of its node content.
	In particular, $x🔑h⊗y🔑h⊙S^{t-2}Y⊙y🔑f⊆x🔑h⊗y🔑h⊙S^{t-1}Y$.
	In sending the help message, the helper node needs to transmit
	$\dim(x🔑h⊗y🔑h⊙S^{t-2}Y⊙y🔑f)$ symbols.
	This is $\dim(S^{t-2}Y)$, or $\bi{k-2}{t-2}$ for short.
	So the repair bandwidth is $β=\bi{k-2}{t-2}$.
	Now the help messages are sent.
	
	Upon the reception of help messages,
	the failing node recalls its original content
	if the corresponding subspaces span its domain.
	More precisely, it relies on the following containment.
	
	\begin{pro}
		Take \mdsref{dt} as granted.
		For any $d$-subset $ℋ⊆[n]、\{f\}$,
		\[*∑_{h∈ℋ}x🔑h⊗y🔑h⊙S^{t-2}Y⊙y🔑f⊇x🔑f⊗y🔑f⊙S^{t-1}Y.\]*
	\end{pro}
	
	\begin{proof}
		Specialize \mdsref{dt} at $ℋ$.
		We obtain
		\[*∑_{h∈ℋ}x🔑h⊗y🔑h⊙S^{t-2}Y=X⊗S^{t-1}Y.\]*
		Citing the distributive law, we further deduce that
		\[*∑_{h∈ℋ}x🔑h⊗y🔑h⊙S^{t-2}Y⊙y🔑f=X⊗S^{t-1}Y⊙y🔑f.\]*
		The right-hand side is $X⊗y🔑f⊙S^{t-1}Y$;
		the latter clearly contains a subspace $x🔑f⊗y🔑f⊙S^{t-1}Y$.
		And we are done proving.
	\end{proof}
	
	\Cref{thm:primitive}'s proof is now complete up to \cref{axi:tallsym}
	(which is closely related to the field size part of the theorem statement).
	We defer that part until \cref{sec:alon}.
	One also sees that this subsection specializes
	to \cref{sec:t2sym} when $t=2$, and to \cref{sec:warmup} when $t=3$.
	The rest of this section is an alternative proof
	of \cref{thm:primitive} utilizing exterior power.

\subsection{Exterior power proof of Theorem~\ref{thm:primitive}}
	\label{sec:tallext}

\subsubsection{File format and \PM M$M$}

	Let $X$ be $𝔽^t$.
	Let $W$ be $𝔽^k$.
	Let $Λ^tW$ be the $t$th exterior power.
	Let the file $ϕ$ be any linear transformation
	\[*ϕ：X⊗Λ^tW⟶𝔽.\tag{file format}\]*
	The file size is thus the dimension of $X⊗Λ^tW$,
	which means $M=t\bi kt=k\bi{k-1}{t-1}$.
	
\subsubsection{Node configuration and $α$}

	For each $h∈[n]$,
	the $h$th node selects star vectors $x🔑h∈X$ and $w🔑h∈W$.
	And then the node stores the restriction
	\[*ϕ↾x🔑h⊗w🔑h∧Λ^{t-1}W.\tag{node content}\]*
	The node size $α$ is the dimension of this subspace,
	which is $\dim(w🔑h∧Λ^{t-1}W)$.
	Notice that we do not automatically equal it to $\dim(Λ^{t-1}W)$.
	This is because $w🔑h∧{}$ will eliminate a tensor
	whenever its $Λ^{t-1}W$-fragment is a multiple of $w🔑h$,
	viz.\ $w🔑h∧w🔑h∧ω=0$ for all $ω∈Λ^{t-2}W$.
	To rephrase it, the $Λ^{t-1}W$-fragment contributes,
	and only contributes, tensors ``up to $w🔑h$''.
	
	\begin{lem}\label{lem:quot}
		Let $w∈W$, then $w∧Λ^{t-1}W≅Λ^{t-1}(W/⟨w⟩)$ as a vector space.
	\end{lem}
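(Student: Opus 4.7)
The plan is to realize both $w \wedge \Lambda^{t-1}W$ and $\Lambda^{t-1}(W/\langle w\rangle)$ as quotients of $\Lambda^{t-1}W$ by the same subspace $w \wedge \Lambda^{t-2}W$, so that two applications of the first isomorphism theorem supply the desired isomorphism. On one side, wedging by $w$ defines a linear map $\psi \colon \Lambda^{t-1}W \to w \wedge \Lambda^{t-1}W$, $\omega \mapsto w \wedge \omega$, which is surjective by the very definition of its target. On the other side, the quotient map $\pi \colon W \twoheadrightarrow W/\langle w\rangle$ induces a surjection $\phi \colon \Lambda^{t-1}W \twoheadrightarrow \Lambda^{t-1}(W/\langle w\rangle)$ given on rank-one tensors by $w_1 \wedge \cdots \wedge w_{t-1} \mapsto \pi(w_1) \wedge \cdots \wedge \pi(w_{t-1})$; well-definedness and surjectivity follow from the multilinearity and anti-commutativity that characterize the exterior power.

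The core step is then to verify $\ker\psi = \ker\phi = w \wedge \Lambda^{t-2}W$. Both kernels visibly contain $w \wedge \Lambda^{t-2}W$: the $\psi$ side because $w \wedge w = 0$, and the $\phi$ side because $\pi(w) = 0$. For the reverse inclusions, I would extend $\{w\}$ to a basis $\bar{w}_1 \coloneqq w, \bar{w}_2, \ldots, \bar{w}_k$ of $W$ and uniquely decompose an arbitrary $\omega \in \Lambda^{t-1}W$ as $\omega = w \wedge \eta + \xi$ with $\eta \in \Lambda^{t-2}\spa\langle \bar{w}_2, \ldots, \bar{w}_k\rangle$ and $\xi \in \Lambda^{t-1}\spa\langle \bar{w}_2, \ldots, \bar{w}_k\rangle$. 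Then $\psi(\omega) = w \wedge \xi$ is a linear combination of distinct monomial basis elements of $\Lambda^t W$ (all containing $\bar{w}_1$) and therefore vanishes iff $\xi = 0$. Likewise $\phi(\omega) = \phi(\xi)$, and $\phi$ restricted to $\Lambda^{t-1}\spa\langle \bar{w}_2, \ldots, \bar{w}_k\rangle$ is the canonical isomorphism onto $\Lambda^{t-1}(W/\langle w\rangle)$ (both have the obvious matching monomial bases), so $\phi(\xi) = 0$ iff $\xi = 0$. In either case $\omega \in w \wedge \Lambda^{t-2}W$, as desired.

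The main obstacle is precisely this kernel identification; once in hand, the chain $w \wedge \Lambda^{t-1}W \cong \Lambda^{t-1}W / (w \wedge \Lambda^{t-2}W) \cong \Lambda^{t-1}(W/\langle w\rangle)$ drops out of the first isomorphism theorem, and a dimension check $\binom{k-1}{t-1} = \binom{k-1}{t-1}$ corroborates the count. The subtlety is that $w \wedge \Lambda^{t-2}W$ is defined through an \emph{auxiliary} $(t-2)$-fold tensor, so showing every element of the kernel has this form is not a tautology; committing to a basis that contains $w$ reduces the comparison to inspection of monomial supports, which is why I favor that path over a basis-free manipulation.
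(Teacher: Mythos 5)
Your proof is correct, and it fills in the verification that the paper compresses into the phrase ``one can confirm that this map is well-defined, linear, injective, and surjective.'' The paper writes down the isomorphism directly on rank-one elements $w\wedge w_1\wedge\cdots\wedge w_{t-1}\mapsto \pi(w_1)\wedge\cdots\wedge\pi(w_{t-1})$ of $w\wedge\Lambda^{t-1}W$ and leaves well-definedness as an exercise, whereas you sidestep that issue entirely by realizing both $w\wedge\Lambda^{t-1}W$ and $\Lambda^{t-1}(W/\langle w\rangle)$ as quotients of $\Lambda^{t-1}W$, where the two relevant maps ($\psi = w\wedge{-}$, and the functorial $\phi = \Lambda^{t-1}\pi$) are canonically defined on the whole space with no well-definedness to check; the entire content of the proof is then concentrated in the single identification $\ker\psi = \ker\phi = w\wedge\Lambda^{t-2}W$, which your monomial-support argument in a basis extending $\{w\}$ establishes cleanly. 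The resulting isomorphism is, of course, exactly the map the paper writes, so in substance this is the same isomorphism; what you gain is that the first-isomorphism-theorem scaffolding makes the verification mechanical rather than a hand-checked list of four properties, and it simultaneously records the useful fact that $\ker(w\wedge{-}) = w\wedge\Lambda^{t-2}W$ (an exact-sequence statement for the Koszul-type complex) rather than only the dimension count.
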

	
	\begin{proof}
		We claim the desired linear isomorphism
		\[*w∧w_1∧w_2∧\dotsb∧w_{t-1}
			⟼(w_1+⟨w⟩)∧(w_2+⟨w⟩)∧\dotsb∧(w_{t-1}+⟨w⟩).\]*
		One can confirm that this map is well-defined,
		linear, injective, and surjective.
	\end{proof}
	
	With the lemma, we argue that $α=\dim(w🔑h∧Λ^{t-1}W)
		=\dim(Λ^{t-1}(W/⟨w🔑h⟩))=\dim(Λ^{t-1}𝔽^{k-1})$.
	So the node size is indeed $α=\bi{k-1}{t-1}$.
	Next, we state the axioms concerning the star vectors.
	
	\begin{axi}\label{axi:tallext}
		The selection of the star vectors satisfies
		the following three \MDS/ conditions.
		\begin{itemize}
			\mdslabel[']{xt}	Any $t$ many $x🔑h$s span $X$.
				That is, $\spa⟨x🔑{h_1}…x🔑{h_t}⟩=X$
				for all distinct $h_1…h_t∈[n]$.
			\mdslabel{wt}	Any $k$ many $w🔑h$s span $W$.
				That is, $\spa⟨w🔑{h_1}…w🔑{h_k}⟩=W$
				for all distinct $h_1…h_k∈[n]$.
			\mdslabel{qt}	Any $d$ many $x🔑h⊗w🔑h∧Λ^{t-2}W$ span
				$X⊗Λ^{t-1}(W/⟨w🔑f⟩)$ for every other index~$f$.
				That is to say, $x🔑{h_1}⊗w🔑{h_1}∧Λ^{t-2}W+\dotsb+
					x🔑{h_d}⊗w🔑{h_d}∧Λ^{t-2}W+X⊗w🔑f∧Λ^{t-2}W$
				is $X⊗Λ^{t-1}W$ for all distinct $f,h_1…h_d∈[n]$.
		\end{itemize}
	\end{axi}
	
	\mdsref[']{xt} coincides with the one in \cref{axi:tallsym}.
	\Cref{axi:tallext} is a generalization of \cref{axi:pmext}.
	Remarks under \cref{axi:t3sym,axi:tallsym}
	(that we do not have efficient algorithm to generate star vectors)
	also apply here.
	See \cref{sec:field} for how we overcome this.

\subsubsection{Downloading scheme}

	Whether or not any $k$ node contents recover the file $ϕ$
	is equivalent to whether any $k$ corresponding domains span $ϕ$'s.
	We end up relying on this proposition.
	
	\begin{pro}
		Assume \mdsref[']{xt} and \mdsref{wt}.
		Let $𝒦⊆[n]$ be a $k$-subset.
		Then
		\[*∑_{h∈𝒦}x🔑h⊗w🔑h∧Λ^{t-1}W=X⊗Λ^tW.\]*
	\end{pro}
	
	\begin{proof}[Sketch]
		Similar strategy to \cref{pro:tallsymdown}.
		First, obtain $X⊗w🔑{i_1}∧w🔑{i_2}∧\dotsb∧w🔑{i_t}$.
		And then replace lower letter $w$'s by capital $W$, one after another.
		In doing so, use the free knowledge $w∧w=0$.
	\end{proof}

\subsubsection{Repairing scheme and $β$}

	The $h$th node, for each helper index, sends to the $f$th node,
	the failing node, the restriction
	\[*ϕ↾x🔑h⊗w🔑h∧Λ^{t-2}W∧w🔑f.\tag{help message}\]*
	Subspace $x🔑h⊗w🔑h∧Λ^{t-2}W∧w🔑f$ is contained
	in the node domain $x🔑h⊗w🔑h∧Λ^{t-1}W$.
	Subspace $x🔑h⊗w🔑h∧Λ^{t-2}W∧w🔑f$ has dimension
	$\dim(w🔑h∧Λ^{t-2}W∧w🔑f)$.
	Invoking \cref{lem:quot}, twice, we can write $w🔑h∧Λ^{t-2}W∧w🔑f
		≅Λ^{t-2}(W/\spa⟨w🔑h⟩)∧w🔑f≅Λ^{t-2}(W/\spa⟨w🔑h,w🔑f⟩)$.
	Hence the dimension is $\dim(W/\spa⟨w🔑h,w🔑f⟩)$ choose $t-2$,
	that will lead to $β=\bi{k-2}{t-2}$.
	The effectiveness of repairing is handled below.
	
	\begin{pro}
		Assume \mdsref{qt}.
		Let $f∈[n]$ and let $ℋ⊆[n]、\{f\}$ be such that $\abs{ℋ}=d$.
		Then
		\[*∑_{h∈ℋ}x🔑h⊗w🔑h∧Λ^{t-2}W∧w🔑f⊇x🔑f⊗w🔑f∧Λ^{t-1}W.\]*
	\end{pro}
	
	\begin{proof}
		Multiply \mdsref{qt} by ${}∧w🔑f$ from the right.
		Replace $X$ by $x🔑f$.
	\end{proof}
	
	This finishes the proof of \cref{thm:primitive}
	modulo field size for the second time.
	One also sees that this subsection specializes
	to \cref{sec:t2ext} when $t=2$.
	In the next section, we deal with the elephant in the room.

讀
\section
						 Star Selection and Field Size  						

\label{sec:field}

	We left open how nodes select star vectors such that
	\mdsref{x3}, \mdsref{y3}, and \mdsref{d3} in \cref{sec:warmup} hold.
	And then in \cref{sec:tallsym} we assume
	\mdsref{xt}, \mdsref{yt}, and \mdsref{dt} without specifying how.
	Nor did we disclose how to fulfill
	\mdsref[']{xt}, \mdsref{wt}, and \mdsref{qt} in \cref{sec:tallext}.
	In this section, we propose two approaches.
	One is an existence bound (as commented below \cref{axi:tallsym}).
	The other is by brute force.

\subsection{A loose bound}\label{sec:alon}

	Recall N.~Alon's combinatorial Nullstellensatz.
	
	\begin{lem}\label{lem:alon}
		\cite[Theorem~1.2]{Alon99}
		Let $𝔽$ be a field.
		Let $t_1…t_n$ be nonnegative integers.
		Let $f(x_1…x_n)$ be a polynomial over $𝔽$ in $n$ variables.
		Suppose $\deg f=t_1+\dotsb+t_n$ and
		the coefficient of $x_1^{t_1}\dotsm x_n^{t_n}$ in $f$ is nonzero.
		Let $S_1…S_n⊆𝔽$ be any subsets with $\abs{S_i}>t_i$ for all $i∈[n]$.
		Then $f(s_1…s_n)≠0$ for some $s_1∈S_1$, $s_2∈S_2$,
		and all the way up to $s_n∈S_n$.
	\end{lem}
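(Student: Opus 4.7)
The plan is to first establish a preliminary lemma: \emph{if a polynomial $g(x_1,\dotsc,x_n)$ over $\mathbb{F}$ satisfies $\deg_{x_i} g < \abs{S_i}$ for every $i$ and vanishes on every point of $S_1 \times \dotsb \times S_n$, then $g$ is the zero polynomial.} I would prove this by induction on $n$: the $n=1$ case is the familiar fact that a univariate polynomial of degree $< \abs{S_1}$ cannot have $\abs{S_1}$ distinct roots; for the inductive step, I view $g$ as a polynomial in $x_n$ whose coefficients lie in $\mathbb{F}[x_1,\dotsc,x_{n-1}]$, freeze $(s_1,\dotsc,s_{n-1}) \in S_1 \times \dotsb \times S_{n-1}$, apply the $n=1$ case to see that each coefficient vanishes on $S_1 \times \dotsb \times S_{n-1}$, and then invoke the inductive hypothesis on each coefficient separately.

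Once the preliminary lemma is in hand, the main theorem will follow from a degree-reduction argument. For each $i$ the polynomial $g_i(x_i) \coloneqq \prod_{s \in S_i}(x_i - s)$ is monic of degree $\abs{S_i}$, so every occurrence of $x_i^{\abs{S_i}}$ inside $f$ can be rewritten as a polynomial in $x_i$ of degree $< \abs{S_i}$ modulo the ideal generated by $g_1,\dotsc,g_n$. Iterating this over all variables should produce a reduced polynomial $\tilde f$ with $\deg_{x_i}\tilde f < \abs{S_i}$ for every $i$, and by construction $\tilde f$ agrees with $f$ at every point of $S_1 \times \dotsb \times S_n$. If I can verify that $\tilde f \not\equiv 0$, then the contrapositive of the preliminary lemma will deliver a point $(s_1,\dotsc,s_n) \in S_1 \times \dotsb \times S_n$ at which $\tilde f$, and hence $f$, is nonzero.

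The main obstacle is precisely this non-vanishing of $\tilde f$, which amounts to checking that the coefficient of $x_1^{t_1} \dotsm x_n^{t_n}$ survives the reduction. Each substitution replaces a monomial whose $x_i$-exponent is at least $\abs{S_i}$ by a sum of monomials of the same total degree or smaller; because $\deg f = t_1 + \dotsb + t_n$ already caps the total degree in $f$, only substitutions that preserve total degree could possibly affect the target coefficient, and these would have to come from monomials whose $x_i$-exponent is at least $\abs{S_i} > t_i$, a shape inconsistent with $x_1^{t_1}\dotsm x_n^{t_n}$ itself. Hence the coefficient remains intact under the reduction and stays nonzero in $\tilde f$, closing the argument.
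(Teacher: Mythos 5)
The paper does not give its own proof of this lemma; it is cited verbatim as Theorem~1.2 of Alon's paper, so there is nothing in the source to compare your argument against. That said, your proof is a correct, self-contained derivation of the Combinatorial Nullstellensatz, and it tracks Alon's original argument closely: your preliminary vanishing lemma is Alon's Lemma~2.1 (same induction on the number of variables), and the reduction modulo $g_i(x_i) = \prod_{s \in S_i}(x_i - s)$ is the same mechanism Alon packages as ideal membership (his Theorem~1.1) before deducing Theorem~1.2. One point can be tightened. Writing $g_i(x_i) = x_i^{\lvert S_i\rvert} - h_i(x_i)$ with $\deg h_i < \lvert S_i\rvert$, a single reduction step replaces $x_i^{a_i}$ (with $a_i \ge \lvert S_i\rvert$) by a polynomial of $x_i$-degree at most $a_i - 1$, so the total degree decreases \emph{strictly}, not merely ``same total degree or smaller'' as you wrote. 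This actually simplifies your final paragraph: no chain of reductions can ever produce a monomial of total degree $\deg f = t_1 + \dotsb + t_n$ from a distinct monomial of $f$, and the target monomial $x_1^{t_1}\dotsm x_n^{t_n}$ is itself already fully reduced because each $t_i < \lvert S_i\rvert$; hence its coefficient in $\tilde f$ equals its coefficient in $f$, which is nonzero by hypothesis. With that adjustment the argument is airtight and matches what the cited reference proves.
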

	
	A common use of the combinatorial Nullstellensatz is to
	insert variables into a square matrix that is presumed to be invertible.
	Imagine its determinant being a multivariate polynomial.
	If this polynomial is nonzero,
	one can find a top total-degree monomial within.
	Its degree will be the $t_1…t_n$ in the statement and
	the lower bounds on the sizes of $S_1…S_n$.
	Subsets $S_1…S_n$ are usually assumed to be the field $𝔽$ itself
	so $t_1…t_n$ serve as lower bounds on the field size.
	
	Frequently it is the case that all we need is a finite bound,
	so we do not have to keep track of $t$'s.
	In such circumstances, Alon's theorem reads:
	A nontrivial polynomial has a nonzero evaluation.
	Notice its elementary converse---%
	nonzero evaluation implies nonzero polynomial.
	
	Now what we demand is the existence of star vectors $x🔑1,y🔑1…x🔑n,y🔑n$
	that satisfy \mdsref{xt}, \mdsref{yt}, and \mdsref{dt}.
	Take \mdsref{xt} as an example.
	Whether or not any $t$ vectors among $x🔑1…x🔑n∈X$ span $X$
	is equivalent to whether any $t$ vectors
	form a $t$-by-$t$ matrix with a nonzero determinant.
	Let $f(x🔑1…x🔑t)$ be the determinant
	written as a polynomial in the coordinates of $x🔑1…x🔑t$.
	Then we want to show
	\[*∏_{i_1…i_t}f(x🔑{i_1}…x🔑{i_t})≠0\]*
	as a polynomial.
	This is true because plugging in Reed--Solomon columns
	results in a nonzero evaluation.
	Similarly, for \mdsref{yt},
	let $g(y🔑1…y🔑{k-t+1})$ be the determinant in terms of $y$'s.
	Then
	\[*∏_{i_1…i_{k-t+1}}g(y🔑{i_1}…y🔑{i_{k-t+1}})\]*
	is, again, not a zero polynomial due to Reed--Solomon codes.
	
	Up to this point, it remains to show that
	\[*∏_{i_1…i_d}h(x🔑{i_1},y🔑{i_1}…x🔑{i_d},y🔑{i_d})\]*
	is nonzero, where $h$ is the determinant corresponding to \mdsref{dt}.
	This one is hard, because we do not know any code that
	guarantees \emph{not} to evaluate $h$ to zero.
	Nonetheless, there is a shenanigan to overcome small cases.
	
	\begin{alg}\label{alg:rand}
		We executed the following for all $α≤3003$ cases.
		\begin{enumerate}
			\item Let $𝔽'$ be a finite field of small prime order.
				For instance $\abs{𝔽'}=127$.
			\item Let $x🔑1…x🔑d∈𝔽'^t$ and $y🔑1…y🔑d∈𝔽'^{k-t+1}$
				be random vectors of the prescribed lengths
				drawn from any ensemble.
			\item Select a basis $¯{η}_1…¯{η}_{\bi{k-2}{t-2}}$
				of $S^{t-2}𝔽'^{k-t+1}$.
				Select for $S^{t-1}𝔽'^{k-t+1}$, too.
				(The standard ones in \cref{sec:algebra} are preferred.)
			\item Expand $x🔑i⊗y🔑i⊙¯{η}_j$
				for all $i∈[d]$ and all $j∈［\bi{k-2}{t-2}］$
				as very long vectors in $𝔽'^{dβ}$,
				and stack them to form a $dβ$-by-$dβ$ matrix.
			\item Compute the determinant $h(x🔑1,y🔑1…x🔑d,y🔑d)∈𝔽'$
				of the matrix.
				If it is nonzero,
				then $h$ has a nonzero evaluation and hence is nonzero.
				We declare a pass.
				Otherwise redraw random vectors and start over.
		\end{enumerate}
	\end{alg}
	
	Remarks:
	All $α≤3003$ cases passed;
	some did require a second run as the determinant vanished in the first run.
	Computing over a finite field $𝔽'$ in place of $ℚ$ (or floating numbers)
	is essential because the arithmetic is exact and fast.
	The sole purpose of the field $𝔽'$ is to witness $h≠0$ over $ℤ$,
	so it does not have to be the same field $𝔽$ we define the actual code over.
	A smaller field causes a faster computation with a lower pass rate.
	The result of \cref{alg:rand} can be summarized as follows.
	
	\begin{pro}\label{pro:nonzero}
		For all $α≤3003$ cases, the determinant
		$h(x🔑1,y🔑1…x🔑d,y🔑d)$ is not the zero polynomial over $ℤ$.
	\end{pro}
	
	\Cref{pro:nonzero,lem:alon} jointly imply \cref{axi:tallsym}
	for all $α≤3003$ cases, which completes the proof of
	\cref{thm:primitive} on the basis of \cref{sec:tallsym}.
	And we are done proving our main theorem
	if readers are satisfied with $α≤3003$.
	Otherwise, here is a conditional result.
	
	\begin{pro}
		If, for some $k,d,t$, the determinant $h(x🔑1,y🔑1…x🔑d,y🔑d)$ is
		not the zero polynomial in the coordinates of $x🔑1,y🔑1…x🔑d,y🔑d$,
		then an $(n,k,d,α)$-\MSR/ code exists over some sufficiently large field.
		In particular, if $h$ is never a zero polynomial,
		then \cref{con:all} holds.
	\end{pro}
	
	Remark:
	Sometimes, in place of the combinatorial Nullstellensatz,
	the DeMillo--Lipton--Schwartz--Zippel lemma is cited.
	The lemma reads:
	Let $S$ be a finite subset of a field $𝔽$.
	Let $f(x_1…x_n)$ be a degree-$t$ polynomial over $𝔽$.
	Select $s_1…s_n∈S$ independently, uniformly at random.
	Then $f(s_1…s_n)=0$ with probability at most $t/\abs S$.
	This lemma gives a strictly worse bound on the field size
	since $t$ is the ``$l^1$-degree'',
	while the Nullstellensatz deals with the ``$l^∞$-degree''.

\subsection{Brute force}\label{sec:force}

	\pgfplotstableread[header=false]{
		t	\abs{𝔽}	n	k	d		α	β	M		
		2	O(n)	>d	k	2(k-1)	k-1	1	k(k-1)	
		3	16		9	5	6		6	3	30		
		·	256		13	·	·		·	·	·		
		·	2048	17	·	·		·	·	·		
		·	32		11	7	9		15	5	105		
		·	256		13	·	·		·	·	·		
		·	1024	15	·	·		·	·	·		
		·	32		16	9	12		28	7	252		
		·	64		16	11	15		45	9	495		
		·	64		19	13	18		66	11	858		
		4	32		10	7	8		20	10	140		
		·	256		12	·	·		·	·	·		
		·	64		13	10	12		84	28	840		
		∫·	∫128	∫14	∫·	∫·		∫·	∫·	∫·		^^M 
		·	512		15	·	·		·	·	·		
		5	128		11	9	10		70	35	630		
		k	2		k+1	k	k		1	1	k		
		·	n		>k	·	·		·	·	·		
	}\pgfSageforce
	\begin{table}
		\caption{
			Parameter tuples with known instances found by brute force
			(jointly with some clever heuristics). 
			Omitted entries inherit values from upper neighbors.
			One sees that the field size grows exponentially in the node number.
		}\label{tab:sageforce}
		\pgfplotstablegetrowsof\pgfSageforce
		\PMT\lastrow{\pgfplotsretval-1}
		\pgfplotsforeachungrouped\r in{2,...,\lastrow}{
			\def\textcdot{·}
			\pgfplotstablegetelem{\numexpr\r-1}3\of\pgfSageforce
			\ifx\textcdot\pgfplotsretval
			\pgfplotstablegetelem\r3\of\pgfSageforce
			\ifx\textcdot\pgfplotsretval\else
				\pgfplotstableset{every row no \r/.style={before row=\krule}}
			\fi\fi
			\pgfplotstablegetelem\r0\of\pgfSageforce
			\ifx\textcdot\pgfplotsretval\else
				\pgfplotstableset{every row no \r/.style={before row=\trule}}
			\fi
			\def\textintcdot{∫·}\def\int{\color{white}}
			\ifx\textintcdot\pgfplotsretval
				\pgfplotstableset{
					every row no \r/.style={before row=\rowcolor{Periwinkle}}}
			\fi
		}
		\def\trule{\cmidrule(lr){1-8}}	\def\krule{\cmidrule(lr{1.5em}){2-8}}
		\centering\pgfplotstabletypeset[
			every head row/.style=output empty row,
			every row no 0/.style={before row=\toprule,after row=\midrule},
			every last row/.style={after row=\bottomrule},
			assign cell content/.style={@cell content/.initial=$#1$},
		]\pgfSageforce
	\end{table}
	
	Throughout \cref{axi:pmsym,axi:pmext,axi:t3sym,axi:tallsym,axi:tallext},
	the first two conditions are always easy to fulfill.
	One queries the list of $[n,t]$-\MDS/ codes over a chosen field
	and let $x🔑h$ be the columns of the generator matrix of a chosen code.
	Similarly, one chooses an $[n,k]$ (or $[n,k-t+1]$)-\MDS/ code
	and let $w🔑h$ (or $y🔑h$) be the columns of its generator matrix.
	However, that does not say anything about
	whether the third, be it \mdsref{dt} or \mdsref{qt}, is met.
	To demonstrate our strategy for generating practical codes,
	pretend that we want to build an $(n,5,6,6)$-\MSR/ code.
	
	\begin{alg}
		Here is what we did.
		\begin{enumerate}
			\item	Let $𝔽$ be of order $16$;
				it could be realized by $𝔽_{16}≔𝔽_2[z]/(z^4+z+1)$.
			\item	Each node chooses a unique point $a_h∈𝔽$.
			\item	The $x$-vectors are of the form
				$x🔑h≔\bma{1&a_h^2&a_h^6}$.
			\item	The $y$-vectors are of the form
				$y🔑h≔\bma{1&a_h&a_h^3}$.
			\item	Enumerate and assert \mdsref{xt},
				\mdsref{yt}, and \mdsref{dt}.
		\end{enumerate}
	\end{alg}
	
	The largest pool of points we can find is
	$\{0,z^3,z^6,z^{-3},z^{-6},z^{-1},z^{-2},z^{-4},z^{-8}\}$,
	as is posed in \cref{sec:explicit}.
	Therefore, we announce that
	there exists an $(9,5,6,6)$-\MSR/ code over $𝔽_{16}$.
	On the basis of this example, variables of this ensemble are as follows.
	\begin{itemize}
		\item	$n$ the number of nodes depends on how many points
			can be added to the point pool before
			\mdsref{xt}, \mdsref{yt}, or \mdsref{dt} breaks.
		\item	One should try a larger field
			in order to find a larger collection of points.
		\item	One can try a different pattern for $x$-vectors,
			for instance $[1\ a_h\ a_h^2]^⊤$ and $[1\ a_h^3\ a_h^4]^⊤$.
		\item	One may try a different pattern for $y$-vectors.
		\item	One may try to fulfill \mdsref[']{xt}, \mdsref{wt},
			and \mdsref{qt} instead, i.e., the skew version.
	\end{itemize}
	It is unclear at this stage
	what is the best practice to find the point pool.
	So far brute force works better than any heuristics alone.
	We devoted some computing resources and
	the results are listed in \cref{tab:sageforce}.
	

\subsection{General polynomial shorthand}\label{sec:t3poly}

	Depending on how star vectors are selected,
	it is possible to further simplify the code description.
	For instance, for the example code in \cref{sec:force},
	one can identify $ϕ↾¯x_1⊗S^3Y,ϕ↾¯x_2⊗S^3Y,ϕ↾¯x_3⊗S^3Y$
	with symmetric polynomials
	$𝘴_1(y,y',y''),𝘴_2(y,y',y''),𝘴_3(y,y',y'')∈𝔽[y,y',y'']_3$
	of tri-degree at most $(3,3,3)$ without quadratic terms.
	Here $¯x_1,¯x_2,¯x_3$ are a basis of $X$.
	Then the node content becomes the specialization
	$(𝘴_1+a_h^2𝘴_2+a_h^6𝘴_3)(a_h,y',y'')$.
	The help message becomes $(𝘴_1+a_h^2𝘴_2+a_h^6𝘴_3)(a_h,y',a_f)$.

讀
\section
								   Discussion   								

	\cite[section~IV]{RSKR09} argued that regenerating codes
	at the \MBR/ point must specialize to their proposal.
	Subsequently, the product-matrix construction at the \MBR/ point
	must be a direct generalization of the former proposal.
	This phenomenon is seen anew when
	determinant code generalizes layered code.
	The \cite{RSKR09}--product-matrix pair and
	the layered--determinant pair overlap at the $k=d$ \MBR/ point.
	
	On the other side, at the \MSR/ point,
	the product matrix is not succeeded by any more general code until now.
	We propose Atrahasis codes as a general code
	whose symmetric version includes the product matrix.
	Coincidentally, the exterior version of Atrahasis
	intersects the layered--determinant pair at the \MSR/ point.
	This inspires us to wonder whether
	the symmetric and exterior versions are in fact the same construction.
	
	Among \cref{tab:sageforce} there is a row highlighted.
	It is a $(14,10,12,84)$-\MSR/ code over the field of order $128$.
	It is, in particular, a $[14,10]$-\MDS/ code
	over the alphabet $𝔽_{128}^{84}$,
	which detects four errors or corrects two.
	It has parameters
	\[*\bma{
		1	&	d-k+1	&	α	&	α\log_2\abs{𝔽}	\\
		t	&	d		&	dβ	&	dβ\log_2\abs{𝔽}	\\
		k	&	*		&	M	&	M\log_2\abs{𝔽}	
	}=\bma{
		1	&	3		&	84	&	588				\\
		4	&	12		&	336	&	2352			\\
		10	&	*		&	840	&	5880			
	}.\]*
	For comparison, the improved Hadoop Distributed File System \cite{DD17}
	is a $[14,10]$-\MDS/ code over the field of order $256$.
	It has parameters
	\[*\bma{
		1	&	d-k+1	&	α	&	α\log_2\abs{𝔽}	\\
		*	&	d		&	*	&	dβ\log_2\abs{𝔽}	\\
		k	&	*		&	M	&	M\log_2\abs{𝔽}	
	}=\bma{
		1	&	4	&	1		&	8				\\
		*	&	13	&	*		&	54				\\
		10	&	*	&	10		&	80				
	}.\]*
	Note that the latter matrix is not of rank one
	because Hadoop is not an \MSR/ code to begin with.
	One sees that the $(14,10,12,84)$-Atrhasis has a huge \subpack/.
	But when it comes to homogeneous measures, such as $dβ/M$,
	our $2352/5880=40\%$ is much better than $54/80=67.5\%$.

\appendix

讀
\section
					   Bonus Property: Repairing Two Nodes at Once  					

\label{app:multiple}

	\begin{figure}
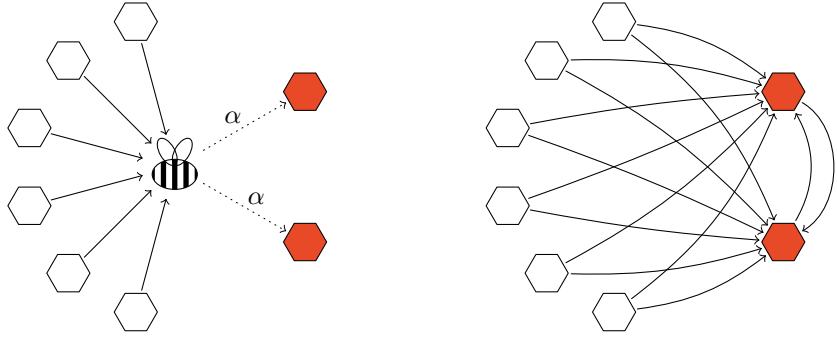

		$$\tikz[nodes={inner sep=1},shorten <=1,shorten >=1]{
			\begin{scope}
				\draw(-.1,.2)ellipse[x radius=.1,y radius=.2,rotate=30];
				\draw(.1,.2)ellipse[x radius=.1,y radius=.2,rotate=-30];
				\draw[clip](0,-.1)ellipse(.3 and .2);
				\draw[line width=2,scale=.15]
					foreach\x in{-3,...,3}{(\x,-2)--+(0,4)};
			\end{scope}
			\draw(0,0)node[circle,inner sep=8](bee){};
			\draw foreach\d in{1,...,6}{(\d*30+75:2)苯(D\d)(D\d)edge[->](bee)};
			\draw foreach\c in{1,...,2}{
				(\c*60-90:2)苯[fill=UCO](C\c)
				(bee)edge[->,dotted]node[auto]{$α$}(C\c)
			};
		}\hskip6em
		\tikz[nodes={inner sep=1},shorten <=1,shorten >=1]{
			\iflabor
			\draw foreach\d in{1,...,6}{
				(\d*30+75:2)苯(D\d)(D\d)\pgfextra{\PMS\ang{21-6*\d}}
				edge[->,bend left=\ang](C1)edge[->,bend left=\ang](C2)
			};
			\draw foreach\c in{1,2}{(\c*60-90:2)苯[fill=UCO](C\c)};
			\draw(C1)edge[->,bend left=-30](C2)(C2)edge[->,bend left=60](C1);
			\fi
		}$$
		\caption{
			To the left: centralized model.
			Healthy nodes send help messages to an agent.
			The total bandwidth $γ\ce(2,6)$ is
			the number of symbols passing solid lines.
			To the right: cooperative model.
			Healthy nodes send help messages directly to the failing nodes,
			while the latter can help each other.
			The total bandwidth $γ\co(2,6)$ is
			the number of symbols passing solid lines.
		}\label{fig:bee}
	\end{figure}
	
	In general, failures separate in time.
	But there may be circumstances where multiple nodes fail at once.
	\Cref{dfn:regenerate,dfn:msr} do not cover the case
	when there are more nodes to be repaired.
	What \cref{dfn:regenerate,dfn:msr} guarantee is that,
	A, so far as there are $k$ healthy nodes left, the file is safe.
	B, if there are $d$ healthy nodes left,
	one may call the repairing protocol for each and every failing node.
	This does not capture how efficient the repairing can be done.
	For that, two definitions are made in \cite{CJMRS13, SH13},
	and related in \cite{YB19}.
	
	\begin{dfn}\label{dfn:cooperative}
		Let there be $c$ failing nodes, and $d$ nodes are to help.
		The \emph{centralized (total) bandwidth} $γ\ce(c,d)$ is the total number
		of symbols the $d$ helper nodes send to a central agent,
		who will repair the failings after gathering all help messages.
		The \emph{cooperative (total) bandwidth} $γ\co(c,d)$ is
		how many symbols are sent over the network,
		from a helper node or a failing one, that contribute to repairing.
	\end{dfn}
	
	See \cref{fig:bee} for illustration.
	Note that we do not normalize the total bandwidth
	by the number of helping, or failing, nodes.
	One reason is that it is unclear what the denominator should be.
	When there is one failing node, $γ\ce(1,d)=γ\co(1,d)=dβ$.
	We now demonstrate how a $t=3$ Atrahasis code from \cref{sec:warmup}
	repairs two failing nodes.
	Recall the parameters $(d,α,β,M)=\(÷{3(k-1)}2,\bi{k-1}2,k-2,kα\)$
	and definitions $X≔𝔽^3$ and $Y≔𝔽^{k-2}$.

\subsection{The \PM t = 3$t=3$ Atrahasis code under centralized model}

	Say the $f$th and $g$th nodes fail and
	the first $d$ nodes will help through an agent (centralized model).
	To the $f$th node, the $h$th healthy node wants to send
	the restriction $ϕ↾x🔑h⊗y🔑h⊙Y⊙y🔑f$;
	to the $g$th it wants to send  $ϕ↾x🔑h⊗y🔑h⊙Y⊙y🔑g$.
	Since the agent will take care of the redistribution,
	the $h$th will simply send $ϕ↾x🔑h⊗y🔑h⊙Y⊙⟨y🔑f,y🔑g⟩$ to the agent.
	The dimension of the subspace $x🔑h⊗y🔑h⊙Y⊙⟨y🔑f,y🔑g⟩$,
	$2k-5$, is the number of symbols being sent.
	Multiplied by the number of helper nodes,
	\[*d(2k-5)=÷32(k-1)(2k-5)=3k^2-÷{21}2k+÷{15}2\]*
	is the total bandwidth $γ\ce(2,d)$ if done in this way.
	According to \cite{YB19}, the least possible bandwidth
	as a multiple of $α$ is
	\[*÷{2d}{d+2-k}·α=÷{3(k-1)}{3(k-1)/2+2-k}·÷{(k-1)(k-2)}2=3k^2-15k+O(1).\]*
	We are $9k/2$ away from the optimal value.
	Remark:
	\cite{YB19} achieve the optimal bandwidth $2dα/(d+2-k)$
	for a significantly greater $α$.
	
	In fact, Atrahasis code can do better.
	Imagine that when the agent receives the help messages,
	it will first reconstruct the $f$th node.
	Once done, the virtual $f$th node in agent's memory
	is able to send help message $ϕ↾x🔑f⊗y🔑f⊙Y⊙y🔑g$.
	In other words, one failing node becomes the helper of the other failing,
	so the second ($g$th) node requires $d-1$, not $d$, helpers.
	Therefore, the helper nodes should and will send the following messages:
	For $1≤h≤d-1$, the $h$th node sends $ϕ↾x🔑h⊗y🔑h⊙Y⊙⟨y🔑f,y🔑g⟩$
	to the agent, while the $d$th node sends $ϕ↾x🔑h⊗y🔑h⊙Y⊙y🔑f$.
	Then the reduced $γ\ce(2,d)$ is
	\[*(d-1)(2k-5)+1(k-2)=3k^2-÷{23}2k+÷{21}2.\]*
	
	In actuality, Atrahasis can do even better.
	Recall (in \cref{sec:warmup}) that
	the $f$th node learns $ϕ↾X⊗Y⊙Y⊙y🔑f$ from the help messages
	before it restricts to its usual content.
	For two failing nodes, the agent needs,
	and only needs, to learn $ϕ↾X⊗Y⊙Y⊙⟨y🔑f,y🔑g⟩$.
	This subspace has dimension $3(k-2)^2$.
	Regardless of which node should send what,
	we claim our final bandwidth
	\[*γ\ce(2,d)=3(k-2)^2=3k^2-12k+12.\]*
	This is $3k$ away from the optimality.

\section{Benchmarks}\label{app:benchmarks}

	In \crefrange{fig:alphagrid}{fig:EM19grid},
	we list some $α$ for small parameters from this and various other works.
	Compare them with lower bounds in \cref{fig:lowerbound}.
	See also \cref{tab:paradigm,tab:sageforce}.
	Note that Atrahasis's $α$ does not depend on $n$ while other works
	either stick to $n=d+1$ or have $α→∞$ as $n-d→∞$.
	In the following manner our \subpack/ level is polynomial
	in the lower bound (\cref{thm:lower}).
	
	\begin{thm}
		Let $t=d/(d-k+1)$.
		Let $α=\bi{k-1}{t-1}$.
		Then
		\[*α≤2^{k-1},(k-1)^{t-1}.\]*
		Moreover, if $d,k,t,α$ go to infinity with $d-k+1<C$ bounded, then
		\[*α<\exp（÷{k-1}{4(d-k+1)}）^{4C\log 2}.\]*
	\end{thm}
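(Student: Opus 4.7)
My plan is to prove the first statement $\alpha \leq 2^{k-1}$ and $\alpha \leq (k-1)^{t-1}$ by two elementary combinatorial estimates. A single binomial coefficient is dominated by the full binomial sum $\sum_{j=0}^{k-1}\binom{k-1}{j} = 2^{k-1}$, yielding the first bound. For the second, I would write $\binom{k-1}{t-1} = \frac{(k-1)(k-2)\cdots(k-t+1)}{(t-1)!} \leq \frac{(k-1)^{t-1}}{(t-1)!} \leq (k-1)^{t-1}$, using only that $(t-1)! \geq 1$.

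For the asymptotic bound, the strategy is to simplify the right-hand side to a clean power of $2$ and then invoke the first estimate. Setting $L = \exp\bigl(\frac{k-1}{4(d-k+1)}\bigr)$, I would compute
\[
L^{4C\log 2} = \exp\!\left(\frac{4C\log 2 \cdot (k-1)}{4(d-k+1)}\right) = \exp\!\left(\frac{C(k-1)\log 2}{d-k+1}\right) = 2^{C(k-1)/(d-k+1)}.
\]
It then suffices to verify the chain $\alpha \leq 2^{k-1} < 2^{C(k-1)/(d-k+1)}$. The strict inequality on the right is equivalent to the exponent condition $k-1 < C(k-1)/(d-k+1)$, and under the standing assumption $k \geq 2$ this reduces in turn to $d-k+1 < C$---precisely the hypothesis of the theorem.

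I do not anticipate any real obstacle: both parts rely on one-line manipulations. The only point worth care is the strict (versus weak) inequality in the second display, which holds because $k \geq 2$ is built into the global setup $n-1 \geq d \geq k \geq 2$ of the paper. It may also be worth remarking in the final write-up that the bound is tight in spirit: as $d-k+1$ approaches $C$ the exponent ratio $C/(d-k+1)$ approaches $1$, so the multiplicative gap $L^{4C\log 2}/\alpha$ shrinks, reflecting that Atrahasis's sub-packetization is polynomial (with constant exponent depending only on $C$) in the Alrabiah--Guruswami lower bound.
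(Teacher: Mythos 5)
The paper states this theorem in Appendix~B without giving a proof, so there is no ``paper approach'' to compare against; your derivation must stand on its own, and it does. Your two elementary bounds $\binom{k-1}{t-1} \le \sum_j \binom{k-1}{j} = 2^{k-1}$ and $\binom{k-1}{t-1} = \frac{(k-1)(k-2)\cdots(k-t+1)}{(t-1)!} \le (k-1)^{t-1}$ are correct (and note that $t\ge 2$ always holds here since $d\ge k\ge 2$ forces $t = 1+(k-1)/(d-k+1) \ge 2$, so the product in the numerator is nonempty). The rewriting
\[
\Bigl(\exp\tfrac{k-1}{4(d-k+1)}\Bigr)^{4C\log 2} = \exp\Bigl(\tfrac{C(k-1)\log 2}{d-k+1}\Bigr) = 2^{C(k-1)/(d-k+1)}
\]
is exact, and the chain $\alpha \le 2^{k-1} < 2^{C(k-1)/(d-k+1)}$ follows from $k-1>0$ and the strict hypothesis $d-k+1 < C$, as you say. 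One small caution on your closing aside: the quantity that provably shrinks to~$1$ as $d-k+1\to C$ is the lower bound $2^{(k-1)(C/(d-k+1)-1)}$ on the ratio $L^{4C\log 2}/\alpha$ coming from the relaxation $\alpha\le 2^{k-1}$; the actual ratio may remain large because that relaxation is generally far from tight. I would either drop the ``tight in spirit'' remark or rephrase it to refer to the chain of inequalities rather than to $\alpha$ itself.
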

	
	\def\rmin{1}\def\rmax{5}\def\kmin{2}\def\kmax{17}
	\begin{figure}
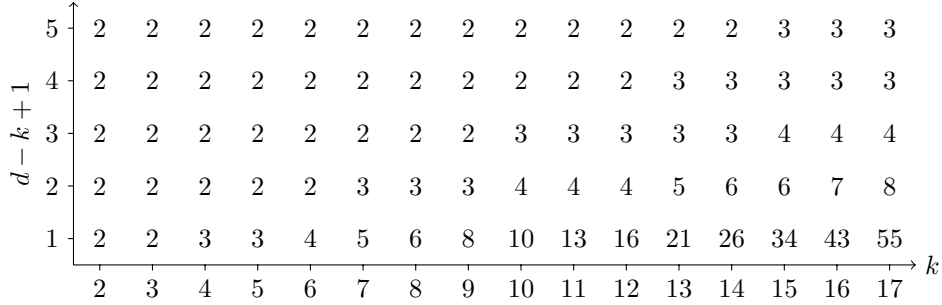

		$$\tikz[scale=.7]{
			\draw[<->]
				foreach\r in{\rmin,...,\rmax}{
					(1.5,\r)--+(-.1,0)node[left]{$\r$}
				}
				foreach\k in{\kmin,...,\kmax}{
					(\k,.5)--+(0,-.1)node[below]{$\k$}
				}
				(\kmin-1.5,\rmax/2+.5)node[rotate=90]{$d-k+1$}
				(\kmin-.5,\rmax+.5)|-(\kmax+.5,\rmin-.5)node[right]{$k$};
			\clip(\kmin-.5,\rmin-.5)rectangle(\kmax+.7,\rmax+.5);
			\iflabor
				\foreach\r in{1,...,\rmax}{
					\foreach\k in{2,...,\kmax}{
						\PMT\lowerbound{ceil(exp((\k-1)/4/\r))}
						\draw(\k,\r)node{$\lowerbound$};
					}
				}
			\fi
		}$$\caption{
			Lower bounds of the \subpack/ level $α$ from \cite[Theorem~1]{AG19}.
			The horizontal axis is $k$.
			The vertical axis is $d-k+1$ as $d<k$ cases are meaningless
			(and $d-(k-1)=d/t=(k-1)/(t-1)=α/β$
				bears more semantics than $d-k$ does).
			Omit $k=1$ for triviality.
			It is later discovered that their bound is not valid when $k=d$.
		}\label{fig:lowerbound}
	\end{figure}
	
	\begin{figure}
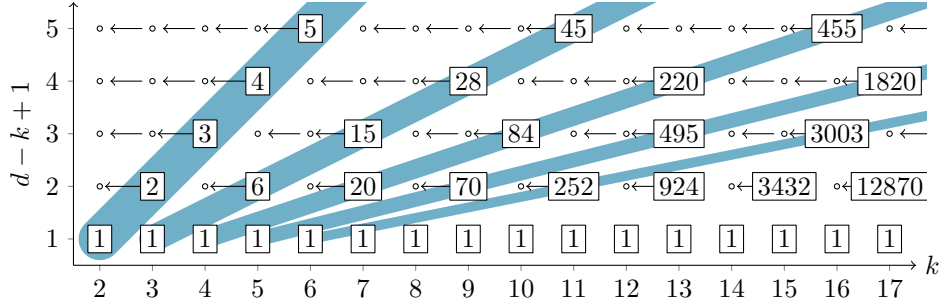

		$$\tikz[scale=.7]{
			\fill[white](\kmin-.5,\rmin-.5)rectangle(\kmax+.7,\rmax+.5);
			\draw[help lines,nodes=black]
				foreach\r in{\rmin,...,\rmax}{
					(1.5,\r)--+(-.1,0)node[left]{$\r$}
				}
				foreach\k in{\kmin,...,\kmax}{
					(\k,.5)--+(0,-.1)node[below]{$\k$}
				};
			\draw[<->]
				(\kmin-1.5,\rmax/2+.5)node[rotate=90]{$d-k+1$}
				(\kmin-.5,\rmax+.5)|-(\kmax+.5,\rmin-.5)node[right]{$k$};
			\clip(\kmin-.5,\rmin-.5)rectangle(\kmax+.7,\rmax+.5);
			\iflabor
				\begin{scope}[Gray-blue]
					\draw[line width=8mm/sqrt(2)](2,1)--(7,6);
					\draw[line width=8mm/sqrt(5)](3,1)--(13,6);
					\draw[line width=8mm/sqrt(10)](4,1)--(19,6);
					\draw[line width=8mm/sqrt(17)](5,1)--(21,5);
					\draw[line width=8mm/sqrt(37)](6,1)--(21,4);
				\end{scope}
				\foreach\r in{1,...,\rmax}{
					\foreach\k in{\kmax,...,2}{
						\PMT\t{ceil((\k-1)/\r+1)}	\PMT\rmndr{mod(\k-1+\r,\r)}
						\ifcase\rmndr
							\newcount\a\a1
							\ifnum\t>1
								\foreach\i in{1,...,\numexpr\t-1}{
									\multiply\a\numexpr\k-\i\global\divide\a\i
								}
							\fi
							\node(\k-\r)at(\k,\r)
								[inner sep=2,draw,fill=white]{$\the\a$};
						\else
							\PMT\kk{\k+1}
							\ifcsname pgf@sh@ns@\kk-\r\endcsname
								\draw
									(\k,\r)circle(.05)+(.1,0)edge[<-](\kk-\r);
							\else
								\draw(\k,\r)circle(.05)+(.2,0)edge[<-]+(.8,0);
							\fi
						\fi
					}
				}
			\fi
		}$$\caption{
			Upper bounds by Atrahasis code,
			our proposal of \MSR/ codes with arbitrary $(n,k,d)$.
			A box encloses the \subpack/ level $α$ of the primitive construction 
			at that point, which does not depend on $n$.
			Boxes on the same strip share the same $t$.
			From left to right $t=2,3…6$;
			the rest are omitted.
			One arrow means shortening once.
		}\label{fig:alphagrid}
	\end{figure}
	
	\begin{figure}
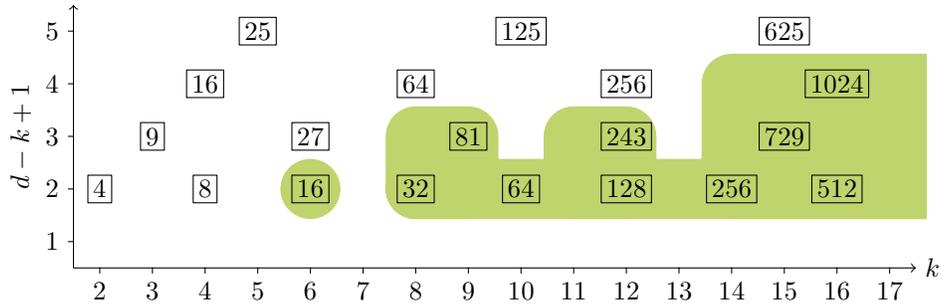

		$$\tikz[scale=.7]{
			\draw[<->]
				foreach\r in{\rmin,...,\rmax}{
					(1.5,\r)--+(-.1,0)node[left]{$\r$}
				}
				foreach\k in{\kmin,...,\kmax}{
					(\k,.5)--+(0,-.1)node[below]{$\k$}
				}
				(\kmin-1.5,\rmax/2+.5)node[rotate=90]{$d-k+1$}
				(\kmin-.5,\rmax+.5)|-(\kmax+.5,\rmin-.5)node[right]{$k$};
			\clip(\kmin-.5,\rmin-.5)rectangle(\kmax+.7,\rmax+.5);
			\iflabor
			\filldraw[Citron,line width=8mm](6,2)--+(0,.001)(8,2)--(18,2)
				(8,3)rectangle(9,2)(11,3)rectangle(12,2)(14,4)rectangle(18,2);
			\foreach\s in{2,...,9}{
				\foreach\q in{2,3,4,5,7,8,9}{
					\PMT\k{(\s-1)*\q}
					\ifnum\k>\kmax\else
						\PMT\n{\s*\q}	\PMT\d{\n-1}	\PMT\a{\q^\s}
						\draw(\k,\d-\k+1)node[inner sep=2,draw]{\a};
					\fi
				}
			}
			\fi
		}$$
		\caption{
			\cite{SAK15} gave \MSR/ codes with parameter quadruple
			$(n,k,d,α)=(sq,n-q,n-1,q^s)$ where $q$ is a prime power.
			Some $α$'s are put in boxes.
			Shortening applies but is omitted
			from this and the remaining figures.
			The shaded area is where their $α$ falls below ours.
		}\label{fig:SAK15grid}
	\end{figure}
	
	\begin{figure}
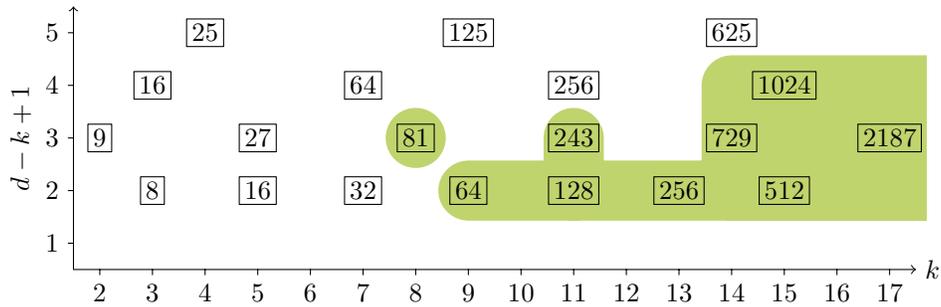

		$$\tikz[scale=.7]{
			\draw[<->]
				foreach\r in{\rmin,...,\rmax}{
					(1.5,\r)--+(-.1,0)node[left]{$\r$}
				}
				foreach\k in{\kmin,...,\kmax}{
					(\k,.5)--+(0,-.1)node[below]{$\k$}
				}
				(\kmin-1.5,\rmax/2+.5)node[rotate=90]{$d-k+1$}
				(\kmin-.5,\rmax+.5)|-(\kmax+.5,\rmin-.5)node[right]{$k$};
			\iflabor
			\clip(\kmin-.5,\rmin-.5)rectangle(\kmax+.7,\rmax+.5);
			\filldraw[Citron,line width=.8cm](8,3)--+(0,.001)(9,2)--(18,2)
				(11,3)--+(0,-1)(14,4)rectangle(18,2);
			\foreach\s in{2,...,9}{
				\foreach\q in{2,3,4,5,7,8,9}{
					\PMT\n{\s*\q}	\PMT\d{\n-2}	\PMT\k{\d-\q+1}
					\ifnum\k>\kmax\else
						\PMT\a{\q^\s}
						\draw(\k,\d-\k+1)node[inner sep=2,draw]{\a};
					\fi
				}
			}
			\fi
		}$$
		\caption{
			\cite{RKV16} extended \cite{SAK15}'s result
			via providing \MSR/ codes with parameters
			$(n,k,d,α)=(sq,n-q-m,n-1-m,q^s)$.
			In other words, they allows $n>d+1$.
			We plot the $α$ when $m=1$ (i.e., when $n=d+2$);
			and shade area where their $α$ falls below ours.
			Note that for any fixed $k,d$ such that $k<d$,
			their $α$ exceeds ours as $n-d$ increases.
		}\label{fig:RKV16grid}
	\end{figure}
	
	\begin{figure}
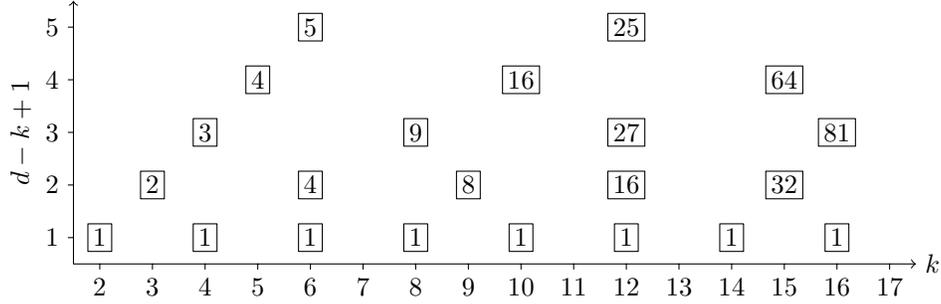

		$$\tikz[scale=.7]{
			\draw[<->]
				foreach\r in{\rmin,...,\rmax}{
					(1.5,\r)--+(-.1,0)node[left]{$\r$}
				}
				foreach\k in{\kmin,...,\kmax}{
					(\k,.5)--+(0,-.1)node[below]{$\k$}
				}
				(\kmin-1.5,\rmax/2+.5)node[rotate=90]{$d-k+1$}
				(\kmin-.5,\rmax+.5)|-(\kmax+.5,\rmin-.5)node[right]{$k$};
			\clip(\kmin-.5,\rmin-.5)rectangle(\kmax+.7,\rmax+.5);
			\iflabor
				\foreach\m in{1,...,9}{
					\foreach\r in{1,...,9}{
						\PMT\k{(\r+1)*\m}
						\ifnum\k>\kmax\else
							\PMT\n{\k+\r}	\PMT\d{\n-1}	\PMT\a{\r^\m}
							\draw(\k,\d-\k+1)node[inner sep=2,draw]{\a};
						\fi
					}
				}
			\fi
		}$$
		\caption{
			\cite{WTB16} contributed \MSR/ codes with parameters
			$(n,k,d,α)=(k+r,(r+1)m,n-1,r^m)$.
			They only enforce the optimal repair bandwidth for systematic nodes.
			This results in the least possible $α$
			among all \MSR/-related codes we have seen.
		}\label{fig:WTB16grid}
	\end{figure}
	
	\begin{figure}
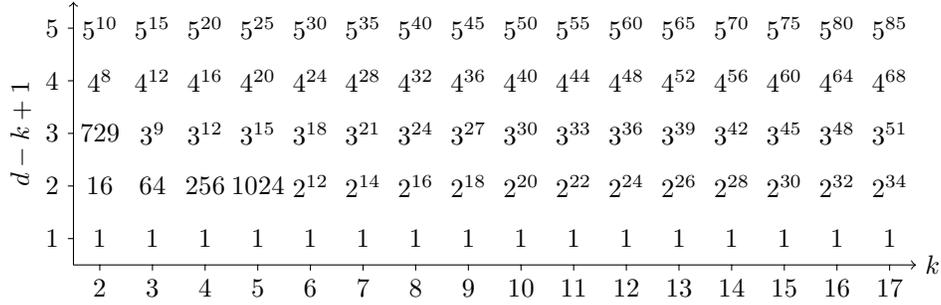

		$$\tikz[scale=.7]{
			\draw[<->]
				foreach\r in{\rmin,...,\rmax}{
					(1.5,\r)--+(-.1,0)node[left]{$\r$}
				}
				foreach\k in{\kmin,...,\kmax}{
					(\k,.5)--+(0,-.1)node[below]{$\k$}
				}
				(\kmin-1.5,\rmax/2+.5)node[rotate=90]{$d-k+1$}
				(\kmin-.5,\rmax+.5)|-(\kmax+.5,\rmin-.5)node[right]{$k$};
			\clip(\kmin-.5,\rmin-.5)rectangle(\kmax+.7,\rmax+.5);
			\iflabor
				\foreach\r in{\rmin,...,\rmax}{
					\foreach\k in{\kmin,...,\kmax}{
						\PMT\rr{\r+1}	\PMT\ro{\r}
						\PMT\kr{\k*\r)}
						\PMS\loga{\kr*log10(\ro)}
						\ifdim\loga pt<3.1pt
							\PMT\a{\ro^(\kr)}
						\else
							\def\a{\ro^{\kr}}
						\fi
						\draw(\k,\r)node{$\a$};
					}
				}
			\fi
		}$$
		\caption{
			\cite{GFV17} gave codes with parameters
			$(n,k,d,α)=\(k+r,k,k+ρ+1,ρ^{k\bi r{ρ}}\)$.
			They only enforce the optimal repair bandwidth for systematic nodes.
			Note that their $α$ depends on $n$ beyond $k,d$.
			When $n=d+1$, it coincides with \cref{fig:SAK15grid}.
			We display the $n=d+2$ case here.
			It was remarked that their $α$ could be optimized further
			but we decided to print the very $α$ given therein.
			Note that \cite{GFV17} is chronologically before \cite{RKV16}.
		}\label{fig:GFV17grid}
	\end{figure}
	
	\begin{figure}
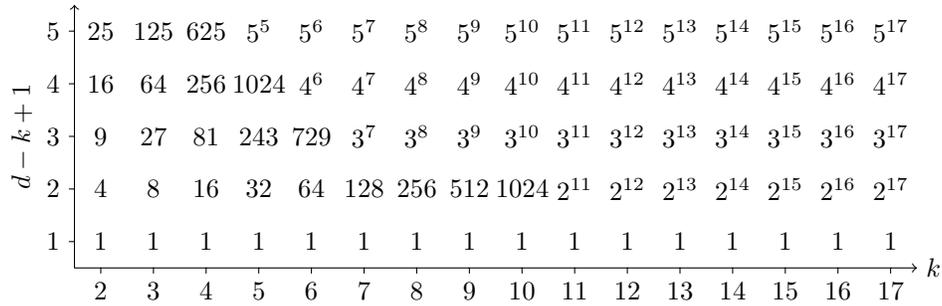

		$$\tikz[scale=.7]{
			\draw[<->]
				foreach\r in{\rmin,...,\rmax}{
					(1.5,\r)--+(-.1,0)node[left]{$\r$}
				}
				foreach\k in{\kmin,...,\kmax}{
					(\k,.5)--+(0,-.1)node[below]{$\k$}
				}
				(\kmin-1.5,\rmax/2+.5)node[rotate=90]{$d-k+1$}
				(\kmin-.5,\rmax+.5)|-(\kmax+.5,\rmin-.5)node[right]{$k$};
			\clip(\kmin-.5,\rmin-.5)rectangle(\kmax+.7,\rmax+.5);
			\iflabor
				\foreach\r in{\rmin,...,\rmax}{
					\foreach\k in{\kmin,...,\kmax}{
						\PMS\loga{\k*log10(\r)}
						\ifdim\loga pt<3.1pt
							\PMT\a{\r^(\k)}
						\else
							\def\a{\r^{\k}}
						\fi
						\draw(\k,\r)node{$\a$};
					}
				}
			\fi
		}$$
		\caption{
			\cite{EM19c} provided regenerating codes
			with arbitrary parameters $(n,k,d)$.
			These include \MSR/ codes, whose $α$ are shown above,
			with $(n,k,d,α)=(n,k,d,(d-k+1)^k)$.
			Their $α$ is higher than ours in general.
			The same parameters can be achieved using other constructions;
			see \cite{DL19,DLW20}.
		}\label{fig:EM19grid}
	\end{figure}

\hbadness9999
\makeatletter
\g@addto@macro\sloppy{\advance\baselineskip0ptplus1ptminus1pt}
\let\($\let\)$
\bibliographystyle{alphaurl}
\bibliography{PowerCrystal-3}

\end{document}